\DeclareMathOperator*{\Exp}{E}
\newcommand{\E}[2][{}]{\ensuremath{{\textstyle \Exp_{#1}}\left[#2\right]}}
\newtheorem{theorem}{Theorem}[section]
\newtheorem{lemma}[theorem]{Lemma}
\newtheorem{claim}[theorem]{Claim}
\newtheorem{remark}[theorem]{Remark}}
\def\blksquare{\rule{2mm}{2mm}}
\def\qedsymbol{\blksquare}
\newcommand{\bg}[1]{\medskip\noindent{\bf #1}}
\newcommand{\ed}{{\hfill\qedsymbol}\medskip}
\newenvironment{proof}{\bg{Proof : }}{\ed}
\newenvironment{proofof}[1]{\bg{Proof of #1 : }}{\ed}
\newcommand{\B}{\ensuremath{\mathcal{B}}}
\newcommand{\C}{\ensuremath{\mathcal{C}}}
\newcommand{\F}{\ensuremath{\mathcal F}}
\newcommand{\D}{\ensuremath{\mathcal D}}
\newcommand{\Pc}{\ensuremath{\mathcal P}}
\newcommand{\PP}{\ensuremath{\mathcal P}}
\newcommand{\sm}{\ensuremath{\setminus}}
\newcommand{\es}{\ensuremath{\emptyset}}
\newcommand{\e}{\ensuremath{\epsilon}}
\newcommand{\sse}{\subseteq}
\newcommand{\remove}[1]{}
\newcommand{\ufl}{{\small \textsf{UFL}}\xspace}
\newcommand{\cfl}{{\small \textsf{CFL}}\xspace}
\newcommand{\mfl}{{\small \textsf{MFL}}\xspace}
\newcommand{\mmfl}{\ensuremath{\mathsf{MFL}}}
\newcommand{\shift}{\ensuremath{\mathsf{shift}}}
\newcommand{\spath}{\ensuremath{\mathsf{start}}}
\newcommand{\epath}{\ensuremath{\mathsf{end}}}
\newcommand{\swap}{\ensuremath{\mathrm{swap}}}
\newcommand{\capt}{\ensuremath{\mathsf{cap}}}
\newcommand{\cent}{\ensuremath{\mathsf{center}}}
\newcommand{\cand}{\ensuremath{\mathsf{cand}}}
\newcommand{\next}{\ensuremath{\mathsf{next}}}
\newcommand{\lz}{\ensuremath{p}}
\newcommand{\hi}{\ensuremath{\hat i}}
\newcommand{\al}{\ensuremath{\alpha}}
\newcommand{\sg}{\ensuremath{\sigma}}
\newcommand{\hD}{\ensuremath{\widehat D}}
\newcommand{\hG}{\ensuremath{\widehat G}}
\newcommand{\bon}{\ensuremath{\mathbf{1}}}
\newcommand{\nil}{\ensuremath{\mathsf{nil}}}
\newcommand{\sub}[2][{q}]{\ensuremath{(\ref{#2}_{#1})}}
\newcommand{\blk}{\ensuremath{\B}}
\newcommand{\sib}{\ensuremath{\approx}}
\newcommand{\nsib}{\ensuremath{\not\approx}}
\begin{document}

\title{Local-Search based Approximation Algorithms for Mobile Facility Location Problems%
\footnote{A preliminary version~\cite{AhmadianFS13}, without the results in
Section~\ref{oneswap}, appeared in the Proceedings of the 24th Annual ACM-SIAM Symposium
on Discrete Algorithms, 2013.}}
\author{
         Sara Ahmadian\thanks{{\tt \{sahmadian,zfriggstad,cswamy\}@math.uwaterloo.ca}. 
         Dept. of Combinatorics and Optimization, Univ. Waterloo, Waterloo, ON N2L 3G1.
         Supported in part by NSERC grant 327620-09 and an NSERC Discovery Accelerator
         Supplement award. 
         The second and third authors are also supported by the third author's Ontario
         Early Researcher Award.}   
\and
\addtocounter{footnote}{-1}
         Zachary Friggstad\footnotemark
\and
\addtocounter{footnote}{-1}
         Chaitanya Swamy\footnotemark
}

\date{}

\maketitle

\begin{abstract}
We consider the {\em mobile facility location} (\mfl) problem. 
We are given a set of facilities and clients located in a common metric space 
$G=(V,c)$. 
The goal is to move each facility from its initial location to a destination (in $V$)
and assign each client to the destination of some facility so as to minimize the sum of
the movement-costs of the facilities and the client-assignment costs.
This abstracts facility-location settings where one has the flexibility of moving
facilities from their current locations to other destinations so as to serve clients more 
efficiently by reducing their assignment costs. 

We give the first {\em local-search based} approximation algorithm for this problem and
achieve the best-known approximation guarantee. Our main result is
$(3+\epsilon)$-approximation for this problem for any constant $\epsilon > 0$ using local
search. The previous best guarantee for \mfl was an 8-approximation algorithm due to
~\cite{FriggstadS11} based on LP-rounding. 
Our guarantee {\em matches} the best-known approximation guarantee for the $k$-median
problem. Since there is an approximation-preserving reduction from the $k$-median problem
to \mfl, any improvement of our result would imply an analogous improvement for the
$k$-median problem.  
Furthermore, {\em our analysis is tight} (up to $o(1)$ factors) since the tight example
for the local-search based 3-approximation algorithm for $k$-median can be easily adapted
to show that our local-search algorithm has a tight approximation ratio of 3.
One of the chief novelties of the analysis is that in
order to generate a suitable collection of local-search moves whose resulting inequalities
yield the desired bound on the cost of a local-optimum, we define a tree-like structure
that (loosely speaking) functions as a ``recursion tree'', using which we spawn off
local-search moves by exploring this tree to a constant depth.
Our results extend to the weighted generalization wherein each facility $i$ has a
non-negative weight $w_i$ and the movement cost for $i$ is $w_i$ times the distance
traveled by $i$. 
\end{abstract}

\normalsize

\section{Introduction}
Facility location problems have been widely studied in the Operations Research and
Computer Science communities (see, e.g.,~\cite{MirchandaniF90} and the
survey~\cite{KorteV08}), and have a wide range of applications.
In its simplest version, {\em uncapacitated facility location} (\ufl), we are given a
set of facilities or service-providers with opening costs, and a set of clients that
require service, and we want to open some facilities 
and assign clients to open facilities so as to minimize the sum of the facility-opening
and client-assignment costs.    
An oft-cited prototypical example is that of a company wanting to decide where to locate
its warehouses/distribution centers so as to serve its customers in a cost-effective
manner. 

We consider facility-location problems that abstract settings where facilities are
mobile and may be relocated to destinations near the clients in order to serve them more
efficiently by reducing the client-assignment costs.  
More precisely, we consider the {\em mobile facility location} (\mfl) problem introduced
by~\cite{DemaineHMSOZ09,FriggstadS11}, which generalizes the classical $k$-median 
problem (see below). We are given 
a complete graph $G=(V,E_G)$ with costs $\{c(u,v)\}$ on the edges, a set $\D\subseteq V$ of
clients with each client $j$ having $d_j$ units of demand,  
and a set $\F\subseteq V$ of $k$ initial facility locations. We use the term facility $i$
to denote the facility whose initial location is $i\in\F$.  
A solution $S$ to \mfl moves each facility $i$ to a final location $s_i\in V$ (which could
be the same as $i$), incurring a {\em movement cost} $c(i,s_i)$, 
and assigns each client $j$ to a final location $s\in S$, incurring {\em assignment cost} 
$d_jc(j,s)$. The total cost of $S$ is the sum of all the movement costs and assignment 
costs. 
More formally, noting that each client will be assigned to the location nearest to it in
$S$, we can express the cost of $S$ as 
$$
\mmfl(S):=\sum_{i\in \F}c(i,s_i) + \sum_{j\in\D}d_jc(j,\sigma(j))
$$
where $\sigma(v)$ (for any node $v$) gives the location in $S$ nearest to $v$ (breaking
ties arbitrarily). We assume throughout that the edge costs form a metric. 
We use the terms nodes and locations interchangeably. 

Mobile facility location falls into the genre of {\em movement problems}
introduced by Demaine et al.~\cite{DemaineHMSOZ09}. In these problems, we are given an
initial configuration in a weighted graph specified by placing ``pebbles'' on the nodes
and/or edges; the goal is to move the pebbles so as to obtain a desired final
configuration while minimizing the maximum, or total, pebble movement. 
\mfl was introduced by Demaine et al. as the movement problem where facility- and client-
pebbles are placed respectively at the initial locations of the facilities and 
clients, and in the final configuration every client-pebble should be co-located
\nolinebreak \mbox{with some facility-pebble.}

%\vspace{-1ex}
\paragraph{Our results.}
We give the first {\em local-search based} approximation algorithm for this problem and
achieve the best-known approximation guarantee. Our main result is a
$(3+\epsilon)$-approximation for this problem for any constant $\epsilon > 0$ using a
simple local-search algorithm. This improves upon the previous best 8-approximation
guarantee for \mfl due to Friggstad and Salavatipour~\cite{FriggstadS11}, which is based
on LP-rounding and is not combinatorial.  

The local-search algorithm we consider is quite natural and simple.
Observe that given the final locations of the facilities, we can find the minimum-cost way
of moving facilities from their initial locations to the final locations by solving a
minimum-cost perfect-matching problem (and the client
assignments are determined by the function $\sg$ defined above). Thus, we concentrate
on determining a good set of final locations.  
In our local-search algorithm, at each step, we are allowed to swap in and swap out a
fixed number (say $\lz$) of locations. 
Clearly, for any fixed $\lz$, we can find the best local move efficiently (since the cost
of a set of final locations can be computed in polytime). 
Note that we do not impose any constraints on how the matching between the initial and
final locations may change due to a local move, and a local move might entail moving all
facilities.  
It is important to allow this flexibility, as it is known~\cite{FriggstadS11} that the
local-search procedure that moves, at each step, a constant number of facilities to chosen
destinations has an unbounded approximation ratio. 

Our main contribution is a tight analysis of this local-search algorithm 
(Section~\ref{3apx}). 
Our guarantee {\em matches} (up to $o(1)$ terms) the best-known approximation guarantee
for the $k$-median problem.  
Since there is an approximation-preserving reduction from the $k$-median problem to
\mfl~\cite{FriggstadS11}---choose arbitrary initial facility locations and give each
client a huge demand $D$---any improvement of our result would imply an analogous
improvement for the $k$-median problem. 
(In this respect, our result is a noteworthy exception to the prevalent state of affairs 
for various other 
generalizations of \ufl and $k$-median---e.g., the data placement problem~\cite{BaevRS08},
\{matroid-, red-blue-\}
median~\cite{KrishnaswamyKNSS11,HajiaghayiKK12,CharikarL12,ChakrabartyS12}, 
$k$-facility-location~\cite{DevanurGKPSV05,GuptaT08}---where the best approximation ratio
for the problem is worse by a noticeable factor (compared to \ufl or $k$-median);
\cite{GoertzN11} is another exception.)  
Furthermore, {\em our analysis is tight} (up to $o(1)$ factors) because by
suitably setting $D$ in the reduction of~\cite{FriggstadS11}, we can ensure that our
local-search algorithm for \mfl coincides with the local-search algorithm for $k$-median
in~\cite{AryaGKMMP01} which has a tight approximation ratio of 3.

We also consider a weighted generalization of the problem (Section~\ref{extn}), wherein
each facility $i$ has a weight $w_i$ indicating the cost incurred per-unit distance moved
and the cost for moving $i$ to $s_i$ is $w_ic(i,s_i)$. (This can be used to model, for
example, the setting where different facilities move at different speeds.) Our analysis is
versatile and extends to this weighted generalization to yield the same performance guarantee. 
For the further generalization of the problem, where the facility-movement costs may be
arbitrary and unrelated 
to the client-assignment costs (for which a 9-approximation can be obtained via
LP-rounding; see ``Related work''), we show that local search based on multiple swaps has
a bad approximation ratio (Section~\ref{locgap}).  

The analysis leading to the approximation ratio of 3 (as also the simpler analysis in
Section~\ref{5apx} yielding a 5-approximation) crucially exploits the fact that we may
swap multiple locations in a local-search move. It is natural to wonder then if one can 
prove any performance guarantees for the local-search algorithm where we may only swap in
and swap out a single location in a local move. (Naturally, the single-swap algorithm is
easier to implement and thus may be more practical). In Section~\ref{oneswap}, we analyze
this single-swap algorithm and prove that it also has a constant approximation ratio. 

%\vspace{-1ex}
\paragraph{Our techniques.}
The analysis of our local-search procedure requires various novel ideas. 
As is common in the analysis of local-search algorithms, we identify a set of 
test swaps and use local optimality to generate suitable inequalities from these
test swaps, which when combined yield the stated performance guarantee. 
One of the difficulties involved in adapting standard local-search ideas to \mfl
is the following artifact: in \mfl, 
the cost of ``opening'' a set $S$ of locations is the cost of the min-cost perfect
matching of $\F$ to $S$, which, unlike other facility-location problems, is a highly
non-additive function of $S$ (and as mentioned above, we need to allow for the matching
from $\F$ to $S$ to change in non-local ways).  
In most facility-location problems with opening costs for which local search is known to
work, we may always swap in a facility used by the global optimum (by possibly swapping
out another facility) and easily bound the resulting change in {\em facility cost}, and
the main consideration is to decide how to reassign clients following the
swap in a cost-effective way;  
in \mfl we do not have this flexibility and need to carefully choose how
to swap facilities so as to ensure that there is a good matching of the facilities to
{their new destinations after a swap {\em and} there is a frugal reassignment of clients.}

This leads us to consider long relocation paths to re-match facilities to their new
destinations after a swap, which are of the form $(\ldots,s_i,o_i,s_{i'},\ldots)$,
where $s_i$ and $o_i$ are the locations that facility $i$ is moved to in the local and
global optimum, $S$ and $O$, respectively, and $s_{i'}$ is the $S$-location closest to
$o_i$. 
By considering a swap move involving the start and end locations of such a path $Z$, we
can obtain a bound on the movement cost of all facilities 
$i\in Z$ where $s_i$ is the start of the path or $o_i$ serves a large number of clients.
To account for the remaining facilities, we break up $Z$ into suitable
intervals, each containing a constant number of unaccounted locations which then
participate in a multi-location swap. 
This {\em interval-swap} move does not at first appear to be useful since we can only 
bound the cost-change due to this move in terms of a significant multiple of 
(a portion of) the cost of the local optimum! 
One of the novelties of our analysis is to show how we can {\em amortize} the cost of
such expensive terms and make their contribution negligible by considering multiple
different ways of covering $Z$ with intervals and averaging the inequalities obtained for
these interval swaps. These ideas lead to the proof of an approximation ratio of 5 for  
the local-search algorithm (Section~\ref{5apx}).

The tighter analysis leading to the 3-approximation guarantee (Section~\ref{3apx})
features another noteworthy idea, namely that of using ``recursion'' (up to
bounded depth) to identify a suitable collection of test swaps. 
We consider the tree-like structure created by the paths used in the 5-approximation
analysis, and (loosely speaking) view this as a recursion tree, using which we spawn 
off interval-swap moves by exploring this tree to a constant depth.
To our knowledge, we do not know of any analysis of a local-search algorithm that employs 
the idea of recursion to generate the set of test local moves (used to generate the
inequalities that yield the desired performance guarantee). We believe that this technique
is a notable contribution to the analysis of local-search algorithms that 
is of independent interest and will find further application. 

%\vspace{-1ex}
\paragraph{Related work.}
As mentioned earlier, \mfl was introduced by Demaine et al.~\cite{DemaineHMSOZ09} in the
context of movement problems. Friggstad and Salavatipour~\cite{FriggstadS11} designed the
first approximation algorithm for \mfl. They gave an 8-approximation
algorithm based on LP rounding by building upon the LP-rounding algorithm of
Charikar et al.~\cite{CharikarGTS02} for the $k$-median problem; this algorithm
works only however for the unweighted case. They also observed that 
there is an approximation-preserving reduction from $k$-median to \mfl. 
We recently learned that Halper~\cite{Halper10} proposed the same local-search
algorithm that we analyze. His work focuses on experimental results and leaves open the
question of obtaining theoretical guarantees about the performance of local search. 

Chakrabarty and Swamy~\cite{ChakrabartyS12} observed that \mfl, even with arbitrary 
movement costs is a special case of the matroid median
problem~\cite{KrishnaswamyKNSS11}. 
Thus, the approximation algorithms devised for matroid median independently
by~\cite{CharikarL12} and~\cite{ChakrabartyS12} yield an 8-approximation
algorithm for \mfl with arbitrary movement costs. 

There is a wealth of literature on approximation algorithms for
(metric) uncapacitated and capacitated facility location (\ufl and \cfl), the $k$-median
problem, and their variants; see~\cite{Shmoys04} for a survey on \ufl. 
Whereas constant-factor approximation algorithms for \ufl and $k$-median can be obtained
via a variety of techniques such as
LP-rounding~\cite{ShmoysTA97,Li11,CharikarGTS02,CharikarL12}, primal-dual
methods~\cite{JainV01a,JainMMSV02}, local
search~\cite{KorupoluPR00,CharikarG99,AryaGKMMP01}, all known $O(1)$-approximation  
algorithms for \cfl (in its full generality)
are based on local search~\cite{KorupoluPR00,ZhangCY03,BansalGG12}. 
We now briefly survey the work on local-search algorithms for facility-location problems. 

Starting with the work of~\cite{KorupoluPR00}, local-search techniques have been utilized
to devise $O(1)$-approximation algorithms for various facility-location problems.
Korupolu, Plaxton, and Rajaraman~\cite{KorupoluPR00} devised $O(1)$-approximation
for \ufl, and \cfl with uniform capacities, and $k$-median (with a blow-up in $k$). 
Charikar and Guha~\cite{CharikarG99}, and Arya et al.~\cite{AryaGKMMP01} both obtained a 
$(1+\sqrt{2})$-approximation for \ufl. 
The first constant-factor approximation for \cfl was obtained by P\'{a}l, Tardos, and
Wexler~\cite{PalTW01}, and after some improvements, the current-best approximation
ratio now stands at $5+\e$~\cite{BansalGG12}.
For the special case of uniform capacities, the analysis
in~\cite{KorupoluPR00} was refined by~\cite{ChudakW05}, 
and Aggarwal et al.~\cite{AggarwalABGGGJ10} obtain the current-best 3-approximation.
Arya et al.~\cite{AryaGKMMP01} devised a $(3+\e)$-approximation algorithm for $k$-median,
which was also the first constant-factor approximation algorithm for this problem based on
local search. 
Gupta and Tangwongsan~\cite{GuptaT08} (among other results) simplified the analysis
in~\cite{AryaGKMMP01}. We build upon some of their ideas in our analysis.

Local-search algorithms with constant approximation ratios have also been devised for
various variants of the above three canonical problems. Mahdian and
P\'{a}l~\cite{MahdianP03}, and Svitkina and Tardos~\cite{SvitkinaT10} consider settings
where the opening cost of a facility is a function of the set of clients served by
it. In~\cite{MahdianP03}, this cost is a non-decreasing function of the number of clients,
and in~\cite{SvitkinaT10} this cost arises from a certain tree defined on the 
client set. 
Devanur et al.~\cite{DevanurGKPSV05} and~\cite{GuptaT08} 
consider $k$-facility location, which is similar to $k$-median except that facilities
also have opening costs. Hajiaghayi et al.~\cite{HajiaghayiKK12} consider a special case
of the matroid median problem that they call the red-blue median problem. Most recently,
\cite{GoertzN11} considered a problem that they call the $k$-median forest problem, which
generalizes $k$-median, and obtained a $(3+\e)$-approximation algorithm.

%\vspace{-1ex}
\section{The local-search algorithm}
As mentioned earlier, to compute a solution to \mfl, we only need to determine the set of
final locations of the facilities, since we can then efficiently compute the best movement
of facilities from their initial to final locations, and the client assignments.
This motivates the following local-search operation. 
Given a current set $S$ of $k=|\F|$ locations, we can move to any other set $S'$
of $k$ locations such that $|S\sm S'|=|S'\sm S|\leq\lz$, where $\lz$ is some fixed value.
We denote this move by $\swap(S\sm S',S'\sm S)$. The local-search algorithm starts with an
arbitrary set of $k$ final locations. At each iteration, we choose the local-search move
that yields the largest reduction in total cost and update our final-location set
accordingly; 
if no cost-improving move exists, then we terminate. (To obtain polynomial running time,
as is standard, we modify the above procedure so that we choose a local-search move only
if the cost-reduction is at least $\e(\text{current cost})$.)

%\vspace{-1ex}
\section{Analysis leading to a 5-approximation} \label{5apx} \label{sec:5appx}
We now analyze the above local-search algorithm and show that it is a
$\bigl(5+o(1)\bigr)$-approximation algorithm. For notational simplicity, we assume that 
the local-search algorithm terminates at a local optimum; the modification to ensure
polynomial running time degrades the approximation by at most a $(1+\e)$-factor (see also
Remark~\ref{polyremk}). 

\begin{theorem} \label{5apxthm}
Let $F^*$ and $C^*$ denote respectively the movement and assignment cost of an optimal
solution. The total cost of any local optimum using at most $\lz$ swaps is at most 
$\Bigl(3+O\bigl(\frac{1}{\lz^{1/3}}\bigr)\Bigr)F^*+\Bigl(5+O\bigl(\frac{1}{\lz^{1/3}}\bigr)\Bigr)C^*$.
\end{theorem}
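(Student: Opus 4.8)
The overall strategy follows the local-search template: from a local optimum $S$ and the optimal solution $O$, construct a family of test swaps, write down the non-negativity inequality $\mmfl(S') - \mmfl(S) \geq 0$ for each (where $S'$ is $S$ after the swap), upper-bound the left side by (change in movement cost) + (change in assignment cost) under a \emph{specific, possibly suboptimal} rematching of $\F$ to $S'$ and a \emph{specific} reassignment of clients, and then sum/average these inequalities so that the ``swap-out'' terms telescope against the ``swap-in'' terms. The first step is to fix, for each facility $i$, its local destination $s_i$ and global destination $o_i$, and to define the map $\phi\colon S\to O$ sending $s$ to the $O$-location nearest to $s$, and $\psi\colon O\to S$ sending $o$ to the $S$-location nearest to $o$. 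Call $o\in O$ \emph{good} if it is the unique preimage issue is controlled, i.e.\ roughly if $|\psi^{-1}(o)|$ is small; the standard $k$-median accounting handles clients whose service cost we can charge through one such $s$–$o$–$s'$ hop. The genuinely new device, as the introduction signals, is the \emph{relocation path}: starting from some $s$ that is ``lonely'' (no $o$ maps to it), follow $s \to \phi(s) = o \to \psi(o) = s' \to \phi(s') \to \cdots$, producing a path $Z = (s, o, s', o', \dots)$ in which consecutive $S$-locations are linked through an $O$-location. These paths (together with cycles) partition the relevant parts of $S$ and $O$.

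Next I would set up the two kinds of moves. First, for a path $Z$ with start $s$ and terminal $O$-location $o^{\mathrm{end}}$, consider $\swap(\{s\},\{o^{\mathrm{end}}\})$: rematch facilities along the path by shifting each facility currently at an internal $S$-node ``down'' the path to the next $O$-node, so the movement cost picks up $\sum c(s_i, o_i)$ along the path plus one telescoping correction, and reassign every client $j$ with $o_j$ on this path to its $O$-destination, paying $C^*_j$, while all other clients stay put. This cleanly bounds the contribution of facilities whose $O$-destination serves many clients (a ``heavy'' $o_i$) and of the path-start. Second, for the facilities left unaccounted — those interior $o_i$'s that are ``light'' — I break $Z$ into consecutive intervals each containing at most a constant number $t$ of unaccounted $S$-locations, and for each interval $I$ perform the multi-location move $\swap(S\cap I,\, O\cap I)$, rematching within the interval and reassigning the affected clients to their $O$-destinations (light, so cheap in aggregate) plus rerouting clients that were served by a swapped-out $s\in S\cap I$ to the nearest surviving location, which costs $O(1)$ times that client's $S$-service cost — this is the term that looks fatal. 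The main obstacle, exactly as the introduction flags, is that a single interval-swap inequality contains a non-negligible multiple of (a portion of) $\mathrm{cost}(S)$, not just of $C^*$. The resolution is the \emph{amortization / averaging} step: cover $Z$ by intervals in $\Theta(t)$ different ways (cyclic shifts of the interval boundaries), so that each $S$-location plays the role of ``swapped-out interval endpoint'' in only an $O(1/t)$ fraction of the coverings; averaging the resulting inequalities reduces the bad coefficient on $\mathrm{cost}(S)$ to $O(1/t)$, which we then absorb into the left-hand side (or re-optimize over $t$). With $\lz$ the swap budget we need intervals of size $O(\lz)$ but also want the $O(1/t)$ savings, leading to the $\lz^{1/3}$ trade-off in the statement.

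Finally, I would assemble everything. Summing the path-start/heavy-$o$ inequalities and the averaged interval-swap inequalities over all paths and cycles, the movement-cost terms combine to at most (roughly) $F^* + F_S$-type quantities that telescope, each client's assignment cost is accounted either by a direct $O(1)\cdot C^*_j$ charge or by a reroute bounded via the triangle inequality by $O(1)$ times $C^*_j$ plus its $S$-service cost (the latter appearing with small coefficient after averaging), and the $\mathrm{cost}(S)$ terms on the right collect with coefficient $O(\lz^{-1/3})$. Rearranging to isolate $\mmfl(S) = F_S + C_S$ on one side yields $F_S + C_S \leq (3+O(\lz^{-1/3}))F^* + (5+O(\lz^{-1/3}))C^*$, which is the claim. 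The one subtlety to be careful about is the treatment of cycles (no natural ``start''), handled by cutting the cycle at an arbitrary edge and paying one extra telescoping term, and the treatment of $o$'s with no preimage under the ``nearest'' map, which simply do not arise because every $o$ has some nearest $S$-location; the bookkeeping of which facility ends up at which node after a composite of interval-swaps on the same path must be checked to be consistent, but that is routine.
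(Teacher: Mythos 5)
There is a genuine gap, and it is at the foundation of the construction rather than in the bookkeeping. Your relocation paths are built from two nearest-neighbor maps, $\phi\colon S\to O$ (nearest $O$-location to $s$) and $\psi\colon O\to S$, but the correct structure must link $s_i$ to $o_i$ through the \emph{facility} $i$ itself: the paper's digraph has arcs $(s_i,i),(i,o_i),(o_i,\sg(o_i))$, so consecutive locations on a path are the local and global destinations of the \emph{same} facility, and only the $O\to S$ step uses a nearest map. This matters because the whole point of a shift along a path is to exhibit a cheap rematching of $\F$ to the new location set: facility $i$ moves from $s_i$ to the next $S$-location $\sg(o_i)$ (not, as you write, ``to the next $O$-node'' --- interior $O$-nodes are not opened by the swap), and the cost change is $c(i,\sg(o_i))-c(i,s_i)\leq f^*_i+c(o_i,\sg(o_i))-f_i$, which keeps the crucial $-f_i$ term and charges only $f^*_i$-type quantities. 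With your $\phi$, the location following $s_i$ has no relation to $i$'s optimal destination; the best generic bound on the rematching cost is of the form $\sum_i O(1)(f_i+f^*_i)$ with a \emph{positive} constant multiple of $f_i$, so the inequalities can never be rearranged to cancel the local optimum's movement cost $F$, and the final bound cannot be reached. (Your $\phi$ also need not be injective, so these ``paths'' do not even give a decomposition compatible with the matching.)

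Beyond this, the parts you describe correctly (interval swaps on consecutive ``unaccounted'' locations, averaging over shifted interval coverings to beat down the expensive term, the $t\approx\lz^{1/3}$ budget) are essentially the ideas the paper's introduction already announces, and the pieces that actually produce the constants $3$ and $5$ are missing: the classification of $S$-locations by $|\capt(s)|$ and $|D^*(\cent(s))|$ versus the threshold $t$, the $S_2$-swap $\swap(\{s\}\cup T(s),\capt(s))$ whose single edge cost $c(s,\cent(s))$ is charged to the more than $t$ clients of $D^*(\cent(s))$, the random choice of a path start in $\PP_c(s)$ when $|\capt(s)|>t$ (which is what keeps each swap within the $\lz$ budget and makes double-charging happen with probability $1/t$), and the accounting that every $o\in O$ is swapped in to extent between $1$ and $2$ while each client contributes at most $5c^*_j-c_j$ (the factor $5$ coming precisely from path-start locations in $S_0$ that are swapped out twice). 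Also, the expensive term in an interval swap is not ``$O(1)$ times the client's $S$-service cost'' but a term of the form $t(f_{\hi}+f^*_{\hi})$ arising when the first swapped-out location's center is not opened; identifying it correctly is what makes the $1/t^2$-weighted averaging work. As written, the proposal would not yield the theorem.
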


Although this is not the tightest guarantee that we obtain, 
we present this analysis first since it introduces many of the ideas that we
build upon in Section~\ref{3apx} to 
prove a tight approximation guarantee of $\bigl(3+o(1)\bigr)$ for the local-search
algorithm. 
For notational simplicity, we assume that all $d_j$s are 1. All our analyses carry over
trivially to the case of non-unit (integer) demands since we can think of a client $j$
having $d_j$ demand as $d_j$ co-located unit-demand clients.

%\vspace{-1ex}
\paragraph{Notation and preliminaries.}
We use $S=\{s_1,\ldots,s_k\}$ to denote the local optimum, where facility $i$ is moved to
final location $s_i\in S$. We use $O=\{o_1,\ldots,o_k\}$ to denote the (globally) optimal
solution, where again facility $i$ is moved to $o_i$. 
Throughout, we use $s$ to index locations in $S$, and $o$ to index locations in $O$.
Recall that, for a node $v$, $\sg(v)$ is the location in $S$ nearest to $v$. 
Similarly, we define $\sg^*(v)$ to be the location in $O$ nearest to $v$.
For notational similarity with facility location problems, we denote $c(i,s_i)$ by $f_i$,
and $c(i,o_i)$ by $f^*_i$. 
(Thus, $f_i$ and $f^*_i$ are the movement costs of $i$ in $S$ and $O$ respectively.)
Also, we abbreviate $c\bigl(j,\sigma(j)\bigr)$ to $c_j$, and $c\bigl(j,\sg^*(j)\bigr)$ to
$c^*_j$. 
Thus, $c_j$ and $c^*_j$ are the assignment costs of $j$ in the local and global
optimum respectively. 
(So $\mmfl(S)=\sum_{i\in\F}f_i+\sum_{j\in\D}c_j$.)
Let $D(s)=\{j\in\D:\sigma(j)=s\}$ be the set of clients assigned to the location $s\in S$,
and $D^*(o)=\{j\in\D:\sg^*(j)=o\}$. For a set $A\sse S$, we define 
$D(A)=\bigcup_{s\in A}D(s)$; we define $D^*(A)$ for $A\sse O$ similarly.   
Define $\capt(s) = \{o \in O : \sigma(o)=s\}$. We say that $s$ 
{\em captures} all the locations in $\capt(s)$. The following basic lemma 
will be used repeatedly. 
 
\begin{lemma} \label{reasgn}
For any client $j$, we have 
$c\bigl(j,\sg(\sg^*(j))\bigr)-c\bigl(j,\sg(j)\bigr)\leq 2c^*_j$.
\end{lemma}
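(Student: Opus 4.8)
The plan is to bound the distance $c\bigl(j,\sg(\sg^*(j))\bigr)$ using two applications of the metric triangle inequality, routing through the node $\sg^*(j)$ and then exploiting that $\sg$ selects the \emph{nearest} location of $S$. Write $o:=\sg^*(j)$ for brevity, so that $c(j,o)=c^*_j$ by definition of $\sg^*$. The triangle inequality gives $c\bigl(j,\sg(o)\bigr)\le c(j,o)+c\bigl(o,\sg(o)\bigr)=c^*_j+c\bigl(o,\sg(o)\bigr)$, so it suffices to show $c\bigl(o,\sg(o)\bigr)\le c^*_j+c_j$.

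For the latter, observe that $\sg(o)$ is by definition the location of $S$ closest to $o$, and $\sg(j)\in S$ is one candidate; hence $c\bigl(o,\sg(o)\bigr)\le c\bigl(o,\sg(j)\bigr)$. A second triangle inequality yields $c\bigl(o,\sg(j)\bigr)\le c(o,j)+c\bigl(j,\sg(j)\bigr)=c^*_j+c_j$. Combining the two displays gives $c\bigl(j,\sg(o)\bigr)\le 2c^*_j+c_j=2c^*_j+c\bigl(j,\sg(j)\bigr)$, and moving $c\bigl(j,\sg(j)\bigr)=c_j$ to the left-hand side produces exactly the claimed inequality.

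There is essentially no obstacle here: the statement is a routine double use of the triangle inequality together with the nearest-location property defining $\sg$. The only thing to be careful about is to invoke the nearest-neighbour property of $\sg$ exactly once --- namely, to replace the uncontrolled location $\sg(o)$ by the explicit "witness" location $\sg(j)$ --- since $\sg(o)$ itself is not something we have a direct handle on; everything else in the chain is forced.
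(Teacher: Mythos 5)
Your proof is correct and follows essentially the same argument as the paper: one triangle inequality through $\sg^*(j)$, the nearest-location property of $\sg$ applied at $\sg^*(j)$ with $\sg(j)$ as the witness, and a second triangle inequality to absorb $c_j$.
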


\begin{proof}
Let $s=\sg(j),\ o=\sg^*(j),\ s'=\sg(o)$. The lemma clearly holds if $s'=s$. Otherwise, 
$
c(j,s')-c(j,s) \leq c(j,o) + c(o,s')-c(j,s) 
\leq c^*_j+c(o,s)-c(j,s) 
\leq c^*_j+c(o,j) = 2 c^*_j
$
where the second inequality follows since $s'$ is the closest location to $o$ in
$S$. 
\end{proof}

To prove the approximation ratio, we will specify a set of local-search
moves for the local optimum, and use the fact that none of these moves improve the cost to
obtain some inequalities, which will together yield a bound on the cost of the local 
optimum. We describe these moves by using the following digraph. 
Consider the digraph 
$\hG=\bigl(\F\cup S\cup O,\{(s_i,i),(i,o_i),(o_i,\sg(o_i))\}_{i\in\F}\bigr)$.
We decompose $\hG$ into a collection of node-disjoint (simple) paths $\Pc$
and cycles $\C$ as follows. 
Repeatedly, while there is a cycle $C$ in our current digraph, we add $C$ to $\C$, 
remove all the nodes of $C$ and recurse on the remaining digraph.
After this step, a node $v$ in the remaining digraph, which is acyclic, has: exactly one 
outgoing arc if $v\in S$; exactly one incoming and one outgoing arc if $v\in\F$; and exactly
one incoming, and at most one outgoing arc if $v\in O$. 
Now we repeatedly choose a node $v\in S$ with no incoming arcs, include the maximal path
$P$ starting at $v$ in $\Pc$, remove all nodes of $P$ and recurse on the remaining
digraph. Thus, each triple $(s_i,i,o_i)$ is on a unique path or cycle in $\Pc\cup\C$. 
Define $\cent(s)$ to be $o\in O$ such that $(o,s)$ is an arc in $\Pc\cup\C$; if $s$ has no
incoming arc in $\Pc\cup\C$, then let $\cent(s)=\nil$. 

We will use $\Pc$ and $\C$ to define our swaps. 
For a path $P=(s_{i_1},i_1,o_{i_1},\ldots,s_{i_r},i_r,o_{i_r})\in\Pc$, define 
$\spath(P)$ to be $s_{i_1}$ and $\epath(P)$ to be $o_{i_r}$. Notice that
$\sg(o_{i_r})\notin P$.
For each $s\in S$, let $\PP_c(s)=\{P:\epath(P) \in \capt(s)\}$, 
$T(s)=\{\spath(P): P \in \PP_c(s)\}$, and 
$H(s)=\{\epath(P):P\in \PP_c(s)\} = \capt(s)\setminus \cent(s)$. Note that $| \PP_c(s) | =
| T(s) |= |H(s)| = | \capt(s)| -1 $ for any $s\in S$ with $| \capt(s)| \geq 1$.
For a set $A\sse S$, define 
$T(A)=\bigcup_{s\in A} T(s),\ H(A)=\bigcup_{s\in A} H(s),\ \PP_c(A)=\bigcup_{s\in A}\PP_c(s)$.
  
A basic building block in our analysis, involves a {\em shift} along an
$s\leadsto o=o_{i'}$ sub-path $Z$ of some path or cycle in $\Pc\cup\C$. 
This means that we swap out $s$ and swap in $o$. 
We bound the cost of the matching between $\F$ and 
$S\cup\{o\}\sm\{s\}$ by moving each initial location $i\in Z,\ i\neq i'$ to    
$\sg(o_i)\in Z$ and moving $i'$ to $o_{i'}$. 
Thus, we obtain the following simple bound on the increase in movement cost due to this
operation: 
\begin{equation}
\shift(s,o) 
= \sum_{i\in Z}(f^*_{i}-f_{i})+\sum_{i\in Z: o_i\neq o} c\bigl(o_i,\sg(o_i)\bigr) 
\leq 2\sum_{i\in Z} f^*_{i}-c(o,\sg(o)).  
\label{shiftm}
\end{equation}
The last inequality uses the fact that 
$c\bigl(o_i,\sigma(o_i)\bigr)\leq c(o_i,s_i) \leq f^*_{i}+ f_{i}$ for all $i$. 
For a path $P\in\Pc$, we use $\shift(P)$ as a shorthand for 
$\shift\bigl(\spath(P),\epath(P)\bigr)$.

%\vspace{-1ex}
\subsection{The swaps used, and their analysis}
We now describe the local moves used in the analysis. 
We define a set of swaps such that 
each $o\in O$ is swapped in to an extent of at least one, and at most two.
We classify each location in $S$ as one of three types. 
Define $t=\bigl\lfloor{{\lz}^{1/3}}\bigr\rfloor$. We assume that $t\geq 2$. 

\begin{list}{$\bullet$}{\topsep=0.5ex \itemsep=0ex}
\item $S_0$: locations $s \in S$ with $|\capt(s)| = 0$.
\item $S_1$: locations $s \in S\sm S_0$ with $|D^*(\cent(s))|\leq t$ or $|\capt(s)|>t$. 
\item $S_2$: locations $s \in S$ with $|D^*(\cent(s))|>t$ and $0<|\capt(s)|\leq t$.
\end{list}

\noindent
Also define $S_3:=S_0\cup\{s \in S_1: |\capt(s)|\leq t\}$
(so $s\in S_3$ iff $|\capt(s)|\leq t$ and $|D^*(\cent(s))|\leq t\}$).  

To gain some intuition, notice that it is easy to generate a suitable inequality for a 
location $s\in S_0$: we can ``delete'' $s$ (i.e., if $s=s_i$, then do $\swap(s,i)$)
and reassign each $j\in D(s)$ to $\sg(\sg^*(j))$ (i.e., the location in $S$ closest
to the location serving $j$ in $O$). The cost increase due to this reassignment is at most 
$\sum_{j\in D(s)}2c^*_j$, and so this yields the inequality $f_i\leq\sum_{j\in D(s)}2c^*_j$. 
(We do not actually do this since we take care of the $S_0$-locations along with the
$S_1$-locations.)  
We can also generate a suitable inequality for a location $s\in S_2$ (see
Lemma~\ref{s2lem}) since we can swap in $\capt(s)$ and swap out $\{s\}\cup T(s)$. The cost
increase by this move can be bounded by $\sum_{P\in\PP_c(s)}\shift(P)$ and
$c\bigl(s,\cent(s)\bigr)$, and the latter quantity can be charged to
$\frac{1}{t}\sum_{j\in D^*(\cent(s))}(c_j+c^*_j)$; our definition of $S_2$ is tailored
precisely so as to enable this latter charging argument. 
Generating inequalities for the $S_1$-locations is more involved, and requires another
building block that we call an interval swap (this will also take care of the
$S_0$-locations), which we define after proving Lemma~\ref{s2lem}. 
We start out by proving a simple bound that one can obtain using a cycle in $\C$. 

\begin{lemma} \label{cycleswap}
For any cycle $Z\in\C$, we have 
$0\leq\sum_{i\in Z}\bigl(-f_i+f^*_i+c(o_i,\sg(o_i))\bigr)$.
\end{lemma}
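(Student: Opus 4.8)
The plan is to view the cycle $Z \in \C$ as inducing a valid local-search move on the local optimum $S$ and then invoke local optimality. Recall that, by construction, a cycle $Z$ alternates through triples $(s_i, i, o_i)$ and closes up via arcs $(o_i, \sg(o_i))$ that stay inside $Z$. So the set $\{o_i : i \in Z\}$ has exactly the same size as $\{s_i : i \in Z\}$ --- call this common value $r$ --- and the move $\swap(\{s_i : i \in Z\},\ \{o_i : i \in Z\})$ swaps out $r$ locations and swaps in $r$ locations. I would first check $r \le \lz$: since $t = \lfloor \lz^{1/3} \rfloor$ and cycles in $\C$ can in principle be long, one actually argues that the \emph{move is still legal} because $Z$ is a single cycle of $\hG$ and one does not need any auxiliary bound here — but more carefully, this lemma is used only for cycles, and one should note that if $r > \lz$ the statement can instead be derived by the same shifting argument applied to the (trivially valid) ``empty'' perturbation; in the write-up I expect the paper simply to observe the move is valid because it is a single closed cycle and the matching can be rerouted entirely within $Z$. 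The key point is that after the swap, $S$ is unchanged as a \emph{set} only if $r = 0$; for $r \ge 1$ we genuinely replace $\{s_i\}_{i\in Z}$ by $\{o_i\}_{i\in Z}$.

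Next I would exhibit a feasible matching of $\F$ to the new location set $(S \sm \{s_i : i\in Z\}) \cup \{o_i : i \in Z\}$ together with a client reassignment, and bound the cost increase. For the matching: every facility $i \notin Z$ keeps its assignment $s_i$ (unchanged), and for $i \in Z$ we route $i$ to $\sg(o_i)$, which lies in $Z$ and hence in the new location set; since $Z$ is a cycle, the map $i \mapsto \sg(o_i)$ restricted to $Z$ is a bijection onto $\{o_i : i\in Z\}$ (each $o$ in the cycle is hit exactly once, because in $\hG$ each node of $O$ has exactly one incoming and one outgoing arc along the cycle). This is a valid perfect matching. For the clients: no client needs to be reassigned at all, because no location of $S$ outside $Z$ is removed and we are only \emph{adding} the $o_i$'s — wait, we \emph{do} remove $\{s_i : i\in Z\}$. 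So I reassign each client $j$ with $\sg(j) \in \{s_i : i \in Z\}$ to some surviving location; the cheapest bookkeeping here is to move $j$ to the nearest location in the new set, but since we only want an \emph{upper} bound on the new cost and local optimality says new cost $\ge$ old cost, we need the reassignment cost increase to be controlled. The cleanest route: reassign $j$ to $o_{i}$ where $o_i = \sg^*(j)$ if that lies in $Z$'s new set; in general use Lemma~\ref{reasgn}-type estimates. Actually for this particular lemma the statement has \emph{no} $c_j$ or $c^*_j$ terms, so the client reassignment must cost nothing extra — meaning the intended reading is that the clients served by $\{s_i : i\in Z\}$ get reassigned to the $o_i$'s at no net increase, OR (more likely) that the move keeps \emph{all of $S$} and this is really a ``shift''-type perturbation of the matching only. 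Re-reading \eqref{shiftm}, the natural interpretation is: we do \emph{not} change $S$ at all; instead we observe that the alternative matching ``route $i \mapsto \sg(o_i)$ for all $i \in Z$'' is a \emph{different perfect matching of $\F$ to the same set $S$} (since for a cycle the $\sg(o_i)$'s for $i\in Z$ are exactly the $s_i$'s for $i\in Z$), and $S$ being the min-cost matching means the current matching cost is no larger.

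So the real argument is: the current matching restricted to $Z$ costs $\sum_{i\in Z} f_i$, while the alternative matching $i \mapsto \sg(o_i)$ (which is a permutation of $\{s_i:i\in Z\}$, hence still matches $\F|_Z$ perfectly to $\{s_i:i\in Z\}$) costs $\sum_{i\in Z} c(i, \sg(o_i)) \le \sum_{i\in Z}\bigl(c(i,o_i) + c(o_i,\sg(o_i))\bigr) = \sum_{i\in Z}\bigl(f^*_i + c(o_i,\sg(o_i))\bigr)$ by the triangle inequality. Since $S$ is the min-cost matching of $\F$ and in particular the restriction to $Z$ is optimal for matching $\F|_Z$ to $\{s_i : i\in Z\}$ (an exchange argument: any improvement on $Z$ would improve the whole matching), we get $\sum_{i\in Z} f_i \le \sum_{i\in Z}\bigl(f^*_i + c(o_i,\sg(o_i))\bigr)$, which rearranges to exactly $0 \le \sum_{i\in Z}\bigl(-f_i + f^*_i + c(o_i,\sg(o_i))\bigr)$. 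The main obstacle to state cleanly is justifying that the \emph{restriction} of a min-cost perfect matching to the facility set of a cycle is itself min-cost for that sub-instance — this follows from a standard augmenting-path / exchange argument (swapping within $Z$ leaves the rest untouched and preserves feasibility), and I would just cite that. No use of the local-search local-optimality is even needed — only optimality of the matching defining $S$ — which is why this lemma holds unconditionally on $\lz$.
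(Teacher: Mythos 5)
Your final argument is correct and is essentially the paper's proof: reroute each facility $i\in Z$ to $\sg(o_i)$ (a bijection onto $\{s_i : i\in Z\}$ since $Z$ is a cycle), invoke optimality of the min-cost matching of $\F$ to $S$, and finish with the triangle inequality $c(i,\sg(o_i))\le f^*_i+c(o_i,\sg(o_i))$; your observation that only matching optimality (not local optimality or any bound involving $\lz$) is needed is also right. The only cosmetic difference is that the paper compares the full modified matching (keeping $i\mapsto s_i$ for $i\notin Z$) against $\sum_i f_i$ directly, which sidesteps your exchange-argument justification that the restriction to $Z$ is itself min-cost; the earlier detours in your write-up about swapping in the $o_i$'s and reassigning clients are unnecessary, as you yourself conclude.
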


\begin{proof}
Consider the following matching of $\F\cap Z$ to $S\cap Z$: we match $i$ to
$\sg(o_i)$. The cost of the resulting new matching is   
$\sum_{i\notin Z}f_i+\sum_{i\in Z}c(i,\sg(o_i))$ which should at least $\sum_i f_i$ since
the latter is the min-cost way of matching $\F$ to $S$. So we obtain
$0\leq\sum_{i\in Z}\bigl(-f_i+c(i,\sg(o_i))\bigr)
\leq\sum_{i\in Z}\bigl(-f_i+f^*_i+c(o_i,\sg(o_i))\bigr)$.
\end{proof}

\begin{lemma} \label{s2lem}
Let $s\in S_2$ and $o=\cent(s)$, and consider 
$\swap(X:=\{s\}\cup T(s),Y:=\capt(s))$. We have 
\begin{equation}
0 \leq \mmfl\bigl((S\setminus X) \cup Y\bigr) - \mmfl(S) 
\leq \sum_{\substack{P\in\PP_c(s) \\ i\in P}} 2f^*_i 
+\sum_{j\in D^*(o)}\Bigl(\tfrac{t+1}{t}\cdot c^*_j-\tfrac{t-1}{t}\cdot c_j\Bigr)
+\sum_{\substack{j\in D(\{s\}\cup T(s)) \\ j\notin D^*(o)}} 2c^*_j.
\label{s2ineq}
\end{equation}
\end{lemma}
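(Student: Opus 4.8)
The plan is to exhibit the swap $\swap(X,Y)$ with $X=\{s\}\cup T(s)$ and $Y=\capt(s)$ explicitly, bound the change in movement cost and the change in assignment cost separately, and then invoke local optimality (which gives the left-hand inequality $0\le\mmfl((S\sm X)\cup Y)-\mmfl(S)$ for free, since no move improves a local optimum). Note that $|X|=|Y|=|\capt(s)|\le t$, so this is a valid $\lz$-swap because $\lz\ge t^3\ge t$. First I would handle the movement cost: for each path $P\in\PP_c(s)$ I will perform the shift $\shift(P)=\shift(\spath(P),\epath(P))$, which re-matches the facilities on $P$ so that every $i\in P$ lands on a location of $P$ except that $\epath(P)\in\capt(s)$ is swapped in — exactly the set $Y$ we are adding. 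Facilities not lying on any $P\in\PP_c(s)$ keep their current assignment. Summing \eqref{shiftm} over $P\in\PP_c(s)$ gives an increase in movement cost of at most $\sum_{P\in\PP_c(s),\,i\in P}2f^*_i - \sum_{o'\in H(s)}c(o',\sg(o'))$; I would then drop the negative term (or keep it if needed downstream).

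Next, the assignment cost. After the swap, every $o'\in\capt(s)$ is an open location, so I reassign clients in two groups. For a client $j$ whose $S$-location $\sg(j)$ survives the swap (i.e.\ $\sg(j)\notin X$), leave $j$ where it is — no cost change. For a client $j$ with $\sg(j)\in X=\{s\}\cup T(s)$, reassign $j$: if $\sg^*(j)=o$, move $j$ to... here is where the case distinction matters. The clients in $D^*(o)$ that were being served from $X$ get reassigned to $o$ itself, which is \emph{not} being opened (it's $\cent(s)$, the captured location we keep out), so instead I reassign them through a nearby newly-opened location. Concretely, for $j\in D^*(o)\cap D(X)$ I bound the reassignment by routing $j$ to some $o'\in\capt(s)$ via $s$: $c(j,o')\le c(j,s)+c(s,o')$; but the cleaner route is $c(j,\text{new location})\le c(j,o)+c(o,s)+c(s,o')\le c^*_j+\big(c(j,s)+c(j,o)\big)+\dots$ — and this is exactly the place where the factor $\tfrac{t+1}{t}c^*_j-\tfrac{t-1}{t}c_j$ comes from, after amortizing $c(o,s)=c(o,\cent^{-1})$ against the $>t$ clients in $D^*(o)$. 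For $j\in D(X)\setminus D^*(o)$, reassign $j$ to $\sg(\sg^*(j))$; by Lemma~\ref{reasgn} this costs at most $c_j+2c^*_j-c_j=2c^*_j$ extra per client, contributing the last sum $\sum_{j\in D(\{s\}\cup T(s)),\,j\notin D^*(o)}2c^*_j$.

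The main obstacle, and the step I would be most careful about, is the bookkeeping for $D^*(o)$: the naive reassignment of each such client to an opened location costs roughly $c^*_j+c(o,s)$ extra (losing the old $c_j$), and $c(o,s)=c(o,\cent(s)^{-1})$ could be large relative to a single client. The resolution is that $|D^*(o)|>t$ by the definition of $S_2$, so I can spread the single term $c(o,s)\le \frac1t\sum_{j\in D^*(o)}(c_j+c^*_j)$ — using $c(o,s)\le c(o,j)+c(j,s)=c^*_j+c_j$ for \emph{any} $j\in D^*(o)$, hence it is at most the average — and absorbing this $\frac1t$-fraction into the coefficients yields precisely $\tfrac{t+1}{t}c^*_j-\tfrac{t-1}{t}c_j$ after combining with the per-client $c^*_j-c_j$ swing. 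I would also double-check that clients in $D^*(o)$ that are \emph{not} in $D(X)$ contribute nothing (their $\sg(j)$ survives), so the third sum correctly ranges only over $j\in D(X)\setminus D^*(o)$, and that no client is double-counted across the three groups. Finally, adding the movement-cost bound and the assignment-cost bound, and discarding the leftover nonpositive term $-\sum_{o'\in H(s)}c(o',\sg(o'))$, gives \eqref{s2ineq}.
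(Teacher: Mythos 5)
There is a genuine gap, and it stems from a misreading of which locations are opened: you treat $o=\cent(s)$ as being ``kept out,'' but the swapped-in set is $Y=\capt(s)$, which \emph{contains} $\cent(s)$ (it is $H(s)=\capt(s)\sm\{\cent(s)\}$ that excludes it, and the shifts along the paths in $\PP_c(s)$ open only $H(s)$, not all of $Y$). This misreading breaks the proof in two places. First, your facility re-matching is incomplete: after shifting along the paths in $\PP_c(s)$, no facility is matched to $o$, and the facility $i$ with $s_i=s$ is stranded at the closed location $s$. The paper's proof completes the matching by moving this facility from $s$ to $o$, and it is precisely this movement cost $c(s,o)\leq c_j+c^*_j$ for any $j\in D^*(o)$, amortized over the more than $t$ clients of $D^*(o)$, that produces the $\frac{1}{t}$ corrections in the coefficients. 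You do have this amortization idea, but you apply it to a client-rerouting detour rather than to the facility movement that opens $o$, and the resulting bound is never actually derived.

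Second, and more decisively, with $o$ not opened you cannot obtain the term $\sum_{j\in D^*(o)}\bigl(\tfrac{t+1}{t}c^*_j-\tfrac{t-1}{t}c_j\bigr)$: the negative $-c_j$ contribution for \emph{every} $j\in D^*(o)$ (including clients whose current location $\sg(j)$ survives the swap) comes from reassigning all of $D^*(o)$ to the newly opened $o$, a change of exactly $c^*_j-c_j$ per client. Under your plan, clients in $D^*(o)\sm D(X)$ stay put (change $0$) and clients in $D^*(o)\cap D(X)$ are rerouted at nonnegative extra cost, so the claimed inequality does not follow from your reassignment. A smaller omission: for $j\in D(X)\sm D^*(o)$ with $\sg^*(j)\notin Y$ you reassign to $\sg(\sg^*(j))$, but you must check this location is not itself in $X$; the paper verifies $\sg(\sg^*(j))\neq s$ (since $\sg^*(j)\notin\capt(s)$) and $\sg(\sg^*(j))\notin T(s)$ (since locations in $T(s)$ capture nothing). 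Your treatment of the movement cost along the paths $\PP_c(s)$ via \eqref{shiftm}, and of the $2c^*_j$ terms via Lemma~\ref{reasgn}, does match the paper; the missing piece is the $\swap(s,o)$ component and the accounting that goes with it.
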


\begin{proof}
We can view this multi-location swap as doing $\swap(\spath(P),\epath(P))$ for each
$P\in\PP_c(s)$ and $\swap(s,o)$ simultaneously.
(Notice that no path $P\in\PP_c(s)$ contains $s$, 
since $s=\sg\bigl(\epath(P)\bigr)\notin P$.) 
For each $\swap(\spath(P),\epath(P))$ the movement-cost increase is bounded by
$\shift(P)\leq\sum_{i\in P}2f^*_i$. For $\swap(s,o)$ we move the facility $i$, where
$s=s_i$, to $o$, so the increase in movement cost is at most 
$c(s,o)=c(\sg(o),o)\leq c(\sg(j),o)\leq c_j+c^*_j$ for every $j\in D^*(o)$. 
So since $|D^*(o)|>t$, we have $c(s,o)\leq\sum_{j\in D^*(o)}\frac{c_j+c^*_j}{t}$.
Thus, the increase in total movement cost is at most

We upper bound the change in assignment cost by reassigning the clients in 
$D^*(o)\cup D(X)$ as follows. 
We reassign each $j \in D^*(o)$ to $o$. Each $j\in D(X)\setminus D^*(o)$
is assigned to $\sg^*(j)$, if $\sg^*(j)\in Y$, and otherwise to
$s'=\sigma(\sigma^*(j))$. Note that $s'\notin X$: $s'\neq s$ since 
$\sg^*(j)\notin\capt(s)$, and $s'\notin T(s)$ since $\bigcup_{s''\in T(s)}\capt(s'')=\es$. 
The change in assignment cost for each such client $j$ is at most $2c^*_j$ by
Lemma~\ref{reasgn}. Thus the change in total assignment cost is at most
$\sum_{j\in D^*(o)} (c^*_j-c_j)+\sum_{j\in D(X)\setminus D^*(o)} 2c^*_j$. Combining this
with the bound on the movement-cost change proves the lemma.
\end{proof}

We now define a key ingredient of our analysis, called an {\em interval-swap} operation,
that is used to bound the movement cost of the $S_1$- and $S_0$-locations and the
assignment cost of the clients they serve. (We build upon this in Section~\ref{3apx} to
give a tighter analysis proving a 3-approximation.)
Let $S'=\{s'_1,\ldots,s'_r\}\subseteq S_0\cup S_1,\ r\leq t^2$ be a subset of at most
$t^2$ locations on a path or cycle $Z$ in $\Pc\cup \C$, where $s'_{q+1}$ is the next
location in $(S_0\cup S_1)\cap Z$ after $s'_q$. 
Let $O'=\{o'_1,\ldots,o'_r\}\sse O$
where $o'_{q-1}=\cent(s'_q)$ for $q=2,\ldots,r$ and $o'_r$ is an arbitrary location that
appears after $s'_r$ (and before $s'_1$) on the corresponding path or cycle.
Consider each $s'_q$. If $|\capt(s'_q)|>t$, choose a {\em random} path
$P\in\PP_c(s'_q)$ with probability $\frac{1}{|\PP_c(s'_q)|}$, and set
$X_q=\{\spath(P)\}$ and $Y_q=\{o'_q\}$. 
If $|\capt(s'_q)|\leq t$, set $X_q=\{s'_q\}\cup T(s'_q)$, and $Y_q=\{o'_q\}\cup H(s'_q)$.
Set $X=\bigcup_{q=1}^r X_q$ and $Y=\bigcup_{q=1}^r Y_q$. Note that $|X|=|Y|\leq t^3$ since 
$|X_q|=|Y_q|\leq t$ for every $q=1,\ldots,r$.
Notice that $X$ is a random set, but $Y=O'\cup H(S'\cap S_3)$ is deterministic.
To avoid cumbersome notation, we use $\swap(X,Y)$
to refer to the distribution of swap-moves that results by the random choices above,
and call this the {\em interval swap corresponding to $S'$ and $O'$}.
We bound the expected change in cost due to this move below. 
Let $\bon(s)$ be the indicator function that is 1 if $s\in S_3$ and 0 otherwise.

\begin{lemma} \label{s1lem}
Let $S'=\{s'_1,\cdots,s'_r\}\subseteq S_0\cup S_1,\ r\leq t^2$ and $O'$ be as given above. 
Let $o'_0:=\cent(s'_1)=o_{\hi}$, where $o'_0=\nil$ and $D^*(o'_0)=\es$ if $s'_1\in S_0$.
Consider the interval swap $\swap\bigl(X=\bigcup_{q=1}^r X_q,Y=\bigcup_{q=1}^r Y_q\bigr)$ 
corresponding to $S'$ and $O'$, as defined above.
We have 
\vspace{-1ex}
\begin{equation}
\hspace*{-4ex}
\begin{split}
0 & \leq\ \E{\mmfl\bigl((S\setminus X) \cup Y\bigr) - \mmfl(S)}\ \leq\
\sum_{q=1}^{r} \shift(s'_q,o'_q)+\sum_{{P\in\PP_c(S'), i\in P}}2f^*_i 
+\sum_{j\in D^*(O')} (c^*_j-c_j) \hspace*{-4ex} \\
& + \sum_{j\in D\left(T(S'\cap S_3)\cup (S'\cap S_3)\right)} 2c^*_j
+\sum_{j\in D\left(T(S'\sm S_3)\right)} \tfrac{2c^*_j}{t}
+\bon(s'_1)\sum_{j\in D^*(o'_0)} (f^*_{\hi}+f_{\hi}+c^*_j).
\end{split}
\label{t3}
\end{equation}
\end{lemma}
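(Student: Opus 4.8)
The plan is to use local optimality for the lower bound, and for the upper bound to bound $\mmfl\bigl((S\sm X)\cup Y\bigr)$ by the cost of a single explicit feasible solution on $(S\sm X)\cup Y$, obtained by rematching only the facilities and reassigning only the clients that the swap disturbs. The lower bound is immediate: since $r\le t^2$ and $|X_q|=|Y_q|\le t$ for every $q$, we have $|X|=|Y|\le t^3\le\lz$, so every swap in the support of $\swap(X,Y)$ is a legal local move, hence $\mmfl\bigl((S\sm X)\cup Y\bigr)-\mmfl(S)\ge 0$ for each outcome of the random choices and so also in expectation.

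For the movement cost the organizing fact is node-disjointness: the sub-path $Z_q$ of $Z$ from $s'_q$ to $o'_q$, together with, for each $q$, either all the paths $P\in\PP_c(s'_q)$ (when $s'_q\in S_3$) or just the random path $P_q$ (when $s'_q\in S_1\sm S_3$), are pairwise node-disjoint --- each such $P$ ends at some $o\in\capt(s'_q)$ with $\sg(o)=s'_q\notin P$, so $P$ lies on an element of $\Pc\cup\C$ other than $Z$, and the $Z_q$'s are disjoint stretches of $Z$ (consecutive ones separated by an arc $(o'_q,s'_{q+1})$). I would then rematch $\F$ by performing, along each $Z_q$, exactly the re-matching of the shift $\shift(s'_q,o'_q)$ (the facility $i'$ with $o_{i'}=o'_q$ moves to $o'_q\in Y$, and every other $i\in Z_q$ moves to $\sg(o_i)$, an $S_2$-location of $Z_q$ not in $X$; note this evacuates $s'_q$ even when $s'_q$ is not removed), and routing each relevant $P=(s_{a_1},a_1,\dots,o_{a_m})$ by $a_\ell\mapsto\sg(o_{a_\ell})$ for $\ell<m$ and $a_m\mapsto\epath(P)$ if $\epath(P)\in Y$ (which holds, via $H(s'_q)\sse Y$, when $s'_q\in S_3$), else $a_m\mapsto\sg(\epath(P))=s'_q$ (which has just been evacuated). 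By disjointness the per-object cost increases add up: each $Z_q$ contributes at most $\shift(s'_q,o'_q)$, and each routed $P$ at most $\sum_{i\in P}2f^*_i$, exactly as in \eqref{shiftm}. Taking expectations, and using that for $s'_q\in S_1\sm S_3$ the single random path $P_q$ is charged to $\frac{1}{|\PP_c(s'_q)|}\sum_{P\in\PP_c(s'_q),\,i\in P}2f^*_i\le\sum_{P\in\PP_c(s'_q),\,i\in P}2f^*_i$, yields the first two terms of \eqref{t3}.

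For the assignment cost I would reassign only the clients in $D^*(O')\cup D(X)$. Each $j\in D^*(O')$ moves to $\sg^*(j)\in O'\sse Y$, contributing $c^*_j-c_j$. For $j\in D(X)\sm D^*(O')$, set $o=\sg^*(j)$ and $s'=\sg(o)$: if $s'\notin X$, move $j$ to $s'$ at cost at most $2c^*_j$ by Lemma~\ref{reasgn}, and classify $j$ by whether $\sg(j)\in T(S'\cap S_3)\cup(S'\cap S_3)$ (charge $2c^*_j$) or $\sg(j)=\spath(P_q)$ with $s'_q\in S_1\sm S_3$ (charge $\frac{2c^*_j}{t}$ in expectation, using $|\PP_c(s'_q)|\ge t$ and $\bigcup_P D(\spath(P))=D(T(s'_q))$). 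It remains to handle $s'\in X$: one argues $s'\notin S_0$ (since $o\in\capt(s')$ forces $\capt(s')\ne\es$), so $s'=s'_q\in S'\cap S_3$, whereupon $o\in\capt(s'_q)\setminus Y_q\sse\{\cent(s'_q)\}$ gives $o=\cent(s'_q)$; since $\cent(s'_q)=o'_{q-1}\in O'\sse Y$ for $q\ge2$, necessarily $q=1$, i.e. $o=o'_0=\cent(s'_1)=o_{\hi}$ and $\bon(s'_1)=1$. Such a $j$ is moved to $s_{\hi}$ (the $S$-location immediately preceding $s'_1$ on $Z$, which survives the swap), at cost $c(j,s_{\hi})-c_j\le c^*_j+f^*_{\hi}+f_{\hi}$ by the triangle inequality; summing over $D^*(o'_0)$ gives the last term of \eqref{t3}. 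Adding the movement- and assignment-cost bounds gives \eqref{t3}.

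The step I expect to be the main obstacle is this last case of the assignment analysis: one must show that the only way a disturbed client can simultaneously lose access to its optimal location $\sg^*(j)$ and to $\sg(\sg^*(j))$ is the boundary configuration $\sg^*(j)=\cent(s'_1)$ with $s'_1$ removed, and then route those clients through the initial location $s_{\hi}$ of the unique facility $\hi$ realizing $\cent(s'_1)$ --- which is precisely what the $\bon(s'_1)$-term absorbs (and one must check that $s_{\hi}$ really survives, the only issue being degenerate short cycles, where $o'_r=o'_0\in Y$ makes the case vacuous). A secondary technical point is the disjointness bookkeeping for the $Z_q$'s and the $\PP_c$-paths, which is what makes the per-$q$ movement increases additive and lets the randomization over $\PP_c(s'_q)$ genuinely save the factor $t$ in the $S_1\sm S_3$ contributions.
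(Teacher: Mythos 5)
Your proposal is correct and follows essentially the same route as the paper's proof: view the interval swap as simultaneous per-$q$ moves, bound the movement cost by $\shift(s'_q,o'_q)$ plus $2f^*$-terms for the $\PP_c$-paths (with the $\frac{1}{|\PP_c(s'_q)|}$ averaging for $S_1\sm S_3$), and bound the assignment cost by reassigning $D^*(O')\cup D(X)$, with the stranded clients being exactly those with $\sg^*(j)=\cent(s'_1)$, routed to $s_{\hi}$ and absorbed by the $\bon(s'_1)$-term. Your additional check that $s_{\hi}$ survives (with the degenerate-cycle case vacuous because $o'_r=o'_0\in Y$) is a detail the paper leaves implicit, and your identification of $X\sm(S'\cap S_3)$ with path-starts having empty capture sets is the same fact the paper uses when asserting $s\notin T(S')$.
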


\begin{proof} 
Let $Z$ be the path in $\Pc$ or cycle in $\C$ such that $S'\cup O'\sse Z$.

We first bound the increase in movement cost.
The interval swap can be viewed as a collection of simultaneous 
$\swap(X_q,Y_q),\ q=1,\ldots,r$ moves.
If $X_q=\{\spath(P)\}$ for a random path $P\in \PP_c(s'_q)$, the movement-cost increase can be
broken into two parts. 
We do a shift along $P$, but move the last initial location on $P$ to $s'_q$, and then do
shift on $Z$ from $s'_q$ to $o'_q$. So the expected movement-cost change is
at most
$$
\frac{1}{|\PP_c(s'_q)|}\sum_{P\in\PP_c(s'_q)}\bigl(\shift(P)+c(\epath(P),s'_q)\bigr)+\shift(s'_q,o'_q)
\leq\frac{1}{|\PP_c(s'_q)|}\sum_{P\in\PP_c(s'_q),i\in P}2f^*_i+\shift(s'_q,o'_q)
$$
which is at most $\sum_{P\in\PP_c(s'_q),i\in P}2f^*_i+\shift(s'_q,o'_q)$.
Similarly, if $|\capt(s'_q)|\leq t$, we can break the movement-cost increase into
$\shift(P)\leq\sum_{i\in P}2f^*_i$ for all $P\in\PP_c(s'_q)$ and $\shift(s'_q,o'_q)$.
Thus, the total increase in movement cost is at most
\begin{equation}
\sum_{q=1}^{r} \shift(s'_q,o'_q)+\sum_{P\in\PP_c(S'),i\in P}2f^*_i.
\label{f3}
\end{equation}

Next, we bound the change in assignment cost by reassigning clients in 
$\hD=D^*(O')\cup D(X)$ as follows. 
We assign each client $j\in D^*(O')$ to $\sg^*(j)$. 
If $|\capt(s'_1)|>t$, then $s'_1\notin X$. 
For every client $j\in\hD\sm(D^*(O'))$, observe that either $\sg^*(j)\in Y$
or $\sg(\sg^*(j))\notin X$. To see this, let $o=\sg^*(j)$ and $s=\sg(o)$.  
If $o\notin Y$ then $s\notin S'\cap S_3$; also $s\notin T(S')$, and so $s\notin X$.  
So we assign $j$ to $\sg^*(j)$ if $\sg^*(j)\in Y$ and to $\sg(\sg^*(j))$ otherwise; the
change in assignment cost of $j$ is at most $2c^*_j$ (Lemma~\ref{reasgn}).

Now suppose $|\capt(s'_1)|\leq t$, so $s'_1\in X$. For each 
$j\in\hD\sm (D^*(O')\cup D^*(o'_o))$, we again have $\sg^*(j)\in Y$ or
$\sg(\sg^*(j))\notin X$, and we assign $j$ to $\sg^*(j)$ if $\sg^*(j)\in Y$ and to
$\sg(\sg^*(j))$ otherwise.  
We assign every $j\in\hD\cap D^*(o'_0)$ to $s_{\hi}$ (recall
that $o'_0=o_{\hi}$), and overestimate the resulting change in assignment cost by
$\sum_{j\in D^*(o'_o)}(c^*_j+f^*_{\hi}+f_{\hi})$. 
Finally, note that we reassign a client $j\in D(T(S'\sm S_3))\sm D^*(O')$ with probability
at most $\frac{1}{t}$ (since $\sg(j)\in X$ with probability at most $\frac{1}{t}$). 
So taking into account all cases, we can bound the change in total assignment cost
by  
\begin{equation}
\sum_{j\in D^*(O')} (c^*_j-c_j)
+\sum_{j\in D(T(S'\cap S_3)\cup(S'\cap S_3))} 2c^*_j
+\sum_{j\in D(T(S'\sm S_3))} \tfrac{2c^*_j}{t}
+\bon(s'_1)\sum_{j\in D^*(o'_0)} (f^*_{\hi}+f_{\hi}+c^*_j).
\label{cl3}
\end{equation}
In \eqref{cl3}, we are double-counting clients in 
$D\bigl(T(S')\cup(S'\cup S_3)\bigr)\cap D^*(O')$.
We are also overestimating the change in assignment cost of a
client $j\in D(X)\cap D^*(o'_0)$ 
since we include both the $\bon(s'_1)(c^*_j+f^*_{\hi}+f_{\hi})$ term, and the
$2c^*_j$ or $\frac{2c^*_j}{t}$ terms. 
Adding \eqref{f3} and \eqref{cl3}
yields the lemma.
\end{proof}

Notice that Lemma~\ref{s2lem} immediately translates to a bound on the assignment cost
of the clients in $D^*(\cent(s))$ for $s\in S_2$. 
In contrast, it is quite unclear how Lemma~\ref{s1lem} may be useful, since the
expression $\sum_{j\in D^*(o'_0)}(f^*_{\hi}+f_{\hi})$ in the RHS of \eqref{t3} may be as
large as $t(f^*_{\hi}+f_{\hi})$ (but no more since $|D^*(o'_0)|\leq t$ if $\bon(s'_1)=1$)
and it is unclear how to cancel the contribution of $f_{\hi}$ on the RHS. 
One of the novelties of our analysis is that
we show how to {\em amortize} such expensive terms and make their contribution negligible
by considering multiple interval swaps. We cover each path or cycle $Z$ in $t^2$ 
different ways using intervals comprising consecutive locations from $S_0\cup S_1$.
We then argue that averaging, over these $t^2$ covering ways, the inequalities
obtained from the corresponding interval swaps yields (among other things) a good bound on 
the movement-cost of the $(S_0\cup S_1)$-locations on $Z$ and the assignment cost of the
clients they serve.  

\begin{lemma} \label{corbigI}
Let $Z\in\Pc\cup\C$, $S'=\{s'_1,\ldots,s'_{r}\}=S_1\cap Z$, where $s'_{q+1}$ is the next
$S_1$-location on $Z$ after $s'_q$, and $O'=\{\cent(s'_1),\ldots,\cent(s'_r)\}$. 
Let $o'_r=\epath(Z)$ if $Z\in\Pc$ and $\cent(s'_1)$ otherwise.
For $r\geq t^2$, 
\begin{equation}
\begin{split}
0\ &\leq\ \sum_{i\in Z}\Bigl(\tfrac{t+1}{t}\cdot f^*_i-\tfrac{t-1}{t}f_i\Bigr)
+\sum_{P\in\PP_c(S'), i\in P} 2f^*_i 
+\sum_{j\in D^*(Z\cap O)}\Bigl(\tfrac{1}{t}\cdot c_j+\tfrac{t+1}{t^2}\cdot c^*_j\Bigr) \\
& + \sum_{j\in D^*\left(O'\cup\{o'_r\}\right)}(c^*_j-c_j)
+\sum_{j\in D\left(T(Z\cap S_3)\cup (Z\cap S_3)\right)}2c^*_j 
+\sum_{j\in D\left(T(S'\sm S_3)\right)}\tfrac{2c^*_j}{t}. 
\end{split}
\label{bndIlarge}
\end{equation}
\end{lemma}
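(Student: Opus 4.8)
The plan is to apply Lemma~\ref{s1lem} repeatedly, once for each of $t^2$ ways of chopping the $(S_0\cup S_1)$-locations of $Z$ into consecutive intervals of length at most $t^2$, and then average the resulting inequalities. Concretely, since $r = |S_1\cap Z|\ge t^2$, for each shift value $a\in\{0,1,\ldots,t^2-1\}$ I would partition $S'=S_1\cap Z$ (together with the $S_0$-locations sitting between consecutive $S_1$-locations) into maximal intervals, where the first interval has the first $a$ locations (or is suitably truncated/wrapped for the cyclic case) and subsequent intervals have exactly $t^2$ consecutive locations of $S_0\cup S_1$ each. For each such interval we invoke Lemma~\ref{s1lem} with the appropriate $O'$, namely $o'_{q-1}=\cent(s'_q)$ for interior boundaries and $o'_r$ chosen as the next $O$-location after the last $S_1$-location of the interval on $Z$; for the very last interval we take $o'_r=\epath(Z)$ if $Z\in\Pc$ and $\cent(s'_1)$ if $Z\in\C$, matching the statement. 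Summing over all intervals for a fixed $a$ gives one inequality; then I average over $a=0,\ldots,t^2-1$.

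The point of averaging is exactly to tame the bad term $\bon(s'_1)\sum_{j\in D^*(o'_0)}(f^*_{\hi}+f_{\hi}+c^*_j)$ in \eqref{t3}. A given location $s\in S_1\cap Z$ with $s\in S_3$ is the \emph{first} location $s'_1$ of its interval for only one value of $a$ out of $t^2$; for the other $t^2-1$ values it is an interior location, and then $\cent(s)$ is some $o'_q$ with $q\ge 1$, contributing only to the benign $\sum_{j\in D^*(O')}(c^*_j-c_j)$ term. Hence, after dividing by $t^2$, the total contribution of the $f^*_{\hi}+f_{\hi}$ penalty is at most $\frac{1}{t^2}\sum_{s\in Z\cap S_3}\sum_{j\in D^*(\cent(s))}(f^*_i+f_i)$ where $i$ is the facility with $s_i=s$; and since each such $s\in S_3$ has $|D^*(\cent(s))|\le t$ (that is precisely why $S_3$ was defined with the bound $|D^*(\cent(s))|\le t$), this is at most $\frac1t\sum_{i\in Z}(f^*_i+f_i)$. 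Together with the $-f_i$ already absorbed via the shift terms and Lemma~\ref{reasgn}-type accounting, this is what produces the coefficients $\frac{t+1}{t}f^*_i-\frac{t-1}{t}f_i$ on the left. Similarly, the leftover $c^*_j$ inside that penalty, averaged, contributes an extra $\frac{1}{t^2}|D^*(\cent(s))|$-weighted term that folds into the $\frac{t+1}{t^2}c^*_j$ coefficient in the $\sum_{j\in D^*(Z\cap O)}$ sum, and the $\frac1t c_j$ there comes from the $-c_j$ in the $D^*(O')$ term appearing only with probability $\le 1/t$ for interior boundary $O$-locations after averaging.

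The remaining bookkeeping is to check that the non-averaged terms telescope/collect correctly. The shift terms: for a fixed $a$, $\sum_q \shift(s'_q,o'_q)$ over the intervals covers each triple $(s_i,i,o_i)$ on $Z$ in the sub-paths between consecutive chosen endpoints, and using \eqref{shiftm} the total is bounded by $2\sum_{i\in Z}f^*_i$ minus boundary $c(o,\sg(o))$ terms; averaging over $a$ and combining with the $\sum_{P\in\PP_c}2f^*_i$ contributions (which are the same for every $a$, since $\PP_c(S')$ depends only on $S'=S_1\cap Z$, not on the chopping) gives the $\frac{t+1}{t}f^*_i$ bound after I also use the $c(o,\sg(o))$ credits against clients in $D^*(Z\cap O)$ via $c(o,\sg(o))\le c_j+c^*_j$. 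The assignment-cost terms $\sum_{j\in D(T(Z\cap S_3)\cup(Z\cap S_3))}2c^*_j$ and $\sum_{j\in D(T(S'\sm S_3))}\frac{2c^*_j}{t}$ are, up to the averaging, identical across all $a$ (the partition of $S_1$-locations into $S_3$ vs.\ $S_1\sm S_3$ is fixed), so they pass through unchanged. I expect the main obstacle to be the careful handling of the cyclic case and of the first/last intervals under the shift $a$ — making sure the "wrap-around'' interval is still of size $\le t^2$ so Lemma~\ref{s1lem} applies, that $o'_r$ is chosen consistently with the lemma's hypothesis ("an arbitrary location that appears after $s'_r$ and before $s'_1$''), and that each $O$-location and each client is counted the right number of times (neither under- nor over-counted) when one sums over intervals and then over $a$. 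This is purely combinatorial accounting, but it is where an off-by-one in the coefficients $\frac{t\pm1}{t}$ or $\frac{t+1}{t^2}$ would creep in, so I would do it with explicit indicator-function sums rather than informally.
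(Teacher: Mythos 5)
Your proposal is essentially the paper's own argument: apply the interval-swap inequality \eqref{t3} of Lemma~\ref{s1lem} to intervals of (at most) $t^2$ consecutive $(S_0\cup S_1)$-locations covering $Z$ in $t^2$ different ways, average with weight $\frac{1}{t^2}$ so that each location is the \emph{start} of an interval only to extent $\frac{1}{t^2}$, and then use $|D^*(\cent(s))|\leq t$ for $s\in S_3$ to dilute the $\bon(s'_1)\sum_{j\in D^*(o'_0)}(f^*_{\hi}+f_{\hi}+c^*_j)$ penalty down to $\frac{1}{t}(f^*_i+f_i)+\frac{1}{t^2}c^*_j$ terms; the paper implements exactly this averaging via all sliding windows $S'_h$, $1-t^2\leq h\leq r$, each weighted $\frac{1}{t^2}$.

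A few points of your bookkeeping are off, and they matter precisely in the explicit accounting you defer. First, your explanation of the $\tfrac{1}{t}c_j$ coefficient is wrong: the term $\sum_{j\in D^*(O'\cup\{o'_r\})}(c^*_j-c_j)$ is \emph{not} diluted to $\frac1t$ by the averaging --- each center is swapped in to total extent $1$ across the coverings (this is why the trailing location of each interval should be taken as the center of the next interval's start, i.e.\ $o'_{h+t^2-1}=\cent(s'_{h+t^2})$, rather than your ``next $O$-location after the last $S_1$-location'', which would leave uncovered gaps on $Z$ and perturb the coefficients). The positive $\tfrac1t c_j$ (and $\tfrac1t c^*_j$) over $D^*(Z\cap O)$ instead comes from the shift bound \eqref{sqoqbnd}: the $S$-locations strictly between consecutive interval endpoints lie in $S_2$, so for such $i$ one has $o_i=\cent(\sg(o_i))$ and $|D^*(o_i)|>t$, whence $c(o_i,\sg(o_i))\leq\frac1t\sum_{j\in D^*(o_i)}(c_j+c^*_j)$ (the same device as in Lemma~\ref{s2lem}). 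Your sketch invokes only $c(o,\sg(o))\leq c_j+c^*_j$ without this division by $t$; with that weaker bound you would get unit-coefficient $+c_j$ terms and an inequality too weak for the final theorem. Relatedly, you must keep the exact identity in \eqref{shiftm}, $\shift(s'_q,o'_q)=\sum_i(f^*_i-f_i)+\sum c(o_i,\sg(o_i))$, rather than relaxing to $2\sum_i f^*_i$, or you lose the $-\frac{t-1}{t}f_i$ term. Finally, a cosmetic slip: in the penalty the charged facility is $\hi$ with $o_{\hi}=\cent(s'_1)$, not the facility with $s_{\hi}=s'_1$; this does not change the final bound since each facility of $Z$ is charged this way at most once, but the indexing should be fixed.
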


\begin{proof}
We first define formally an interval of (at most) $t^2$ consecutive $(S_0\cup S_1)$
locations along $Z$. As before, let $o'_{q-1} =\cent(s'_q)$ for $q=1,\ldots,r$. 
For a path $Z$, define $s'_q=\spath(Z)$ for $q\leq 0$ and $s'_q=\nil$ for $q>r$.
Also define $o'_q=o'_0$ for $q\leq 0$ and $o'_q=\epath(Z)$ for $q\geq r$. 
If $Z$ is a cycle, we let our indices wrap around and be $\bmod ~r$, i.e.,
$s'_q=s'_{q\bmod r},\ o'_q=o'_{q\bmod r}$ for all $q$ (so $o'_r=o'_0=\cent(s'_1)$). 

For $1-t^2\leq h \leq r$, define $S'_h=\{s'_h,s'_{h+1},\ldots,s'_{h+t^2-1}\}$ to be an 
interval of length at most $t^2$ on $Z$. Define $O'_h=\{o'_h,o'_{h+1},\ldots,o'_{h+t^2-1}\}$.
Note that we have $1\leq |S'_h|=|O'_h| \leq t^2$ if $Z$ is a path, and 
$|S'_h|=|O'_h|=t^2$ if $Z$ is a cycle. 
Consider the collection of intervals, $\{S'_{-t^2+1},S'_{-t^2+2},\cdots,S'_r\}$. 
For each $S'_h, O'_h$, where $-t^2+1\leq h \leq r$, we consider the interval swap
$(X_h,Y_h)$ corresponding to $S'_h, O'_h$. We add the inequalities
$\frac{1}{t^2}\times$\eqref{t3} for all such $h$.
Since each $s'\in S'\cup\{s'_0\}$ participates in exactly $t^2$ such inequalities, 
and each $s'_h\in S'$ is the start of only the interval $S'_h$, we obtain the following.  
\begin{equation}
\begin{split}
0\ & \leq\ \sum_{q=0}^{r} \frac{1}{t^2}\cdot t^2\cdot \shift(s'_q,o'_{q})
+\sum_{P\in\PP_c(S'), i\in P} \frac{1}{t^2}\cdot t^2\cdot 2f^*_i \\ 
& + \sum_{j\in D^*(O'\cup\{o'_r\}))} \frac{1}{t^2}\cdot t^2\cdot (c^*_j-c_j)
+\sum_{j\in D\left(T(Z\cap S_3)\cup(Z\cap S_3)\right)}\frac{1}{t^2}\cdot t^2\cdot 2c^*_j
+\sum_{j\in D\left(T(S'\sm S_3)\right)}\frac{1}{t^2}\cdot t^2\cdot\frac{2c^*_j}{t} \\
& + \sum_{i:\sg(o_i)\in Z}\bon(\sg(o_i))\cdot\frac{1}{t^2}\cdot\sum_{j\in D^*(o_i)}(f^*_i+f_i+c^*_j).
\end{split}
\label{intsumbnd}
\end{equation}

Notice that the $S$-locations other than $s'_q$ on the $s'_q\leadsto o'_q$ sub-paths of $Z$
lie in $S_2$, and for each $i$ such that $\sg(o_i)\in Z\cap S_2$, we have
$c(o_i,\sg(o_i))\leq\sum_{j\in D^*(o_i)}\frac{c_j+c^*_j}{t}$. 
Thus, using \eqref{shiftm}, we have
\begin{equation}
\sum_{q=0}^{r} \shift(s'_q,o'_{q})=\sum_{i\in Z}(f^*_i-f_i)+\sum_{i:\sg(o_i)\in Z\cap S_2}c(o_i,\sg(o_i))
\leq \sum_{i\in Z}(f^*_i-f_i)
+\sum_{j\in D^*(Z\cap O)}\frac{c_j+c^*_j}{t}.
\label{sqoqbnd}
\end{equation}
Since $\bon(\sg(o))=1$ means that $\sg(o)\in S_3$, and so $|D^*(o)|\leq t$, we have
\begin{equation}
\sum_{i:\sg(o_i)\in Z\cap S_3}\sum_{j\in D^*(o_i)}\frac{f^*_i+f_i+c^*_j}{t^2}
\leq \sum_{i\in Z}\Bigl(\frac{f^*_i+f_i}{t}+\frac{\sum_{j\in D^*(o_i)} c^*_j}{t^2}\Bigr)
\leq \sum_{i\in Z}\frac{f^*_i+f_i}{t}+\sum_{j\in D^*(Z\cap O)}\frac{c^*_j}{t^2}.
\label{bonbnd}
\end{equation}

\noindent
Incorporating \eqref{sqoqbnd} and \eqref{bonbnd} in \eqref{intsumbnd}, and simplifying
yields the desired inequality. 
\end{proof}

For a path or cycle $Z$ where $|S_1\cap Z|<t^2$, we obtain an inequality similar to
\eqref{bndIlarge}. Since we can now cover $Z$ with a single interval, we never have
a client $j$ such that none of $\sg(j),\ \sg^*(j),\ \sg(\sg^*(j))$ are in our new set of
final locations. So the resulting inequality does not have any
$\frac{f^*_i+f_i}{t}+\frac{c^*_j}{t^2}$ terms.

\begin{lemma} \label{corsmallI}
Let $Z\in\Pc\cup\C$, $S'=\{s'_1,\ldots,s'_{r}\}=S_1\cap Z$, where $s'_{q+1}$ is the next
$S_1$-location on $Z$ after $s'_q$, and $O'=\{\cent(s'_1),\ldots,\cent(s'_r)\}$. 
Let $o'_r=\epath(Z)$ if $Z\in\Pc$ and $\cent(s'_1)$ otherwise.
For $r'<t^2$, 
\begin{equation}
\begin{split}
0\ & \leq\ \sum_{i\in Z}(f^*_i-f_i) + \sum_{P\in\PP_c(S'), i\in P} 2f^*_i 
+ \sum_{j\in D^*(Z\cap O)}\frac{c_j+c^*_j}{t} \\ 
& + \sum_{j\in D^*\left(O'\cup\{o'_r\}\right)}(c^*_j-c_j)
+ \sum_{j\in D\left(T(Z\cap S_3)\cup (Z\cap S_3)\right)} 2c^*_j 
+ \sum_{j\in D\left(T(S'\sm S_3)\right)}\tfrac{2c^*_j}{t}. 
\end{split}
\label{bndIsmall}
\end{equation}
\end{lemma}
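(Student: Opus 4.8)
The plan is to run the argument of Lemma~\ref{corbigI}, but to exploit the hypothesis $r<t^2$ so that a single interval swap suffices in place of the average of $t^2$ swaps; this is precisely what kills the $\tfrac{f^*_i+f_i}{t}$ and $\tfrac{c^*_j}{t^2}$ error terms. First I note that every $S$-location on $Z$ other than $\spath(Z)$ has an incoming arc in $\Pc\cup\C$ and hence captures some $o\in O$, so $S_0\cap Z\sse\{\spath(Z)\}$ and $(S_0\cup S_1)\cap Z$ is a set of at most $r+1\le t^2$ consecutive $(S_0\cup S_1)$-locations of $Z$. I take this whole set as a single interval $S'$ (relabelled $s'_1=\spath(Z),s'_2,\ldots$), with $o'_q=\cent(s'_{q+1})$ for $q<r$ and $o'_r=\epath(Z)$ if $Z\in\Pc$, $o'_r=\cent(s'_1)$ if $Z\in\C$. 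Since $|X|=|Y|\le t\cdot t^2=t^3\le\lz$, this is a legal move, so Lemma~\ref{s1lem} applies and yields~\eqref{t3} for this interval.

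The next step consists of two simplifications. First, the ``orphan'' term $\bon(s'_1)\sum_{j\in D^*(o'_0)}(f^*_{\hi}+f_{\hi}+c^*_j)$ on the right of~\eqref{t3} disappears. If $Z\in\Pc$, then $s'_1=\spath(Z)$ has $\cent(s'_1)=\nil$, so $o'_0=\nil$ and $D^*(o'_0)=\es$. If $Z\in\C$, then indices wrap, so $o'_0=o'_r\in Y$; re-running the reassignment in the proof of Lemma~\ref{s1lem} with each $j\in D^*(o'_0)$ routed to $\sg^*(j)=o'_0\in Y$ rather than to $s_{\hi}$ absorbs $D^*(o'_0)$ into the $D^*(O')$ bucket and removes the $\bon$ term. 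This is the concrete form of the remark preceding the lemma: with one interval covering all of $(S_0\cup S_1)\cap Z$, every client has one of $\sg(j),\sg^*(j),\sg(\sg^*(j))$ inside $(S\sm X)\cup Y$. Second, I bound $\sum_q\shift(s'_q,o'_q)$ exactly as in~\eqref{sqoqbnd}: the sub-paths $s'_q\leadsto o'_q$ cover $Z$, so by~\eqref{shiftm} the $f^*$-- and $f$--contributions collapse to $\sum_{i\in Z}(f^*_i-f_i)$, and each surviving term $c(o_i,\sg(o_i))$ has $\sg(o_i)$ an interior $S$-location of some sub-path, hence $\sg(o_i)\in Z\cap S_2$ and $|D^*(o_i)|>t$, so $c(o_i,\sg(o_i))\le\tfrac1t\sum_{j\in D^*(o_i)}(c_j+c^*_j)$; summing gives $\sum_q\shift(s'_q,o'_q)\le\sum_{i\in Z}(f^*_i-f_i)+\tfrac1t\sum_{j\in D^*(Z\cap O)}(c_j+c^*_j)$.

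Finally I substitute these into~\eqref{t3}, note that $S'\cap S_3=Z\cap S_3$ (since $S_3\sse S_0\cup S_1$ and $S'$ exhausts $(S_0\cup S_1)\cap Z$) and $\PP_c(S')=\PP_c(Z\cap S_1)$, and collect terms to obtain~\eqref{bndIsmall}; crucially the $S_2$-charge $\tfrac1t(c_j+c^*_j)$ now appears with no further $\tfrac1{t^2}$ averaging, which is the sole source of the cleaner coefficients relative to~\eqref{bndIlarge}. The step I expect to need the most care is the vanishing of the orphan term: one must treat the path and cycle cases separately, and in the cycle case verify that the alternative reassignment of $D^*(o'_0)$ is valid so that the $f^*_{\hi}+f_{\hi}$ blow-up genuinely never enters the bound; everything else is a routine specialization of Lemmas~\ref{s1lem} and~\ref{corbigI}.
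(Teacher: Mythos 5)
Your main argument is exactly the paper's: cover $(S_0\cup S_1)\cap Z$ by a single interval (legal since $r<t^2$ gives at most $t^2$ interval locations and $|X|=|Y|\leq t^3\leq\lz$), apply Lemma~\ref{s1lem}, observe that the orphan term $\bon(s'_1)\sum_{j\in D^*(o'_0)}(f^*_{\hi}+f_{\hi}+c^*_j)$ disappears --- for a path because the relabelled first location $\spath(Z)$ has $\cent(\spath(Z))=\nil$, for a cycle because the wrap-around puts $o'_0=o'_r$ into $Y$ so those clients are already covered by the $\sum_{j\in D^*(O')}(c^*_j-c_j)$ bucket --- and then substitute the bound \eqref{sqoqbnd} on $\sum_q\shift(s'_q,o'_q)$, using that all interior $S$-locations of the sub-paths lie in $S_2$. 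All of this is correct and coincides with case 1 of the paper's proof.

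There is, however, one case your construction does not reach: a cycle $Z\in\C$ with $S_1\cap Z=\es$, i.e.\ $r=0$, which the lemma statement allows and which genuinely occurs (on a cycle every $S$-location captures its predecessor, so $S_0\cap Z=\es$, but all of these locations may lie in $S_2$). Your setup presupposes a nonempty interval: you relabel its first element as $\spath(Z)$, which does not exist for a cycle, and the wrap-around argument for the orphan term needs $r\geq 1$. For such a cycle the interval is empty, there is no local-search move to perform, and Lemma~\ref{s1lem} yields only the trivial $0\leq 0$; yet \eqref{bndIsmall} specializes there to the nontrivial inequality $0\leq\sum_{i\in Z}(f^*_i-f_i)+\sum_{j\in D^*(Z\cap O)}\frac{c_j+c^*_j}{t}$, which is needed downstream to produce the $-f_i$ terms for these facilities. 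The paper treats this as a separate case via Lemma~\ref{cycleswap} --- a re-matching of $\F\cap Z$ to $S\cap Z$ that uses optimality of the matching rather than local optimality of $S$ --- combined with the same $S_2$ charge $c(o_i,\sg(o_i))\leq\frac{1}{t}\sum_{j\in D^*(o_i)}(c_j+c^*_j)$. Adding this case (one line, given Lemma~\ref{cycleswap}) completes your proof.
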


\begin{proof} 
The proof is similar to that of Lemma~\ref{corbigI}, except that since we can cover
$Z$ with a single interval, we only need to consider a single (multi-location) swap.
We consider two cases for clarity.

\begin{list}{\arabic{enumi}.}{\usecounter{enumi} \addtolength{\leftmargin}{-3ex}
    \topsep=0.5ex \itemsep=0.5ex}
\item {\bf \boldmath $Z$ is a path, or $r>0$.}
As before, let $o'_{q-1} =\cent(s'_q)$ for $q=1,\ldots,r$. 
If $Z$ is a path, define $s'_0=\spath(Z)$. 
If $Z$ is a cycle, we again set $s'_q=s'_{q\bmod r},\ o'_q=o'_{q\bmod r}$ for all $q$.  
Consider the interval swap $(X,Y)$ corresponding to 
$S'\cup\{s'_0\}, O'\cup\{o'_r\}$. 
The inequality generated by this is
similar to \eqref{t3} except that we do not have any $(f^*_{\hi}+f_{\hi}+c^*_j)$ terms
since for client $j\in D(X)\cup D^*(Y)$, we always have that either $\sg^*(j)\in Y$ or
$\sg(\sg^*(j))\notin X$. Thus, \eqref{t3} translates to the following.
$$
0 \leq \sum_{q=0}^r\shift(s'_q,o'_q)+\sum_{P\in\PP_c(S'), i\in P}\negthickspace 2f^*_i
+\sum_{j\in D^*(O'\cup\{o'_r\})}\negthickspace (c^*_j-c_j)
+\sum_{j\in D\left(T(Z\cap S_3)\cup(Z\cap S_3)\right)}\negthickspace 2c^*_j
+\sum_{j\in D\left(T(S'\sm S_3)\right)}\negthickspace\tfrac{2c^*_j}{t}.
$$
Substituting 
$\sum_{q=0}^{r} \shift(s'_q,o'_{q})
\leq \sum_{i\in Z}(f^*_i-f_i)+\sum_{j\in D^*(Z\cap O)}\frac{c_j+c^*_j}{t}$ 
as in \eqref{sqoqbnd}
yields the stated inequality. 

\item {\bf\boldmath $Z$ is a cycle with $r=0$.}
Here, Lemma~\ref{cycleswap} yields 
$0 \leq \sum_{i\in Z}(f^*_i-f_i)+\sum_{j\in D^*(Z\cap O)}\frac{c_j+c^*_j}{t}$
(which is the special case of the earlier inequality with 
$s'_0=\nil=o'_r,\ S'=O'=Z\cap S_3=\es$).  \vspace{-4ex}
\end{list}
\end{proof}

\begin{proofof}{Theorem~\ref{5apxthm}}
We consider the following set of swaps.
\begin{list}{A\arabic{enumi}.}{\usecounter{enumi} \itemsep=0ex \topsep=0.5ex}
\item For every $s\in S_2$, the move $\swap\bigl(\{s\}\cup T(s),\capt(s)\bigr)$. 
\item For every path or cycle $Z$ with $|Z\cap S_1|\geq t^2$, the $\frac{1}{t^2}$-weighted
interval swaps as defined in Lemma~\ref{corbigI}.
\item For every path or cycle $Z$ with $|Z\cap S_1|<t^2$, the interval swap defined in
Lemma~\ref{corsmallI}. 
\end{list}

Notice that every location $o\in O$ is swapped in to an extent of at least 1 and at most
2. (By ``extent'' we mean the total weight of the inequalities involving $o$.)
To see this, suppose first $o=\epath(Z)$ for some path $Z$, then $o$ is involved to an
extent of 1 in the interval swaps for $Z$ in A2 or A3. In this case, we say that the
interval swap for $Z$ is {\em responsible} for $o$. Additionally, if 
$s=\sg(o)\in S_2$, then $o$ belongs to the multi-swap for $s$ in A1, else if $s\in S_3$
then $o$ is part of the interval swap for the path/cycle containing $s$ in A2 or A3. 
Now suppose $o=\cent(s)$. If $s\in S_2$, then $o$ is included in the multi-swap for 
$s$ in A1. We say that this multi-swap is responsible for $o$.
If $s\in S_1$, then $o$ is included in the interval swap for the path/cycle containing $s$
in A2 or A3, and we say that this interval swap is responsible for $o$.

Consider the compound inequality obtained by summing \eqref{s2ineq}, \eqref{bndIlarge}, and 
\eqref{bndIsmall} corresponding to the moves considered in A1, A2, and A3 respectively. 
The LHS of this inequality is 0. We now need to do some bookkeeping
to bound the coefficients of the $f^*_i, f_i, c^*_j, c_j$ terms on the RHS. We ignore
$o(1)$ coefficients like $\frac{1}{t},\ \frac{1}{t^2}$ in this bookkeeping, since for a
given $\{f^*_i,f_i,c^*_j,c_j\}$ term, such coefficients appear in only a constant number
of inequalities, so they have an $o(1)$ effect overall.     
Let $F$ and $C$ denote respectively the movement- and assignment- cost of the local
optimum. 

\smallskip
\noindent
{\bf Contribution from the $c^*_j$ and $c_j$ terms.}
First, observe that for each $o\in O$, we have labeled exactly one move involving $o$ as 
being responsible for it. Consider a client $j\in D(s)\cap D^*(o)$. 
Observe that $c^*_j$ or $c_j$ terms appear (with a $\Theta(1)$-coefficient) in an
inequality generated by a move if 
(i) $j$ is reassigned because the move is responsible for $o$; or
(ii) $s$ is swapped out (to an extent of 1) by the move
(so this excludes the case where $s\in T(s'),\ s'\in S_1\sm S_3$ and the move is
the interval swap for the path containing $s'$).   
If (i) applies, then the inequality generates the term $(c^*_j-c_j)$. 
If (ii) applies then the term $2c^*_j$ appears in the inequality. 
Finally, note that there are at most two inequalities for which (ii) applies:
\begin{list}{--}{\addtolength{\leftmargin}{-2ex} \topsep=0ex \itemsep=0ex}
\item If $s=\spath(Z)\in S_0$, then (ii) applies for the interval-swap move for $Z$. If
$s'=\sg(\epath(Z))\in S_2\cup S_3$, then (ii) again applies, for the multi-swap move for
$s'$ if $s'\in S_2$, or for the interval swap for the path containing $s'$ if 
$s'\in S_3$. 

\item If $s\in S_1\cap S_3$, then (ii) applies for the interval swap for the path
containing $s$.  

\item If $s\in S_2$, then (ii) applies for the multi-swap move for $s$.
\end{list}
So overall, we get a $5c^*_j-c_j$ contribution to the RHS, the bottleneck being the
two inequalities for which (ii) applies when 
$s\in\spath(Z)$ and $\sg(\epath(Z))\in S_2\cup S_3$.  

\smallskip
\noindent
{\bf Contribution from the $f^*_i$ and $f_i$ terms.}
For every $i\in\F$, the expression $(f^*_i-f_i)$ is counted once in the RHS of the
inequality \eqref{bndIlarge} or \eqref{bndIsmall} for the unique path or cycle $Z$
containing $i$. The total contribution of all these terms is therefore, $F^*-F$.   
The remaining contribution comes from expressions of the form 
$\sum_{P\in\PP_c(s),i\in P} 2f^*_i$ on the RHS of \eqref{s2ineq}, \eqref{bndIlarge}, and
\eqref{bndIsmall}. The paths $P$ involved in these expressions come from
$\PP_c(S_2)\cup\bigl(\bigcup_{Z\in\Pc\cup\C}\PP_c(Z\cap S_3)\bigr)\sse\Pc$.
Therefore, the total contribution of these terms is at most $2F^*$.

Thus, we obtain the compound inequality
\begin{equation}
0\leq \bigl(5+o(1)\bigr)C^*+\bigl(3+o(1)\bigr)F^*-\bigl(1-o(1)\bigr)(F+C)
\label{finineq}
\end{equation}
where the $o(1)$ terms are $O\bigl(\frac{1}{t}\bigr)=O\bigl(\lz^{-1/3}\bigr)$.
This shows that $F+C\leq\bigl(3+o(1)\bigr)F^*+\bigl(5+o(1)\bigr)C^*$.
\end{proofof}

\begin{remark} \label{polyremk}
A subtle point to note is that in the above analysis: 
(1) we consider only a polynomial number of swap moves (since there are at most $O(n^\lz)$
swap moves available at any point), and 
(2) we place a constant weight (of at most 1) on the inequality obtained from any given
swap move (when we take the weighted sum of the inequalities obtained from the various
swap moves).  
This is relevant because we can only ensure in polynomial time that we terminate at an
approximate local optimum. More precisely, for any polynomial $f(n)$ and an $\e>0$, we
obtain in polynomial time a solution with cost $F+C$ such that the change in cost due to
any local move is at least $-\frac{\e}{f(n)}\cdot(F+C)$ (instead of 0). But this means that 
the LHS of \eqref{finineq} is now $-\frac{\e}{f(n)}\cdot(F+C)N$, where $N$ is the total
weight placed on the inequalities generated from the various swap moves whose
suitable linear combination yields \eqref{finineq}. Therefore, since $N\leq f(n)$, this
only results in a $(1+\e)$-loss in approximation factor. 
\end{remark}

%\vspace{-1ex}
\section{Improved analysis leading to a 3-approximation} \label{3apx}

In this section, we improve the analysis from Section \ref{sec:5appx}. Specifically, we
prove the following theorem.

\begin{theorem} \label{thm:3appx}
The cost of a locally-optimal solution using $\lz$ swaps is at most 
$3+O\left(\sqrt\frac{\log\log\lz}{\log\lz}\right)$ times the optimum solution cost. 
\end{theorem}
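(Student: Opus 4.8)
The strategy is to mimic the structure of the proof of Theorem~\ref{5apxthm}, but replace the single layer of interval swaps (applied to the $S_1$-locations along each path/cycle $Z$) with a bounded-depth \emph{recursion}. The expensive terms that block a tight bound in Section~\ref{5apx} are the contributions $\mathbf{1}(s'_1)\sum_{j\in D^*(o'_0)}(f^*_{\hat i}+f_{\hat i})$ arising when the start of an interval is an $S_3$-location and the facility $\hat i$ feeding $\cent(s'_1)$ has nonzero movement cost in both solutions; in the $5$-approximation these are amortized away at a cost of $O(1/t)$ per path. Here the idea is to be smarter: when such an expensive term appears for an interval $I$, we do not simply pay it, but instead recursively generate further interval swaps that ``explain'' the problematic facility $\hat i$ by routing it through $o_{\hat i}$ and continuing along the path/cycle it sits on. Viewing the paths in $\Pc\cup\C$ and the $\cent(\cdot)/\sg(\cdot)$ links between them as a forest (the ``recursion tree''), I would expand each such node to depth $O(\log t)$ (or some constant multiple thereof), so that the residual expensive terms, which shrink geometrically with depth, contribute only $O(t^{-\Omega(1)})$ in aggregate while the number of generated swaps stays polynomial.

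**Key steps, in order.** \emph{(1) Set up the recursion tree.} Formalize a tree whose nodes are (interval, path/cycle) pairs: the root-level intervals are the ones used in Lemma~\ref{corbigI}/Lemma~\ref{corsmallI}, and a child of an interval $I$ is an interval on the path/cycle containing the facility $\hat i$ that feeds $\cent(\spath(I))$ whenever $\spath(I)\in S_3$ and $|D^*(o'_0)|$ is large enough that the term $f^*_{\hat i}+f_{\hat i}$ cannot be locally absorbed. Fix a recursion depth $L=\Theta(\log t)$ and truncate. \emph{(2) Bound the number and weight of swaps.} Each interval has at most $t^2$ locations, each of which spawns at most one child, so the branching is bounded; with depth $L$ the total number of generated interval swaps is $n^{O(\lz)}\cdot t^{O(L)}=\poly(n)$, and — crucially, as in Remark~\ref{polyremk} — each inequality gets constant weight, with the weights down the recursion tree decaying so that the total weight is $O(1)$. \emph{(3) Generate and combine the inequalities.} For each interval swap apply the analogue of Lemma~\ref{s1lem}; sum the resulting inequalities over the whole recursion tree with appropriate $\frac{1}{t^2}$-type averaging over the $t^2$ covering offsets at each level (exactly as in Lemma~\ref{corbigI}). \emph{(4) Bookkeeping.} Track the coefficients of $f_i, f^*_i, c_j, c^*_j$. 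The point is that the $f_{\hat i}$ term from a depth-$\ell$ interval is passed down and, at depth $\ell+1$, is converted into a $(f^*_{\hat i}-f_{\hat i})$-type term plus a smaller residual expensive term; after $L$ levels the surviving residue is multiplied by a factor like $t^{-L/\text{(something)}}$. Choosing $L\asymp \log t$ makes this $o(1)$, and one checks the $c^*_j$ coefficient still sums to $3+o(1)$ and the $c_j$ coefficient to $-1+o(1)$, yielding $F+C\le (3+o(1))(F^*+C^*)$. Finally, optimize the relationship between $L$, $t$, and $\lz$: with $t=\lfloor \lz^{1/3}\rfloor$ replaced by a suitable growing function and $L\asymp\log t$, the error terms balance out to $O\!\left(\sqrt{\tfrac{\log\log\lz}{\log\lz}}\right)$.

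**Main obstacle.** The hard part is controlling the recursion so that the expensive $f_{\hat i}$ terms genuinely decay geometrically with depth \emph{without} blowing up the coefficients of the ``good'' terms ($c^*_j$, $(f^*_i-f_i)$) beyond $3+o(1)$. Two difficulties intertwine: first, a single path/cycle can be visited by intervals at many different depths of the recursion (since $\sg(\cdot)$ and $\cent(\cdot)$ can point back into previously-used paths), so one must argue that each facility/client still accumulates only an $O(1)$ total coefficient across all levels — this is where the $\frac{1}{t^2}$-averaging at each level and a careful charging argument are essential. Second, truncating the recursion at depth $L$ leaves a boundary term that must itself be bounded: at the leaves we must fall back on a crude bound (like the one in Lemma~\ref{s1lem} or even just deleting a location and reassigning via Lemma~\ref{reasgn}), and we need the accumulated weight reaching depth $L$ to be small enough ($t^{-\Omega(L)}$) that this crude bound costs only $o(1)$. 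Getting the depth/parameter trade-off exactly right — so the residual is $O(1/\polylog)$ rather than $O(1/\poly(t))$, which is what forces the $\sqrt{\log\log\lz/\log\lz}$ rather than $\lz^{-\Omega(1)}$ — is the delicate quantitative core of the argument.
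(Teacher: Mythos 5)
Your high-level slogan (turn the interval-swap analysis into a bounded-depth recursion and average) is indeed the direction the paper takes, but the concrete plan has a genuine gap: you have misidentified what costs the factor 5, and so the recursion you describe attacks the wrong term. The terms $\bon(s'_1)\sum_{j\in D^*(o'_0)}(f^*_{\hi}+f_{\hi})$ are \emph{already} harmless in Section~\ref{5apx}: the averaging over the $t^2$ interval offsets charges each such facility only $\frac{1}{t}(f^*_{\hi}+f_{\hi})$, an $o(1)$ effect. What actually produces $5c^*_j-c_j$ is that a location $s=\spath(Z)\in S_0$ can be swapped out to an extent of $2$ --- once as the start of the interval swaps for $Z$, and a second time via a path swap ($\shift$) along $Z$ performed when $s'=\sg(\epath(Z))\in S_2\cup S_3$ is closed --- so clients in $D(s)$ pay $2c^*_j$ twice in addition to $c^*_j-c_j$. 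Even if your recursion made the $f_{\hi}$ residuals vanish, this $5c^*_j$ bottleneck would remain. The fix requires a structural change in the test swaps: when an $S_3$-location is closed one must \emph{not} perform the path swaps along $\PP_c(\cdot)$ at all, but instead recurse into those paths (the paper encodes this in the graph $H^*$, obtained by adding to the center-edge paths the edges entering $S_3$-locations), so that every $S$-location is swapped out to extent $1+o(1)$; the captured locations of the leaf locations at the truncation boundary are then simply not opened, which is affordable only after redefining $S_1,S_2,S_3$ in terms of $|D^*(\capt(s))|\leq t$. Relatedly, your recursion tree is not well-defined for this purpose: the facility $\hi$ with $o_{\hi}=\cent(s'_1)$ lies on the \emph{same} path or cycle as the interval itself, so ``recursing into the path containing $\hi$'' never leaves that path; the recursion must instead branch through closed endpoints, i.e., into the paths of $\PP_c(s')$ for $s'\in S_3$.

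The quantitative part is also unsupported. You assert that the expensive terms ``shrink geometrically with depth,'' but no mechanism for such decay is given, and none is needed in a correct argument: in the paper the truncation is done mod $t^2$ (cutting $H^*$ at depth-$l$ layers of $S_1$-facilities for $l=1,\ldots,t^2$), the boundary facilities $i\in B$ are charged $2t(f^*_i+f_i)$ using $|D^*(\capt(\sg(o_i)))|\leq t$, and averaging over the $t^2$ offsets reduces this to weight $\frac{1}{t^2}$ --- the same amortization trick as before, not geometric decay. Moreover the binding constraint is not the \emph{number} of test swaps (your $n^{O(\lz)}\cdot t^{O(L)}$ count) but the number of locations moved in a \emph{single} swap: with branching factor up to $t$ at each $S_3$-location and recursion depth $t^2$, a subtree swap touches up to $t^{t^2}$ locations, which is exactly what forces $\lz\geq t^2t^{t^2}+1$, hence $t=\Theta\bigl(\sqrt{\log\lz/\log\log\lz}\bigr)$ and the stated $3+O\bigl(\sqrt{\log\log\lz/\log\lz}\bigr)$ bound. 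Your choice $L=\Theta(\log t)$ would cap swaps at size $t^{O(\log t)}$ and would give a different parameter trade-off altogether; the claim that your parameters ``balance out'' to the theorem's bound is not derived and does not follow from the ingredients you list.
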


To gain some intuition behind this tighter analysis, note that the {\em only} reason we
lost a factor of 5 in the previous analysis was because there could be locations 
$s=\spath(Z)\in S_0$ that are swapped out to an extent of 2; 
hence, there could be clients $j\in D(s)$ for which we ``pay'' $2c^*_j$ each time $s$
is swapped out, and also pay an additional $c^*_j-c_j$ term when $\sg^*(j)$ is swapped in.
To improve the analysis, we will consider a set of test swaps that swap out each 
location in $S$ to an extent of $1+o(1)$. 

The aforementioned bad case happens only when $s'=\sg(\epath(Z))\in S_2\cup S_3$, because
when we close (i.e., swap out) $s'$ as part of an interval swap 
or a multi-swap, we open (i.e., swap in) all the locations in $H(s')$, and we achieve
this via path swaps (i.e., $\shift$ moves) along paths in $\PP_c(s')$ that swap out
locations in $T(s')$ (for a second time). 
The main idea behind our refined analysis is to not perform such
path swaps, but instead to ``recursively'' start an interval swap on each path in
$\PP_c(s')$. 
Of course, we cannot carry out this recursion to arbitrary depth 
(since we can only swap a bounded number of locations), 
so we terminate the recursion at a depth of $t^2$. 
So, whereas an interval swap included at most $t^2$ $S_1$-locations on the main path or
cycle $Z$, we now consider a ``subtree'' swap obtained by aggregating interval swaps on
the paths in $\bigcup \PP_c(Z\cap S_3)$. 
A subtree swap can be viewed as a bounded-depth recursion tree where each
leaf to root path encounters at most $t^2$ locations in $S_1$. Because we no longer
initiate path swaps for $S_3$-locations, a leaf location $s''\in S_3$ in this recursion 
tree will not have any locations in $\capt(s'')$ opened. 
But we will slightly redefine the $S_1, S_2, S_3$ sets to ensure that 
$|D^*(\capt(s''))| \leq t$, and use the same trick that we did with interval swaps in
Section~\ref{5apx}: 
we {\em average} over different sets of subtree swaps (like we did with interval swaps in
Section~\ref{5apx}) to ensure that $s''$ is a leaf
location with probability at most $\frac{1}{t^2}$. This ensures that we incur, to an extent
of at most $\frac{1}{t}$, the cost $f^*_i+f_i+c(o_i,s'')$, where $o_i=\cent(s'')$, for
moving $j$ with $\sg(\sg^*(j))=s''$ from $s''$ to $s_i$. 

%\vspace{-1.5ex}
\paragraph{Notation.}
Let $t$ be an integer such that $\lz \geq t^2t^{t^2}+1$.
We prove that the local-search algorithm has approximation ratio $3+O(t^{-1})$. 
We redefine $S_0, S_1, S_2$ and $S_3$ as follows.
\begin{list}{$\bullet$}{\topsep=0.5ex \itemsep=0ex}
\item $S_0=\{s \in S: |\capt(s)|= 0\}$.
\item $S_1=\{s\in S\sm S_0: |D^*(\capt(s))|\leq t\text{ or }|\capt(s)|>t\}$.
\item $S_2=\{s \in S: |D^*(\capt(s))|>t,\ |\capt(s)|\leq t\}$.
\item $S_3=S_0\cup\{s\in S_1: |\capt(s)| \leq t\}$.
\end{list}
Clearly, $S=S_0\cup S_1\cup S_2$. 
We also redefine $\cent(s)$ to be the location in $\capt(s)$ {\em closest to $s$}. 

\begin{claim} \label{s2dist}
Let $s\in S_2$ and $o=\cent(s)$. 
Then $c(s,o)\leq\frac{1}{t}\sum_{j\in D^*(\capt(s))}(c_j+c^*_j)$. 
\end{claim}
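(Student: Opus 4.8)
The plan is to bound $c(s,o)$ by averaging the distances $c(s,o')$ over all $o'\in\capt(s)$, and to relate each such distance to the assignment costs of the clients in $D^*(o')$. First, observe that since $o=\cent(s)$ is by (the new) definition the location in $\capt(s)$ closest to $s$, we have $c(s,o)\le c(s,o')$ for every $o'\in\capt(s)$, and hence
\[
c(s,o)\ \le\ \frac{1}{|\capt(s)|}\sum_{o'\in\capt(s)} c(s,o').
\]
Next I would fix $o'\in\capt(s)$ and a client $j\in D^*(o')$ (such a client need not exist for every $o'$, but I will handle the counting carefully). Since $o'\in\capt(s)$ means $\sg(o')=s$, i.e.\ $s$ is the $S$-location closest to $o'$, we get $c(s,o')\le c(\sg(j),o')\le c(\sg(j),j)+c(j,o')=c_j+c^*_j$, using the triangle inequality and the facts that $c_j=c(j,\sg(j))$ and $c^*_j=c(j,\sg^*(j))=c(j,o')$.

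The key quantitative step is then to leverage the defining property of $S_2$, namely that $|D^*(\capt(s))|>t$. The $S$-location $s$ captures the $|\capt(s)|$ locations of $O$ in $\capt(s)$, and $D^*(\capt(s))=\bigcup_{o'\in\capt(s)}D^*(o')$ is a disjoint union. Since $s\in S_2$ also satisfies $|\capt(s)|\le t$ but $|D^*(\capt(s))|>t$, we want to argue that picking, for each $o'\in\capt(s)$ with $D^*(o')\ne\emptyset$, a single witnessing client, and more carefully distributing the charge, gives enough clients to absorb the $\frac{1}{t}$ factor. Concretely, for each $o'\in\capt(s)$ I would write $c(s,o')\le \frac{1}{|D^*(o')|}\sum_{j\in D^*(o')}(c_j+c^*_j)$ when $D^*(o')\ne\emptyset$, so that
\[
c(s,o)\ \le\ \frac{1}{|\capt(s)|}\sum_{o'\in\capt(s):\,D^*(o')\ne\emptyset} \frac{1}{|D^*(o')|}\sum_{j\in D^*(o')}(c_j+c^*_j).
\]
The main obstacle is that this bound as written has a factor $\frac{1}{|\capt(s)|\cdot|D^*(o')|}$ which is not obviously at least as good as $\frac{1}{t}$ applied to the full sum $\sum_{j\in D^*(\capt(s))}(c_j+c^*_j)$; the clean way around it is to note that $c(s,o)\le c(s,o')$ for \emph{all} $o'$, so in particular $|\capt(s)|\cdot c(s,o)\le\sum_{o'}c(s,o')$, and more to the point $|D^*(\capt(s))|\cdot c(s,o)\le\sum_{o'\in\capt(s)}\sum_{j\in D^*(o')}c(s,o')\le\sum_{j\in D^*(\capt(s))}(c_j+c^*_j)$, where the first inequality uses $c(s,o)\le c(s,o')$ for each term indexed by $j\in D^*(o')$. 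Since $|D^*(\capt(s))|>t$, this immediately gives $c(s,o)\le\frac{1}{t}\sum_{j\in D^*(\capt(s))}(c_j+c^*_j)$, as desired. So the real content is just the observation that we may charge $c(s,o)$, once per client in $D^*(\capt(s))$, against that client's $c_j+c^*_j$, together with the strict lower bound $|D^*(\capt(s))|>t$ on the number of such clients; no averaging over $\capt(s)$ is actually needed, only the choice of $o$ as the closest captured location (so that $c(s,o)\le c(s,o')$ uniformly). I would present exactly this two-line charging argument, mirroring the corresponding estimate $c(s,o)\le\sum_{j\in D^*(o)}\frac{c_j+c^*_j}{t}$ used in Lemma~\ref{s2lem} of the 5-approximation analysis.
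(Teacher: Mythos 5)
Your final argument is correct and is exactly the paper's proof: since $o=\cent(s)$ is the closest location in $\capt(s)$ to $s$, one has $c(s,o)\le c(s,o')\le c_j+c^*_j$ for every $o'\in\capt(s)$ and $j\in D^*(o')$, and averaging over the more than $t$ clients of $D^*(\capt(s))$ gives the claim. The preliminary averaging over $\capt(s)$ is an unnecessary detour, as you yourself observe, so simply present the two-line charging argument.
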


\begin{proof}
We have $c(s,o)\leq c(s,o')$ for any $o'\in\capt(s)$, and $c(s,o')\leq c_j+c^*_j$ for any
$j\in D^*(o')$. Therefore, $c(s,o)\leq c_j+c^*_j$ for any $j\in D^*(\capt(s))$, and the
claim follows since $|D^*(\capt(s))|>t$ as $s\in S_2$. 
\end{proof}

\begin{figure}[ht!]
\centerline{\resizebox{!}{3.5in}{\input{hgraph.pspdftex}}}
\label{fig:hgraph}
\end{figure}

It will be more convenient to work with 
the digraph $H=(\F,E)$ obtained from $\hG$ by contracting each triple
$\{s_i, i, o_i\}$ of nodes associated with a facility $i$ into a single node that we also
denote by $i$. Thus, 
$(i,i')$ is an arc in $E$ if $\sigma(o_i) = s_{i'}$ (it may be that $i = i'$). Note that $H$
may have self loops, and each node in $H$ has outdegree exactly 1 (counting self-loops) so
each component of $H$ looks like a tree with all edges oriented toward the root, except
the root is in fact a directed cycle (possibly a self-loop).  
The figure above illustrates this graph and some of the subgraphs and structures discussed
below. 

For brevity, we say that an edge $(i,i')$ in $H$ is a {\em center} edge if $o_i = \cent(s_{i'})$.
In the arc set $E'=\{(i,i')\in E: o_i=\cent(s_{i'})\})$, each node has indegree and
outdegree at most 1, so $E'$ consists of a collection node-disjoint paths $\Pc$ and cycles
$\C$. For a facility $i\in\Pc$, let $\Pc(i)$ denote the unique path in $\Pc$ containing
$i$. Let $\spath(i)$ and $\epath(i)$ denote the start and end facilities of $\Pc(i)$
respectively.  
For distinct facilities $i, i'$ with $s_i, s_{i'} \in S_0$, the paths $\Pc(i)$ and
$\Pc(i')$ are clearly vertex disjoint. 

Now define $H^*=\bigl(\F,E'\cup\{(i,i'): s_{i'}\in S_3, \sg(o_i)=s_{i'}\}\bigr)$; that
is, $H^*$ is the subgraph of $H$ with node-set $\F$ and edges $(i,i')$ of $E$ where
$(i,i')$ is a center edge or $s_{i'} \in S_3$.  
Call a node $i$ of $H^*$ a {\em root} if $i$ has no outgoing arc or $i$ lies on a directed
cycle in $H^*$. 

We consider an integer $1 \leq l \leq t^2$ and describe a set of swaps for each index
$l$. The inequalities for the swaps for different $l$ will be averaged in the final
analysis. We obtain $H^*_l$ by deleting the edges $(i,i')$ of $H^*$ where: 
$i$ is not on a cycle, $s_{i'} \in S_1$, and 
the number of facilities $i''$ with $s_{i''} \in S_1$ on the path between $i'$ and the
root of its component in $H^*$ (both included) is $l\bmod t^2$.  
We define a {\em subtree} of $H^*_l$ to be
an acyclic component of $H^*_l$, or a component that results by deleting the edges
of the cycle contained in a component of $H^*_l$. 

For a facility $i$, define 
$\cand(i)=\{i': o_{i'}\in\capt(s_i)\sm\{\cent(s_i)\},\ \text{$\not\exists i\leadsto i'$
  path in $H^*$}\}$.  
Note that $|\cand(i)|\geq|\capt(s_i)|-2$. The reason we define $\cand(i)$ is that we will
sometimes perform a shift along some path $Z\in\PP_c(s_i)$ to reassign the facilities on
$Z$ but we will not want this to interfere with the operations in the subtree of $H^*_l$ 
containing $i$.  
For a facility $i$ with $s_i \not\in S_2$, let $\next(i)$ be the facility obtained as
follows. Follow the unique walk from $i$ in $H$ using only center edges until the
walk reaches a node $i'$ with either no outgoing center edge, or the unique $(i',i'')$
center edge satisfies $s_{i''} \in S_1$; 
we set $\next(i) = i'$.

\begin{claim} \label{subtreesize}
The number of facilities $i$ with $s_i \in S_0 \cup S_1$ in any subtree of $H^*_l$ is at
most $t^{t^2}$.
\end{claim}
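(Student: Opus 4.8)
The plan is to bound the number of $(S_0\cup S_1)$-facilities in a subtree of $H^*_l$ by analyzing the structure of the tree obtained after the deletions that produce $H^*_l$. First I would fix a subtree $\tau$ of $H^*_l$ and consider its set of ``$S_1$-facilities,'' i.e., the facilities $i''$ in $\tau$ with $s_{i''}\in S_1$. The key claim will be that along any root-to-leaf path in $\tau$ (taking the root of $\tau$ as the endpoint kept after deleting a cycle, if applicable), the number of $S_1$-facilities encountered is at most $t^2$. This is essentially by construction: the edges deleted to form $H^*_l$ are precisely those $(i,i')$ with $s_{i'}\in S_1$ where the count of $S_1$-facilities on the path from $i'$ up to the component root (in $H^*$) is $\equiv 0 \pmod{t^2}$. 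So every $t^2$-th $S_1$-facility (counting from the root) has its incoming edge cut, which means any monotone path within a single subtree $\tau$ passes through fewer than $t^2$ such facilities — I'd want to be careful that ``the count in $H^*$'' and ``the count in $\tau$'' agree along the relevant segment, but since $\tau$ is a connected piece of $H^*$ containing a contiguous range of the relevant ancestor chain, this should be routine.

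Next I would bound the branching. The point is that a node $i'$ in $H^*$ can have many incoming edges, but the incoming edges of $H^*$ are of two kinds: center edges (forming the paths $\Pc$ and cycles $\C$, hence contributing indegree at most $1$ from that source) and edges $(i,i')$ with $s_{i'}\in S_3$. A facility $i'$ with $s_{i'}\in S_3$ has $|\capt(s_{i'})|\le t$ by definition of $S_3$, and the non-center incoming edges $(i,i')$ of $H^*$ correspond exactly to $i$ with $\sg(o_i)=s_{i'}$ and $o_i\neq\cent(s_{i'})$, i.e., $o_i\in\capt(s_{i'})\setminus\{\cent(s_{i'})\}$; there are at most $t-1$ of these. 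Together with the at most one incoming center edge, every node of $H^*$ has indegree at most $t$. Hence a subtree $\tau$, viewed as a rooted tree of depth (in terms of $S_1$-facility-count) at most $t^2$ with branching bounded by $t$ at each $S_1$-facility, has at most roughly $t^{t^2}$ $S_1$-facilities.

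The slightly delicate point — and I expect this to be the main obstacle — is accounting correctly for the $S_0$-facilities and for the non-$S_1$ facilities that sit between consecutive $S_1$-facilities, so as to land on the clean bound $t^{t^2}$ rather than something larger. The resolution is that $S_0$-facilities and $S_1$-facilities not in $S_3$ do not cause branching: if $s_i\in S_0$ then $\capt(s_i)=\es$, so $i$ has at most one incoming edge in $H$ at all (only possibly a center edge — but $\cent(s_i)=\nil$, so actually indegree $0$ from center edges, and indegree could come only via $s_i\in S_3$ which it is, but $\capt(s_i)=\es$ gives no non-center edges either), hence $S_0$-facilities are leaves of $H^*$; and $S_1$-facilities with $|\capt|>t$ are not in $S_3$, so they receive no non-center incoming edge in $H^*$ and have indegree at most $1$. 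So branching happens only at $S_3$-facilities, each of which (being in $S_1\cap S_3$ when it branches nontrivially) is counted among the $S_1$-facilities, and there the branching factor is at most $t$. Thus the subtree, contracted to its $S_1$-facilities, is a rooted tree with $\le t^2$ levels and branching $\le t$, giving $\le \sum_{j=0}^{t^2-1} t^j \le t^{t^2}$ facilities that are in $S_1$; adding in the $S_0$-facilities, which are exactly leaves hanging off this contracted tree and number at most one per edge of $H$ pointing into the structure — here one must observe that each $S_0$-facility's unique out-edge goes to some $S_1$-or-$S_3$ facility, and is itself a leaf, so it is subsumed in the same count by treating it as occupying one of the $\le t$ branching slots of its parent. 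Carefully organizing this bookkeeping so that the final tally is exactly $t^{t^2}$ and not, say, $t^{t^2+1}$, is the one place where I would slow down and write out the recursion $N(\ell)\le 1 + t\cdot N(\ell-1)$ with $N(0)$ accounting for a single $S_1$-facility together with its attached $S_0$-leaves and center-path segments, and check the base case.
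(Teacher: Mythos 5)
Your proposal is correct and is essentially the paper's own (much terser) proof: $S_2$-facilities have indegree and outdegree at most $1$ and are shortcut past, the $\bmod\ t^2$ deletion rule used to form $H^*_l$ bounds the number of $S_1$-facilities on any root-to-leaf path by $t^2$, and the indegree bound (at most one center edge plus at most $|\capt(s_{i'})|-1\leq t-1$ non-center edges, the latter only when $s_{i'}\in S_3$) gives branching factor at most $t$. The exact-constant bookkeeping you flag is also glossed over by the paper and is harmless: since $S_0$-facilities have indegree $0$ (and the bottom-most $S_1$-facility on a path that already contains $t^2$ of them has all its incoming edges deleted, hence is a leaf), every root-to-leaf path contains at most $t^2$ facilities from $S_0\cup S_1$, so the contracted tree has at most $\sum_{j=0}^{t^2-1}t^j\leq t^{t^2}$ such facilities (and in any case a constant-factor slack here would only shift the relation between $t$ and $\lz$, not the final guarantee).
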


\begin{proof}
The facilities $i$ in such a subtree that are in $S_2$ have indegree and outdegree at
most 1.  Shortcutting past these facilities yields a tree with depth at most $t^2$ and
branching factor at most $t$.
\end{proof}

%\vspace{-2.5ex}
\paragraph{The test swaps.}
For a subtree $T$ of $H^*_l$, we describe a set of nodes $X_T$ to be swapped out and a
set of nodes $Y_T$ to be swapped in with $|X_T| = |Y_T| \leq t^{t^2}$. We do not actually
perform these swaps yet to generate the inequalities since we will have to combine some of
these swaps for various components. 

For each $i \in T$ with $s_i\in S_0\cup S_1$, we add the following location in $S$ to
$X_T$: if $s_i \in S_3$ we add $s_i$ to $X_T$; otherwise (so $s_i \in S_1 \setminus S_3$),
we choose any single $i'\in \cand(i)$ uniformly at random and add $s_{\spath(i')}$ to $X_T$. 
We also add $o_{\next(i)}$ to $Y_T$. 

When we say perform $\swap( X_T, Y_T )$, we specifically mean the following
reassignment of facilities.  For $s_i \in X_T$ with $s_i \in S_3$, we perform
$\shift(s_i, o_{\next(i)})$. For $s_i \in X_T$ with $s_i \in S_1\setminus S_3$, say $i'$ is
the facility in $\cand(i)$ for which $s_{\spath(i')}$ was added to $X_T$. Then we perform
$\shift(s_{\spath(i')}, o_{i'})$, move facility $i'$ from $o_{i'}$ to $s_i$, and
finally perform $\shift(s_i, o_{\next(i)})$.  
As always, each client is then assigned to its nearest final location. 
Lemma~\ref{lem:disjoint} implies that these shift operations charge different portions of
the local and global optimum.  

\begin{lemma} \label{lem:disjoint}
For a subtree $T$, all of the shift operations described for $\swap(X_T, Y_T)$ have
their associated paths being vertex disjoint.
\end{lemma}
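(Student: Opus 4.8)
The plan is to show that the collection of paths used by the various $\shift$ operations in $\swap(X_T,Y_T)$ are pairwise vertex-disjoint by carefully identifying which facilities each path can touch, and then arguing these vertex-sets are disjoint. Recall the three kinds of paths that arise: (a) for each $i\in T$ with $s_i\in S_3$, the path $Z_i$ underlying $\shift(s_i,o_{\next(i)})$, which (by the definition of $\shift$ and $\next$) is the $s_i\leadsto o_{\next(i)}$ sub-path consisting of the facilities $i$ follows via center edges until $\next(i)$; (b) for each $i\in T$ with $s_i\in S_1\setminus S_3$, the path $Z_i$ underlying $\shift(s_i,o_{\next(i)})$, which is likewise a center-edge walk from $i$ to $\next(i)$; and (c) for the same $i\in S_1\setminus S_3$, the path $W_{i'}$ underlying $\shift(s_{\spath(i')},o_{i'})$, which is $\Pc(i')$, the center-edge path ending at $i'$, where $i'\in\cand(i)$ is the randomly chosen candidate.

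First I would pin down the structural description of each path. Every path of type (a) or (b) is a maximal-until-$\next$ center-edge walk starting at a facility $i$ with $s_i\in S_0\cup S_1$; such a walk passes only through facilities $i''$ with $s_{i''}\in S_2$ strictly between its endpoints (this is exactly what $\next$ guarantees: it stops at the first facility whose outgoing center edge points to an $S_1$-location, and $S_0$-locations have no incoming center edge). So if two type-(a)/(b) paths $Z_i$ and $Z_{\hat\imath}$ shared a vertex, by following center edges forward from that shared vertex we would reach a common $\next$-endpoint, and tracing backward the paths would coincide from their starts; since the starts $i,\hat\imath$ are distinct $S_0\cup S_1$-facilities and each such facility begins at most one such walk, we get $i=\hat\imath$, a contradiction. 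I would also note that a type-(a)/(b) path $Z_i$ for $i\in T$ cannot meet $Z_{i^\dagger}$ for a facility $i^\dagger$ in a \emph{different} subtree $T'$: the edges of $H^*$ include all center edges plus edges into $S_3$-nodes, so a center-edge walk from $i$ stays within one component of $H^*$, and the deletion rule producing $H^*_l$ cuts only non-cycle edges into $S_1$-locations, i.e. precisely the edges at the $\next$-boundaries — hence a $Z_i$ lies entirely inside one subtree.

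Next, for the type-(c) paths $W_{i'}=\Pc(i')$: the key is the clause ``$\not\exists i\leadsto i'$ path in $H^*$'' in the definition of $\cand(i)$, which prevents $i'$ (and hence all of $\Pc(i')$, since $\Pc(i')$ lies on $H^*$ because its edges are center edges) from lying on the subtree of $H^*_l$ containing $i$; combined with the observation that all type-(a)/(b) paths lie inside subtrees, this disjointness follows. For two type-(c) paths $W_{i'}$ and $W_{\hat\imath'}$ from distinct $i,\hat\imath\in S_1\setminus S_3$ in $T$: $\Pc(i')$ and $\Pc(\hat\imath')$ are distinct center-edge paths in $\Pc$ hence vertex-disjoint unless $i'=\hat\imath'$; but $o_{i'}\in\capt(s_i)$ and $o_{\hat\imath'}\in\capt(s_{\hat\imath})$ with $s_i\neq s_{\hat\imath}$, so $o_{i'}\neq o_{\hat\imath'}$, hence $i'\neq\hat\imath'$. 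Finally, for the interaction of $W_{i'}$ with $Z_i$ (the $s_i\leadsto o_{\next(i)}$ path for the \emph{same} $i$): $W_{i'}$ lies outside $T$ as argued, while $Z_i$ lies inside $T$, so they are disjoint.

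I expect the main obstacle to be the precise verification that $\cand(i)$'s ``no $i\leadsto i'$ path in $H^*$'' condition really forces the \emph{whole} of $\Pc(i')$ out of $T$, not just $i'$ itself — one must check that if any facility of $\Pc(i')$ were in $T$ then, since $\Pc(i')$ is a center-edge path and $T$ is closed under following center edges up to $\next$-boundaries within $H^*_l$, there would be an $i\leadsto i'$ directed path in $H^*$, contradicting the definition; and symmetrically ruling out a type-(c) path meeting a type-(a)/(b) path belonging to a sibling $S_3$-node deeper in the same subtree. The rest is bookkeeping about which node-classes ($S_0,S_1,S_2,S_3$) can appear where along each path, all of which follows directly from the definitions of $\cent$, $\next$, center edges, and the subtree-forming deletion rule.
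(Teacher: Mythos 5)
Your decomposition into the three kinds of shift paths is the right one, and two of the three pairwise cases are handled correctly: the backward-tracing argument for two $s_i\leadsto o_{\next(i)}$ paths (this is what the paper compresses into ``vertex-disjoint by definition of $\next$''), and the observation that two candidate paths $\Pc(i')$, $\Pc(\hat\imath')$ coming from distinct facilities of $T$ are distinct members of the node-disjoint collection $\Pc$ because the capture sets of distinct locations are disjoint. The gap is in the third, crucial case, which you reduce to two containment claims: ``each $Z_i$ lies inside $T$'' and ``all of $\Pc(i')$ avoids $T$''. The first claim is false. When $T$ is a subtree whose root $i$ lies on the cycle of its component of $H^*_l$ (exactly the situation arising in the type-1 and type-2 swaps, where $s_i\in S_1$), the walk defining $\next(i)$ proceeds along center edges of that cycle through $S_2$-facilities, and the cycle edges are precisely the edges removed to form $T$; so the shift path $s_i\leadsto o_{\next(i)}$ leaves $T$ (a similar escape through an $S_2$-root onto the cycle can occur for non-root facilities as well). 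The same example refutes your side claim that each $Z_i$ stays within a single subtree. The second claim is true but not for the reason you sketch: a vertex of $\Pc(i')$ lying in the tree $T$ does not by itself give a directed $i''\leadsto i'$ path in $H^*$ --- in a tree oriented toward its root you only get directed paths to the root --- so one must additionally argue that such a vertex forces either $i'$ itself, or a node of $\Pc(i')$ whose out-arc was deleted, to be the root of an acyclic component containing $i''$, and then concatenate the $i''\leadsto\mathrm{root}$ path in $H^*_l$ with the center-edge path to $i'$.

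The paper's proof sidesteps both containment claims entirely: since both paths consist only of center edges, each facility has at most one incoming and one outgoing center edge, and $i'=\epath(i')$, any intersection forces the $s_i\leadsto o_{\next(i)}$ path to be a subpath of $\Pc(i')$; then $i$, $i'$, $i''$ all lie in one component of $H^*$, and since the unique out-arc $(i',i'')$ of $i'$ in $H$ is neither a center edge nor an edge into an $S_3$-location (as $s_{i''}\in S_1\setminus S_3$), $i'$ is the root of that component, so there is an $i''\leadsto i'$ path in $H^*$, contradicting $i'\in\cand(i'')$. If you prefer to keep your containment-style argument, it can be repaired by working with the entire component of $H^*_l$ containing $T$ instead of $T$ itself: the paths $Z_i$ do stay in that component (their edges are center edges into $S_2$-facilities, which are never deleted), and the corrected argument above shows $\Pc(i')$ avoids that whole component; but as written, the step ``$W_{i'}$ lies outside $T$ while $Z_i$ lies inside $T$'' does not go through.
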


\begin{proof} 
For any subtree $T$, the paths between $s_i$ and $o_{\next(i)}$ for the facilities 
$i\in T$ with $s_i \in S_0 \cup S_1$ are vertex-disjoint by definition of $\next(i)$.
Also, for any two distinct $i_1, i_2 \in T$, and any $i\in\cand(i_1),\  i'\in\cand(i_2)$, 
we have $\spath(i)\neq\spath(i')$, and so their associated paths $\Pc(i)$ and $\Pc(i')$
are also vertex-disjoint. 

Finally, consider any $i\in T$ with $s_i\in S_0\cup S_1$, and $i''\in T$ ($i''$ could be
$i$) with $s_{i''}\in S_1\sm S_3$. Consider the paths involved in
$\swap(s_i,o_{\next(i)})$ and $\swap(s_{\spath(i')},o_{i'})$, where $i' \in \cand(i'')$. 
Note that both of these paths consist of only center edges. Therefore, since each facility
has at most one incoming and one outgoing center edge, and $i'=\epath(i')$, if these paths are
not vertex-disjoint, then it must be that the path involved in $\swap(s_i,o_{\next(i)})$ 
is a subpath of the path involved in $\swap(s_{\spath(i')}, o_{i'})$. 
This means that $i$ and $i'$, and hence, $i, i', i''$, are all in the same component of
$H^*$. Also, the edge $(i',i'')$ is not in $H^*$ so $i'$ is the root of its component in
$H^*$. But then there is a path from $i''$ to $i'$, which contradicts that
$i'\in\cand(i'')$.   
\end{proof}

We need to coordinate the swaps for the various subtrees of $H^*_l$.  
Consider a component $Z$ in $H^*$. Let $C=\es$ if $Z$ is rooted at a node, otherwise 
let $C$ be its cycle of root nodes. 
Let $i_1,\ldots,i_k$ be the facilities on $C$
with $s_i\in S_1$, indexed by order of appearance on $C$ starting from an arbitrary
facility $i_1$ on $C$ ($k=0$ if $C=\es$). We consider four kinds of swaps.
\begin{list}{{\bf Type \arabic{enumi}.}}{\usecounter{enumi} \topsep=0.5ex \itemsep=0ex
    \leftmargin=0ex \settowidth{\labelwidth}{{\bf Type 4. }}
    \itemindent=\labelwidth \addtolength{\itemindent}{0.5ex}} 
\item 
If $1\leq k\leq t^2$, simultaneously do $\swap(X_T, Y_T)$ for all
subtrees $T$ rooted at some $i \in C$ with $s_i \in S_1$. 

\item Otherwise, if $k>t^2$, 
define $I_{l'} = \{i_{l'}, i_{l'+1}, \ldots, i_{l'+t^2-1}\}$ for all $l'=1,\ldots,k$ 
(where the indices are $\,\bmod k$).
Simultaneously perform $\swap( X_T, Y_T)$ for each
subtree $T$ rooted at a facility in $I_{l'}$. 
Reasoning similarly as in Lemma \ref{lem:disjoint} and noting that
the subtrees involved in a single type-1 or type-2 swap are all disjoint, we can see that   
all shift paths involved in a single type-1 or type-2 swap are vertex-disjoint.

\item
For each $i$ with $s_i \in S_2$, simultaneously perform $\swap(X_T, Y_T)$ for all
subtrees $T$ rooted at some $i'$ with $o_{i'} \in \capt(s_i) \setminus \{\cent(s_i)\}$.
At the same time, we also swap out $s_i$ and swap in $o_{i''}=\cent(s_i)$ for a
total of at most $t^{t^2+1} + 1 \leq p$ swaps. 
It may be that some (at most one) shift path in this swap includes $s_i$, but then we just move
$i''$ to $o_{i''}$ instead of $s_i$, and then move $i$ according to the shift operation.

\item Finally, for every other subtree $T$ of $H^*_l$ that was not swapped in the previous
cases, perform $\swap(X_T, Y_T)$ on its own. 
\end{list}

%\vspace{-1.5ex}
\paragraph{Analysis.}
We first bound the net client-assignment cost increase for any single one of these test
swaps. So, fix one such swap, let $\{T_r\}_{r=1}^k, k \leq t^2$ be the set of subtrees
involved in the swap, and let $B$ denote the set of facilities $i$
such that $o_i = \cent(\sigma(o_i))$ and $\sigma(o_i)$ is closed during the swap while $o_i$ is not opened.
So $B$ consists of facilities with a center edge to some leaf of some subtree $T_r$ or,
if the swap is of type 2, to the start of an interval $I_{l'}$.
For this swap, let $C_1 = \{j\in\D: \sigma^*(j) {\rm ~is~opened}\}$,
$C_2 = D^*(\{o_i: i\in B\})$, and 
$C_3 = \{j: \sigma(j) = s_i\in S_0\text{ and }\epath(i)\in\bigcup_r\bigcup_{i'\in T_r}\cand(i')\}$. 

\begin{lemma} \label{lem:clientbound}
The expected change in client-assignment cost for a test swap is
at most $\sum_{j \in C_1} (c^*_j - c_j) +  \sum_{j \in C_2} 2c^*_j +
\frac{1}{t-1}\sum_{j \in C_3} 2c^*_j +  2t\sum_{i \in B} \left(f^*_i + f_i\right)$.
Here, the expectation is over the random choices involved in selecting facilities from the
appropriate $\cand(.)$ sets.
\end{lemma}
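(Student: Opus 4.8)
The plan is to bound the cost increase by exhibiting, for each client $j$ in the affected set $\widehat D = C_1 \cup C_2 \cup C_3 \cup D\bigl(\bigcup_r X_{T_r}\bigr)$, an explicit cheap reassignment to one of the new final locations, and then sum these bounds over the four categories. Clients not in $\widehat D$ retain a serving location (either $\sigma(j)$ itself is still open, or none of $\sigma(j), \sigma^*(j), \sigma(\sigma^*(j))$ was closed/relevant) so they contribute nothing. First I would dispatch the clients in $C_1$: assign each $j \in C_1$ to $\sigma^*(j)$, which is now open by definition of $C_1$, for a change of exactly $c^*_j - c_j$. This handles the $\sum_{j\in C_1}(c^*_j-c_j)$ term and, importantly, absorbs \emph{all} clients whose optimal location got opened, so in the remaining cases we may assume $\sigma^*(j)$ is not opened.

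Next, for a client $j$ with $\sigma(j) = s$ closed but $j \notin C_1$, the natural target is $s' = \sigma(\sigma^*(j))$, giving change $\le 2c^*_j$ by Lemma~\ref{reasgn} — \emph{provided} $s'$ is still open. The key structural point (exactly as in the proof of Lemma~\ref{s1lem}) is that if $s'$ is \emph{not} open, then $o = \sigma^*(j)$ must be one of the closed/uncovered centers, i.e.\ $o$ lies in $\capt(s'')$ for a closed $s''$ but $o$ was not itself opened. There are two sub-cases for such a "stranded" client. If $s' \in S_3$ sits at a \emph{leaf} of one of the subtrees $T_r$ (equivalently, the center edge into $s'$ was cut, or $s' = s''$ with $s'' \in S_1\setminus S_3$ a leaf), then the facility $i$ with $o_i = \cent(s')$ lies in $B$, and $j \in D^*(o_i) \subseteq C_2$; we reassign $j$ to $s_i$ (the local location of that facility) at cost bounded via the triangle inequality by $c(j,s') + c(s', o_i) + c(o_i, s_i) \le c^*_j + (c_j + c^*_j) + (f^*_i + f_i)$ — but since $s' \in S_3$ we have $|D^*(\capt(s'))| \le t$, so $c(s',o_i) \le c(s', o') \le c_j + c^*_j$ can in fact be averaged over $D^*(o_i)$; iterating along the at-most-$t^2$ center-walk back from $o_i$ and using that a facility gets moved only when its $S_1$-location is on the closed interval/subtree (probability $\le 1/t^2$ after we average over $l$) accumulates the $2t(f^*_i + f_i)$ bound. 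The remaining stranded clients are those served (in $S$) out of a location $s_i \in S_0$ that is the start of a path $\Pc(i')$ for $i' \in \cand(\cdot)$ of some involved subtree — this is precisely $C_3$; since $i'$ is chosen uniformly from $\cand(i'')$ and $|\cand(i'')| \ge |\capt(s_{i''})| - 2 \ge t - 1$ when $s_{i''}\in S_1\setminus S_3$, such an $s_i$ is closed with probability at most $\frac{1}{t-1}$, and when closed we reassign $j$ to $s'= \sigma(\sigma^*(j))$ (now open, since this closure did not strand $\sigma^*(j)$) at cost $\le 2c^*_j$, yielding $\frac{1}{t-1}\sum_{j\in C_3} 2c^*_j$ in expectation.

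Finally I would collect: clients in $D\bigl(\bigcup X_{T_r}\bigr)$ not yet handled fall into the first bullet ($2c^*_j$, charged once via the $C_1$/reassignment-to-$s'$ argument — note a client may be double-counted across $C_2$ and the generic $2c^*_j$ term, which only helps the inequality), clients in $C_1$ give $c^*_j - c_j$, clients in $C_2$ give $2c^*_j$ plus their share of the facility-movement terms, and clients in $C_3$ give $\frac{2c^*_j}{t-1}$ in expectation; summing produces exactly the claimed bound. The main obstacle I anticipate is the bookkeeping in the middle sub-case: carefully verifying that following the center-walk from a stranded $s' \in S_3$ back toward the root of its $H^*$-component (i) stays within a single subtree $T_r$, (ii) has length at most $t^2$ so the facility-cost accumulation is $O(t)$ per unit of $(f^*_i + f_i)$, and (iii) that averaging over the index $l \in \{1,\dots,t^2\}$ — which shifts which $S_1$-location layer is cut — makes each fixed facility $i$ land in $B$ for at most one value of $l$, so that in expectation over $l$ its contribution is scaled by $1/t^2$ and, combined with the $t$ from the walk length, gives the stated $2t(f^*_i+f_i)$ (rather than $2t^2$ or worse). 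This is the same amortization idea used for interval swaps in Section~\ref{5apx}, now applied one level up to subtree swaps, and getting the constants to line up is the delicate part; everything else is a routine application of Lemma~\ref{reasgn} and the triangle inequality.
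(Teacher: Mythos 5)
Your overall decomposition does mirror the paper's: send $C_1$ to $\sigma^*(j)$, send other displaced clients to $\sigma(\sigma^*(j))$ via Lemma~\ref{reasgn}, note that a $C_3$ client loses $\sigma(j)$ only with probability at most $\frac{1}{t-1}$, and treat separately the ``stranded'' clients whose fallback $\sigma(\sigma^*(j))$ is itself closed. But the step that actually produces the term $2t\sum_{i\in B}(f^*_i+f_i)$ is where your plan has a genuine gap. The paper's argument is entirely local to a single swap: for $i\in B$ it argues that \emph{one of $s_i$ or $o_i$ is open} after the swap, moves each stranded client one extra hop from $\sigma(o_i)$ to whichever of $s_i,o_i$ is open, at extra distance at most $f^*_i+f_i+c(o_i,\sigma(o_i))\le 2(f^*_i+f_i)$, and then uses that $i\in B$ forces $\sigma(o_i)\in S_3$, so $|D^*(\capt(\sigma(o_i)))|\le t$ and at most $t$ clients are rerouted this way per $i\in B$; multiplying gives $2t(f^*_i+f_i)$ with no averaging at all. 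You instead propose to ``iterate along the at-most-$t^2$ center-walk back from $o_i$'' and to invoke averaging over the index $l$ (so that a facility lands in $B$ with weight $1/t^2$) to make the constants come out to $2t$. That machinery cannot be used here: the lemma is a per-swap statement whose expectation is only over the random $\cand(\cdot)$ choices, while the averaging over $l$ happens later, in Lemma~\ref{lem:client}, \emph{on top of} the bound you are being asked to prove. Without the two facts you do not establish --- that one of $s_i,o_i$ is guaranteed open, and that the number of affected clients per $i\in B$ is at most $t$ because $\sigma(o_i)\in S_3$ --- your walk-plus-averaging scheme does not yield the stated $2t\sum_{i\in B}(f^*_i+f_i)$ for a fixed swap, and your own closing worry about ``getting the constants to line up'' is precisely this unresolved step.

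A second mismatch: your final accounting charges a generic $2c^*_j$ to every client whose $\sigma(j)$ is closed, and you assert that stranded clients satisfy $j\in D^*(o_i)\subseteq C_2$, i.e.\ that their optimal location is the \emph{center} of the closed location, without justifying why $\sigma^*(j)$ could not be a non-center element of $\capt(\sigma(\sigma^*(j)))$. The inequality you must prove contains $2c^*_j$ terms only for $j\in C_2$ (and, scaled by $\frac{1}{t-1}$, for $C_3$); the paper's proof correspondingly reassigns and charges only the clients in $C_1\cup C_2\cup C_3$, and never introduces a blanket $2c^*_j$ term over $D(X)$. As written, your argument would establish a bound with additional terms rather than the stated one, so you need either to show that the extra clients you charge are already covered by the listed sets or to restrict the reassignment exactly as the paper does.
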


\begin{proof}
After the swap, we move every $j \in C_1$ from $\sigma(j)$ to $\sigma^*(j)$
for a cost change of $c^*_j - c_j$. Every client in $j \in C_2\cup C_3$ for which
$\sigma(j)$ is closed is moved initially to $\sigma(\sigma^*(j))$ for a cost increase of
at most $2c^*_j$. 

Suppose $i$ is such that $\sigma(o_i) = \sigma(\sigma^*(j))$ and $o_i=\cent(\sg(o_i))$.
It may be that $\sigma(o_i)$ is still not open which means that $i \in B$. 
Note that either $s_i$ or $o_i$ is opened after the shift and we move every client that
was moved to $\sigma(o_i)$ to $s_i$ or $o_i$ (whichever is open).
This extra distance moved is at most $f^*_i + f_i + c(o_i, \sigma(o_i)) \leq 2f^*_i + 2f_i$.
Note that $i\in B$ implies that $\sg(o_i)\in S_3$, otherwise $\sg(o_i)$ would not have  
been closed down in the swap. So $|D^*(\capt(\sg(o_i)))| \leq t$, by definition of $S_3$,
and at most $t$ clients will be moved to either $s_i$ or $o_i$ in this manner. 

Finally, we note that while $j \in C_3$ may have $\sigma(j)$ being closed,
this only happens with probability at most $\frac{1}{t-1}$. 
\end{proof}

Now, we consider the following weightings of the swaps. First, for a particular index
$1 \leq l \leq t^2$ we perform all type 1, 3, and 4 swaps. For a
component of $H^*_l$ containing a cycle $C$, we perform all type-2 swaps
for the various intervals $I_{l'}$ for $C$ and weight the client and facility cost
change by $\frac{1}{t^2}$. Finally, these weighted bounds on the client and facility cost change
are averaged over all $1 \leq l \leq t^2$.

\begin{lemma} \label{lem:client}
The expected change in client-assignment cost under the weighting described above, is at most
$\sum_j 3c^*_j - c_j + O\left( \frac{1}{t} \right) \bigl(\sum_i (f^*_i + f_i) + \sum_j c^*_j\bigr)$.
\end{lemma}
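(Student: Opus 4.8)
The plan is to bound the total weight with which each location $s\in S$ is swapped out and each location $o\in O$ is swapped in across all the test swaps, and then to invoke Lemma~\ref{lem:clientbound} with these weights, aggregating the four error terms $\sum_{C_1}(c^*_j-c_j)$, $\sum_{C_2}2c^*_j$, $\frac{1}{t-1}\sum_{C_3}2c^*_j$ and $2t\sum_{i\in B}(f^*_i+f_i)$ over all swaps (and over all indices $1\le l\le t^2$, divided by $t^2$). First I would argue that every $o\in O$ is opened to total weight exactly $1$: if $o=\cent(s_{i'})$ for some $i'$ then $o=o_{\next(i)}$ for the unique $i$ with $\next(i)$ leading into this center-chain, and $i$ lies in a unique subtree $T$, which is swapped exactly once (type~1, type~3, or type~4 with weight $1$, or type~2 with total weight $t^2\cdot\frac{1}{t^2}=1$ over the $k$ intervals — here the averaging over $l$ is what makes type-2 contribute total weight $1$); if $o\in\capt(s_i)$ is not a center, it is opened only as $o_{i'}$ for $i'\in\cand(i)$ inside some $\shift$, again to total weight $1$. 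This gives the $\sum_j (c^*_j-c_j)$ contribution from $C_1$, with total coefficient exactly $1$ on each $c^*_j-c_j$.

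Next I would bound the weight with which each $s\in S$ is closed. The key point of the refined construction is that $s\in S_3$ is added to $X_T$ for the unique subtree $T$ of $H^*_l$ containing the facility $i$ with $s_i=s$, and $s\in S_1\setminus S_3$ is only hit as $s_{\spath(i')}$ for $i'\in\cand(i)$ — but by Lemma~\ref{lem:disjoint} and the type-2/type-3 disjointness reasoning, $s$ is closed (to an extent of $1$) by at most \emph{two} swaps overall, and more precisely: a location $s=s_i\in S_0$ that is $\spath$ of its center-path can be closed by the interval/subtree swap that initiates the shift ending at $s$ \emph{and} by the swap that puts $o_{\next(i)}$ in; but the recursion is designed exactly so that when $\sigma(\epath(Z))\in S_3$, we now initiate a subtree swap rather than a path swap, so $s$ appears in the \emph{same} subtree as that $S_3$-location, hence is swapped out only once. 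Averaging over $l$, each $s\in S_1\setminus S_3$ is the closed endpoint of a shift with probability at most $\frac{1}{t-1}$ (the $\cand(i)$ choice), and each $s_i\in S_3$ lands in a deleted-edge layer, hence becomes a leaf of a subtree, with probability at most $\frac{1}{t^2}$ over $l$. This controls $C_2$ and $C_3$: $C_3$-clients contribute $\frac{1}{t-1}\cdot 2c^*_j$ per relevant swap and there are $O(1)$ such swaps, giving an $O(1/t)$ coefficient; the $C_2$-term $\sum_{i\in B}2t(f^*_i+f_i)$ is incurred only when $\sigma(o_i)\in S_3$ is a leaf of the relevant subtree, which happens with probability $O(1/t^2)$ over the choice of $l$, so after averaging the coefficient on each $f^*_i+f_i$ is $2t\cdot O(1/t^2)=O(1/t)$; similarly each $o_i$ with $i\in B$ satisfies $|D^*(o_i)|\le t$, so the $\sum_{C_2}2c^*_j$ term is at most $O(1/t^2)\cdot t\cdot$ (a single $c^*_j$ per client) $=O(1/t)\sum_j c^*_j$ after averaging over $l$.

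Assembling: summing Lemma~\ref{lem:clientbound} over the (at most constantly many, per client/facility) swaps touching any fixed client or facility, with the stated weights, the $C_1$ terms total exactly $\sum_j(c^*_j-c_j)$; each $s\in S$ being closed to total extent at most $2+o(1)$ and each closure of $s=\sigma(j)$ contributing $2c^*_j$ via the ``move to $\sigma(\sigma^*(j))$'' reassignment, the closed-location reassignments total at most $(4+o(1))\sum_j c^*_j$ — but I'd refine this: because each $o\in O$ is opened to extent exactly $1$ and we now close each $s$ to extent $1+o(1)$ (the whole point of the recursion), the correct bottleneck is $2c^*_j$ from closure plus $c^*_j-c_j$ from opening $\sigma^*(j)$, i.e.\ $3c^*_j-c_j$, and the leftover $C_2,C_3$ and $B$-terms are all absorbed into $O(1/t)\bigl(\sum_i(f^*_i+f_i)+\sum_j c^*_j\bigr)$ as above. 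The main obstacle I anticipate is the bookkeeping in the second paragraph: verifying that the recursion-to-depth-$t^2$ construction genuinely reduces the closure-extent of $S_0$-locations from $2$ to $1+o(1)$ — i.e.\ that the subtree containing an $S_3$-location also contains the $S_0$-path-start that used to be double-charged, and that no new double-charging is introduced at the recursion leaves (handled by the $\frac{1}{t^2}$-averaging over $l$). Everything else is the same amortization idea already used for interval swaps in Section~\ref{5apx}.
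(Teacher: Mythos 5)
Your overall plan---bound the extent to which each location is opened/closed across the weighted collection of swaps and then aggregate Lemma~\ref{lem:clientbound}---has the same shape as the paper's proof, and your first paragraph (each $o\in O$ opened to extent exactly $1$, the type-2 intervals contributing $t^2\cdot\tfrac{1}{t^2}=1$ after the averaging) matches the paper's treatment of $C_1$. The paper's proof then simply observes that, for each fixed $l$, every client lies in $C_2$ to extent at most $1$ and in $C_3$ to extent at most $1$, and that each facility lies in $B$ with weight at most $\tfrac{1}{t^2}$ (for cycle facilities because it is in $B$ for at most one interval offset $l'$ and type-2 swaps carry weight $\tfrac{1}{t^2}$; for non-cycle facilities because they lie in $B$ for at most one index $l$ and we average over the $t^2$ values of $l$). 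Summing the four terms of Lemma~\ref{lem:clientbound} with these extents gives $(c^*_j-c_j)+2c^*_j$ per client plus $O(1/t)$ error terms, i.e.\ the stated bound; the \emph{full-weight} $\sum_{j\in C_2}2c^*_j$ term is exactly where the ``$3$'' comes from.

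Your proposal deviates here, and the deviation is a genuine gap. You claim the $\sum_{j\in C_2}2c^*_j$ term is absorbed into $O(1/t)\sum_j c^*_j$, and you then recover the missing $2c^*_j$ by charging ``$2c^*_j$ per closure of $\sigma(j)$'' with closure extent $1+o(1)$. But no per-closure term of that form appears in the bound of Lemma~\ref{lem:clientbound} that you invoke, so your assembled inequality does not follow from that lemma together with your extent claims; you are implicitly relying on a different per-swap bound that you never state or prove. The bookkeeping in your second paragraph is also off: a location in $S_1\setminus S_3$ is never itself swapped out (only the randomly chosen $s_{\spath(i')}\in S_0$ is, with probability at most $\tfrac{1}{t-1}$---those clients are precisely $C_3$), and an $S_3$-location in a swapped subtree is closed \emph{deterministically}, to extent $1$; the $\tfrac{1}{t^2}$-over-$l$ event is that it is a \emph{leaf} of its subtree, which governs membership in $B$ (and hence the $f^*_i+f_i$ error term), not whether it is closed. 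Likewise, ``$O(1/t^2)\cdot t\cdot$ one $c^*_j$ per client'' conflates the cardinality bound $|D^*(o_i)|\leq t$ with a multiplier on each client's $c^*_j$. The obstacle you flag yourself (that the recursion brings the closure extent of the $S_0$ path-starts down from $2$ to $1+o(1)$) is true but unproved in your write-up, and in any case it is not the quantity the paper's proof tracks here: the per-swap charging is already packaged into $C_1,C_2,C_3,B$, and what remains is the extent count described above.
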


\begin{proof}
For a fixed $l$, every client $j$ is in $C_1$ as in Lemma \ref{lem:clientbound} to an extent of 1;
either once in a type 1, 3, or 4 swap or exactly $t^2$ times among the type-2 swaps, each
of which is counted with weight $\frac{1}{t^2}$.
Similarly, every client $j$ is in $C_2$ to an extent of at most 1 and is in $C_3$ to an extent of
at most 1 over all swaps for this fixed $l$. Finally, we note each facility $i$ on a cycle in $H^*$
lies in the set $B$ for at most one offset $1 \leq l' \leq k$ for that cycle, so its contribution
$2t (f^*_i + f_i)$ to the bound is only counted with weight $\frac{1}{t^2}$ for this fixed $l$.

Lastly, every facility $i$ not on a cycle
in $H^*$ lies in $B$ for at most one index $l, 1 \leq l \leq t^2$ and, then, in only one swap
for that particular $l$. Since we average the bound over all indices $l$ between 1 and $t^2$,
the contribution $2t(f^*_i + f_i)$ of such $i$ is counted with weight only $\frac{1}{t^2}$.
\end{proof}

Next we bound the expected facility movement cost change.
Let $F'$ be the set of facilities $i$ that do not lie on a cycle in $H^*$
consisting solely of facilities $i'$ with $s_{i'} \in S_2$. 

\begin{lemma} \label{lem:fac}
{The expected change in movement cost (under the weighting described above)
is at most $\sum_{i \in F'} \left(f^*_i  - f_i\right) + 
O\left(\frac{1}{t}\right) \bigl(\sum_i f^*_i + \sum_j(c^*_j + c_j)\bigr)$.
}
\end{lemma}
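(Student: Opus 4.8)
The plan is to bound the expected total facility-movement cost of the combined collection of test swaps by carefully accounting for each occurrence of an $f^*_i$ or $f_i$ term in the shift inequalities that these swaps generate. Recall that the movement cost of a single $\swap(X_T,Y_T)$ decomposes, via Lemma~\ref{lem:disjoint}, into vertex-disjoint shift operations. For each $i\in T$ with $s_i\in S_3$ we do $\shift(s_i,o_{\next(i)})$; by \eqref{shiftm} this contributes $\sum_{i''\in Z}2f^*_{i''}-c(o_{\next(i)},\sg(o_{\next(i)}))$ where $Z$ is the underlying $s_i\leadsto o_{\next(i)}$ subpath of center edges. For each $i\in T$ with $s_i\in S_1\sm S_3$ we additionally do $\shift(s_{\spath(i')},o_{i'})$ for the randomly chosen $i'\in\cand(i)$, pay $c(o_{i'},s_i)$ to move $i'$ into $s_i$, and then $\shift(s_i,o_{\next(i)})$; the extra path $\Pc(i')$ contributes another $\le\sum_{i''\in\Pc(i')}2f^*_{i''}$, but since $i'$ is chosen uniformly from a set of size $|\cand(i)|\ge|\capt(s_i)|-2>t-2$ (as $s_i\in S_1\sm S_3$ means $|\capt(s_i)|>t$), each fixed path $\Pc(i')$ is hit with probability only $O(1/t)$, so in expectation this is an $O(1/t)$-fraction of the global optimum. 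The term $c(o_{i'},s_i)\le f^*_{i'}+f_{i'}+c(o_{i'},s_{i'})$ is likewise down-weighted by $O(1/t)$ in expectation, leaving an $O(1/t)\sum_i(f^*_i+f_i)$ contribution; and one also uses $c(o_{i'},s_{i'})\le c_j+c^*_j$ for $j\in D^*(o_{i'})$, contributing to the $O(1/t)\sum_j(c^*_j+c_j)$ slack.

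Next I would handle the ``$2f^*$'' contributions from shifts. Every facility $i''$ appears in the shift-path of some $\swap(X_T,Y_T)$ only when that path is a center-edge subpath through $i''$; by the disjointness of Lemma~\ref{lem:disjoint} (and the observation after Type~2 that the subtrees in a single swap are disjoint), for a fixed index $l$ each $i''$ is on at most one such $s_i\leadsto o_{\next(i)}$ path and at most one $\Pc(i')$ path across all the test swaps of that $l$. The $s_i\leadsto o_{\next(i)}$ paths partition (a subset of) the center edges, so summing the $\sum 2f^*_{i''}$ terms over all subtrees gives at most $2\sum_{i''}f^*_{i''}$ for a fixed $l$; but crucially these are cancelled against the matched $-c(o_{\next(i)},\sg(o_{\next(i)}))$ terms and the $-f_i$ terms hidden in \eqref{shiftm}'s derivation (using $c(o_{i''},s_{i''})\le f^*_{i''}+f_{i''}$), so that what survives per facility on $F'$ is exactly the telescoped $f^*_i-f_i$, as in the $5$-approximation analysis. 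The $\Pc(i')$ paths, as noted, contribute only $O(1/t)$ in expectation.

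The key new bookkeeping concerns facilities $i''$ with $s_{i''}\in S_2$ that lie strictly inside a shift path: for these, $-c(o_{i''},\sg(o_{i''}))$ is negative, and by Claim~\ref{s2dist} (with $\cent$ now redefined as the closest captured location) we have $c(o_{i''},\sg(o_{i''}))=c(s_{i''},\cent(s_{i''}))\le\frac1t\sum_{j\in D^*(\capt(s_{i''}))}(c_j+c^*_j)$, which is absorbed into the $O(1/t)\sum_j(c^*_j+c_j)$ term. For $s_i\in S_2$ in a Type~3 swap we additionally open $\cent(s_i)$ while closing $s_i$, paying $c(s_i,\cent(s_i))$, again bounded by Claim~\ref{s2dist}; and the subtrees rooted at facilities $i'$ with $o_{i'}\in\capt(s_i)\sm\{\cent(s_i)\}$ contribute their own telescoped $f^*-f$ on $F'$ plus $O(1/t)$ slack. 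The definition of $F'$ precisely excludes facilities trapped on an all-$S_2$ cycle in $H^*$, for which no shift path starts — these are handled by Lemma~\ref{cycleswap}-type cycle moves giving $0\le\sum(f^*_i-f_i)+O(1/t)\sum_j(c^*_j+c_j)$.

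Finally, I would verify the weighting: for a fixed $l$, every facility outside a cycle lies in at most one shift path across all of Type~1,3,4; for a cycle component of $H^*$ with $k>t^2$ root-$S_1$-facilities, each facility lies in the intervals $I_{l'}$ for $O(t^2)$ offsets $l'$, but each offset's swap is weighted $1/t^2$, so the net is $O(1)$; averaging over $l\in\{1,\dots,t^2\}$ then gives each surviving $f^*_i-f_i$ coefficient exactly $1$ and all slack terms coefficient $O(1/t)$, proving the claimed bound $\sum_{i\in F'}(f^*_i-f_i)+O(1/t)\bigl(\sum_i f^*_i+\sum_j(c^*_j+c_j)\bigr)$. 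The main obstacle I expect is the careful argument that, for each fixed $l$, the shift paths arising from \emph{different} subtrees in \emph{different} Type~1/3/4 swaps are still vertex-disjoint across those swaps (not just within one swap) — this requires extending the reasoning of Lemma~\ref{lem:disjoint} using that a center edge $(i',i'')$ with $s_{i''}\in S_1$ is cut in $H^*_l$ exactly according to its $S_1$-depth $l\bmod t^2$, so that two subtrees cannot share a center path — together with bounding how many distinct swaps a given $i$ participates in so the coefficients come out right.
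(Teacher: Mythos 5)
Your plan is essentially the paper's proof: decompose each test swap's movement cost into its constituent shift operations, keep the telescoped $f^*_i-f_i$ per facility, charge the interior center-edge terms $c(o_i,\sg(o_i))$ to clients via Claim~\ref{s2dist} (valid because every interior target location of a shift path in a subtree lies in $S_2$ and is the head of a center edge), bound the extra movement caused by the random choice from $\cand(\cdot)$ by its $O(1/t)$ probability, bound the Type-3 move of $i$ from $s_i$ to $\cent(s_i)$ by Claim~\ref{s2dist}, exclude the all-$S_2$ cycles via $F'$, and verify the extents under the weighting. One passage, though, is internally inconsistent: in your second paragraph you claim the $2f^*$ bound of \eqref{shiftm} gets ``cancelled'' back down so that ``what survives per facility on $F'$ is exactly the telescoped $f^*_i-f_i$.'' That cannot work as stated: once you apply $c(o_{i''},\sg(o_{i''}))\le f^*_{i''}+f_{i''}$ you have irrevocably lost the $-f_{i''}$, and nothing restores it. The correct step --- which your third paragraph then carries out, and which is what the paper does --- is to never apply that crude bound on subtree shift paths, but to use Claim~\ref{s2dist} on each interior center edge, keeping the exact $\sum_i(f^*_i-f_i)$ from the shift.

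Two further points need repair. First, in the random-$\cand$ events your bookkeeping leaves an $O(1/t)\sum_i f_i$ slack, which does not appear in the lemma statement; the paper avoids it by noting that the $-f_{i'}$ from moving a facility off $s_{i'}$ cancels the $+f_{i'}$ in $c(o_{i'},\sg(o_{i'}))\le f^*_{i'}+f_{i'}$, so each facility on the randomly chosen path costs at most $2f^*$ in that event, i.e.\ at most $\frac{2f^*_i}{t-1}$ in expectation. (Your weaker slack would still suffice for Theorem~\ref{thm:3appx}, but it proves less than the lemma claims.) Second, since $f^*_i-f_i$ can be negative, ``each facility lies on at most one shift path'' is not enough to deduce the upper bound $\sum_{i\in F'}(f^*_i-f_i)$: you must argue that each $i\in F'$ is moved along its primary shift to extent \emph{exactly} 1. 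This follows because every $s\in S_0\cup S_1$ is swapped out to extent exactly 1 under the weighting (once for a fixed $l$ in Types 1, 3, 4, or $t^2$ times with weight $\frac{1}{t^2}$ in Type 2), and every $F'$-facility with $s_i\in S_2$ lies strictly between $i''$ and $\next(i'')$ for exactly one $i''$, by uniqueness of incoming and outgoing center edges. That same structural uniqueness is what settles the cross-swap disjointness concern you defer to the end --- the paper's per-facility accounting needs no separate disjointness argument beyond Lemma~\ref{lem:disjoint}.
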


\begin{proof} 
We consider two cases for a facility $i$. First, suppose $s_i \in S_0 \cup S_1$.
Then when $s_i$ is swapped out in a subtree during the shift from $s_i$ to $o_{\next(i)}$,
$i$ is moved to either $o_i$, if $i=\next(i)$, or to $\sigma(o_i)$, if $i\neq\next(i)$.
The latter case implies that $\sigma(o_i) \in S_2$. The total movement change is at most
$f^*_i - f_i$ if $i$ is moved to $o_i$ and is at most $f^*_i - f_i + c(o_i, \sigma(o_i))$
if $i$ is moved to $\sigma(o_i)$. 
Since $\sg(o_i)\in S_2$ and $o_i=\cent(\sg(o_i))$, by Claim~\ref{s2dist} we have that  
$c(o_i, \sigma(o_i)) \leq \frac{1}{t} \sum_{j \in D^*(\capt(\sigma(o_i)))} (c^*_j + c_j)$.

The only other time $i$ is moved is when 
$\epath(i)$ is randomly chosen from $\cand(i')$ for
some facility $i'$. But this happens with probability at most $\frac{1}{t-1}$.  In this
case, $i$ is shifted from $s_i$ to $\sigma(o_i)$. 
We do not necessarily have $\sigma(o_i)\in S_2$ in this case, but we can use the bound
$c(o_i, \sigma(o_i)) \leq f^*_i + f_i$ to 
bound the expected movement-cost change for $i$ in this case to be at most
$\frac{2f^*_i}{t-1}$. Overall, the expected movement-cost change for $i$ is at most 
$$
\Bigl(1+\frac{2}{t-1}\Bigr)f^*_i - f_i + 
\frac{1}{t-1} \sum_{j \in D^*\bigl(\capt(\sigma(o_i))\bigr)} (c^*_j+c_j).
$$

Next, we consider the case $s_i \in S_2$. Let $\cent(s_i)=o_{i'}$. When the swap
consisting of $i$ and all subtrees rooted at $\capt(s_i)\setminus\{o_{i'}\}$ is
performed, $i$ is moved from $s_i$ to $o_{i'}$ unless $i$ lies on a shift path during that
swap, in which case it is moved like in the shift. 
Since $s_i \in S_2$, we have $c(s_i,o_{i'})\leq\frac{1}{t}\sum_{j\in D^*(\capt(s_i))}(c^*_j+c_j)$.  
Unless $i$ lies on a cycle with no $S_1$-locations, that is, $i\notin F'$, $i$ lies
between $i''$ and $\next(i'')$ for exactly one $i''$, and 
$\shift(s_{i''},o_{\next(i'')})$ is performed to an extent of 1; this holds even if $s_i$
lies on a shift path during the corresponding type-3 swap involving $i$. 
All other times $i$ when is moved, it is due to the same reasons as in the previous case,
so the total change in movement cost for facility $i$ is at most
$$
\Bigl(1+\frac{2}{t-1}\Bigr)f^*_i - f_i 
+ \frac{1}{t}\sum_{j \in D^*(\capt(s_i))} (c^*_j +c_j) 
+ \frac{1}{t} \sum_{j \in  D^*\bigl(\capt(\sigma(o_i))\bigr)} (c^*_j + c_j).
$$

Adding up the appropriate expression for each facility accounts for the expected change in
total movement cost. 
\end{proof}

\begin{proofof}{Theorem \ref{thm:3appx}}
By local optimality, the change in total cost for every test swap (counting every random
choice) is nonnegative. By averaging over the various swaps, the expected change in
total cost is nonnegative, so the sum of the expressions in Lemmas~\ref{lem:client}
and~\ref{lem:fac} is nonnegative.
To generate an inequality involving a $-f_i$ term for facilities $i\notin F'$, 
we sum the bound given by Lemma~\ref{cycleswap} here 
over all cycles of $H^*$ involving only facilities $i$ with $s_i\in S_2$.  
This yields $0 \leq \sum_{i \not\in F'}\bigl(-f_i+f^*_i+c(o_i,\sigma(o_i))\bigr)$,
and we can bound $c(o_i, \sigma(o_i))$ by $\frac{1}{t}\sum_{j \in D^*(\capt(\sigma(o_i)))} (c^*_j + c_j)$.
Adding this to the inequality that the expected change in total cost is nonnegative gives 
$\left(1-O\left(\frac{1}{t}\right)\right) (C+F) \leq \left(3 +
O\left(\frac{1}{t}\right)\right)C^* + \left(1 + O\left(\frac{1}{t}\right)\right) F^*$.
\end{proofof}

%\vspace{-2ex}
\section{Extension to the weighted case} \label{extn}
The analysis in Section~\ref{3apx} (as also the proof of the 5 approximation) extends
easily to the weighted generalization, wherein each facility $i$ has a weight $w_i\geq 0$
and the cost of moving $i$ to $s$ is given by $w_ic(i,s)$, to yield the same
$\bigl(3+o(1)\bigr)$-approximation guarantee. With the exception of one small difference
in the analysis, this requires only minor changes in the arguments. We discuss these
briefly in this section.

Unless otherwise stated, the same
notation from Section \ref{3apx} is used in this section.  We emphasize that $f^*_i$ and
$f_i$ now refer to the weighted movement cost of facility $i$ in the global or local
optimum, respectively.  So, $f^*_i = w_i \cdot c(i, o_i)$ and $f_i = w_i\cdot c(i,s_i)$.  

One difference in notation is that the definition of $S_1$ is slightly revised to this
weighted setting: $s_i \in S_1$ if $|\capt(s_i)| > t$, or $0 < |\capt(s_i)| \leq t$ and
$|D^*(\capt(s_i))| \leq \max\{w_i, w_{i'}\} \cdot t$, where $i'$ is such that $o_{i'} =
\cent(s_i)$ (equivalently, $(i',i)$ is a center edge in $H$).  If all facility weights are
1, then this definition of $S_1$ agrees with the definition in Section \ref{3apx}.
Similarly, we say $s_i \in S_2$ if $|\capt(s_i)| \leq t$ and $|D^*(\capt(s_i))| >
\max\{w_i, w_{i'}\} \cdot t$. Under these definitions, similar to Claim~\ref{s2dist}, we 
now have that $w_i\cdot c(s_i,o_{i'})\leq\frac{1}{t}\sum_{j\in D^*(\capt(s_i))}(c_j+c^*_j)$ 
(since $c(o_i,\sigma(o_i)) \leq c^*_j + c_j$ for any $j \in D^*(\capt(s_i))$ as
before, and $|D^*(\capt(s_i))|> w_i t$).

We consider the same set of test swaps and the same averaging of the inequalities
generated by these swaps.  When a test swap is performed, we consider the same shift
and reassignment of facilities to generate the inequalities.  In most cases, we also move
the clients in the same way as before with the exception that if a client $j$ has all of
$\sigma(j), \sigma^*(j)$ and $\sigma(\sigma^*(j))$ being closed, then we do not
necessarily send it to $s_{i}$ where $i$ is such that 
$o_{i}=\sg^*(j)$. This is discussed in Lemma~\ref{lem:w_clientbound}.

As in the discussion before Lemma~\ref{lem:clientbound},
we consider a swap involving subtrees $\{T_r\}_{r = 1}^k$.
Let $B$ be as before, and let $B'$ be the set of facilities $i$ such that
$i$ is a leaf in some $T_r$ or, if the swap is a type-2 swap, that $i$ is the
first facility in $I_{l'}$. Note that $i\in B$ if and only if the unique $(i,i')$
arc in $H^*$ is a center arc with $i' \in B'$. We let $C_1, C_2$, and $C_3$
also be defined as in Section \ref{3apx}.

\begin{lemma} \label{lem:w_clientbound}
The expected change in client assignment cost for a test swap is
at most $\sum_{j \in C_1} (c^*_j - c_j) +  \sum_{j \in C_2} 2c^*_j +
\frac{1}{t-1}\sum_{j \in C_3} 2c^*_j +  4t\sum_{i \in B \cup B'} \left(f^*_i + f_i\right).$
\end{lemma}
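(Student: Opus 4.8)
The plan is to mirror the proof of Lemma~\ref{lem:clientbound}, tracking carefully the one place where the weighted setting forces a genuine change, namely the reassignment of a client $j$ all of whose three candidate locations $\sigma(j), \sigma^*(j), \sigma(\sigma^*(j))$ are closed. As before, every $j \in C_1$ is moved from $\sigma(j)$ to $\sigma^*(j)$, contributing $c^*_j - c_j$; and every $j \in C_2 \cup C_3$ whose $\sigma(j)$ is closed is first sent to $\sigma(\sigma^*(j))$ for a cost increase of at most $2c^*_j$ (Lemma~\ref{reasgn}). The factor $\frac{1}{t-1}$ on the $C_3$ term survives unchanged, since it comes purely from the probability that a given $S_1 \setminus S_3$ location is actually the one selected from a $\cand(\cdot)$ set, which is independent of the weights.

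The substantive step is the final redirection. When $i$ is such that $\sigma(o_i) = \sigma(\sigma^*(j))$ is still closed after the shift, we must move the clients that landed at $\sigma(o_i)$ onward to whichever of $s_i, o_i$ is open; the extra distance is at most $f^*_i/w_i + f_i/w_i + c(o_i,\sigma(o_i)) \le 2(f^*_i + f_i)/w_i$ in the \emph{unweighted} count of distance, but the \emph{cost} we can charge depends on $w_i$, and in the weighted case the number of clients rerouted this way is bounded by $|D^*(\capt(\sigma(o_i)))| \le \max\{w_i, w_{i'}\}\cdot t$ rather than just $t$, where $(i',i)$ is the relevant center edge. So the charge is at most $2\,\max\{w_i,w_{i'}\}\cdot t \cdot (\,f^*_i/w_i + f_i/w_i\,)$, and bounding $\max\{w_i,w_{i'}\}/w_i$ crudely forces us to spread the charge over \emph{both} endpoints of the center arc: writing $i \in B$ with its partner $i' \in B'$, the quantity $\max\{w_i,w_{i'}\}(f^*_i + f_i)/w_i$ is at most $(f^*_i+f_i) + (w_{i'}/w_i)(f^*_i + f_i)$, and since $f^*_i + f_i = w_i c(i,o_i) + w_i c(i,s_i)$ while $f^*_{i'} + f_{i'} = w_{i'} c(i',o_{i'}) + w_{i'} c(i', s_{i'})$, one uses the metric relations $c(i,o_i), c(i,s_i) \le c(o_i,\sigma(o_i)) \le c^*_j + c_j$-type bounds together with $c(o_i,\sigma(o_i)) = c(o_i, s_{i'})$ being comparable to $c(i', o_{i'}) + c(i',s_{i'})$, i.e.\ the cross term is absorbed into $f^*_{i'} + f_{i'}$. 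The net effect is that the term $2t\sum_{i\in B}(f^*_i + f_i)$ from Lemma~\ref{lem:clientbound} is replaced by $4t\sum_{i\in B\cup B'}(f^*_i + f_i)$.

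I would carry this out by first restating the three easy contributions verbatim from the proof of Lemma~\ref{lem:clientbound} (they require no modification), then isolating the case where $i \in B$ and deriving the inequality $|D^*(\capt(\sigma(o_i)))| \cdot c(o_i,\sigma(o_i)) \le 2\max\{w_i,w_{i'}\}\,t\,\bigl(c(i,o_i)+c(i,s_i)\bigr) \le 4t\bigl((f^*_i + f_i) + (f^*_{i'} + f_{i'})\bigr)$ using the metric triangle inequality to relate $c(o_i,\sigma(o_i))$ to the movement costs of the center-arc pair $(i',i)$, and finally summing over the at most $k \le t^2$ subtrees. The main obstacle is precisely the bookkeeping in that last displayed inequality: one must be careful that $\sigma(o_i) = s_{i'}$ exactly when $(i,i')$ — or in the center-edge orientation, $(i',i)$ — is a center arc, so that $c(o_i, s_{i'}) \le c(o_i, i') + c(i', s_{i'})$ and $c(o_i,i') = c(i',o_{i'})$ only when $o_i = \cent(s_{i'}) = o_{i'}$; when $i \in B$ this identification holds by definition of $B$, which is what makes the $B'$-charge legitimate and keeps the constant at $4t$ rather than something larger. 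Everything else is a transcription of the unweighted argument with $f_i$ reinterpreted as $w_i c(i,s_i)$.
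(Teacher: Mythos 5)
You correctly identify the two places where weights enter (the client count becomes $|D^*(\capt(\sigma(o_i)))|\leq \max\{w_i,w_{i'}\}\cdot t$, and the set $B'$ must appear in the charge), but the mechanism you propose for charging to $B'$ does not work. In the case $w_{i'}>w_i$ you still send the stranded clients to $s_i$ or $o_i$ and then try to absorb the cross term $w_{i'}\,t\,\bigl(c(i,o_i)+c(i,s_i)\bigr)$ into $f^*_{i'}+f_{i'}=w_{i'}\bigl(c(i',o_{i'})+c(i',s_{i'})\bigr)$ via the triangle inequality. That would require $c(i,o_i)+c(i,s_i)\leq O(1)\cdot\bigl(c(i',o_{i'})+c(i',s_{i'})\bigr)$, and no such relation exists: facility $i$'s movement distances are completely unconstrained by facility $i'$'s. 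The intermediate inequalities you invoke are also reversed or unfounded --- the metric gives $c(o_i,\sigma(o_i))\leq c(o_i,s_i)\leq c(i,o_i)+c(i,s_i)$, not $c(i,o_i),c(i,s_i)\leq c(o_i,\sigma(o_i))$, and there is no bound of the form $c(o_i,s_{i'})\lesssim c(i',o_{i'})+c(i',s_{i'})$; the fact that $(i,i')$ is a center edge only tells you $o_i$ is the member of $\capt(s_{i'})$ closest to $s_{i'}$, which says nothing about how far $i'$ itself travels.

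The paper's proof closes exactly this gap by changing the \emph{reassignment}, not the bookkeeping: it splits on which weight is larger. If $w_i\geq w_{i'}$, the stranded clients at $s_{i'}$ are sent to $s_i$ or $o_i$ (as in the unweighted Lemma~\ref{lem:clientbound}), the per-client distance is at most $2c(i,s_i)+2c(i,o_i)$, the count is $\max\{w_i,w_{i'}\}t=w_i t$, and the charge $2t(f^*_i+f_i)$ goes to $i\in B$. If $w_{i'}>w_i$, the clients are instead sent to $o_{i'}$ or $\sigma(o_{i'})$ (one of which must be open, since $s_{i'}\in S_3$ is closed precisely by performing the shift from $s_{i'}$ towards $o_{\next(i')}$), so the per-client distance is at most $2c(i',s_{i'})+2c(i',o_{i'})$ and the count $w_{i'}t$ yields a charge of $2t(f^*_{i'}+f_{i'})$ against $i'\in B'$ --- the weight in the count and the weight in the distance now belong to the same facility, which is the whole point. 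Since a facility can be hit both as a member of $B$ and of $B'$, the two charges combine to the stated $4t\sum_{i\in B\cup B'}(f^*_i+f_i)$. Your treatment of $C_1$, $C_2$, $C_3$ and of the $\frac{1}{t-1}$ probability factor matches the paper, but without this case split (or some other device that never multiplies $i$'s distances by $i'$'s weight) the final displayed inequality in your plan is simply false, so the proof as proposed does not go through.
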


\begin{proof}
Consider one particular swap.  As in the proof of Lemma~\ref{lem:clientbound}, we move 
$j\in C_1$ to $\sigma^*(j)$ and $j \in C_2 \cup C_3$ to $\sigma(\sigma^*(j))$ and bound
their assignment cost change in the same way. As before, it may
be that for some of these clients $j \in C_2 \cup C_3$ we have that $\sigma(\sigma^*(j))$
was closed in the swap. For such clients, we do the following slight variant
of the reassignment that was done in the proof of Lemma \ref{lem:clientbound}.

Suppose $(i,i')$ is the center edge such that $\sigma^*(j)=o_{i}$ for a
client $j \in C_2 \cup C_3$, and $s_{i'}$ is not open.
If $w_{i} \geq w_{i'}$, then we send $j$ to either $s_{i}$ or $o_{i}$.
As in the proof of Lemma~\ref{lem:clientbound}, one of these must be open
and the total cost of moving $j$ from $s_{i'}$ to either $s_{i}$ or $o_{i}$
is at most $2c(i,s_{i}) + 2c(i,o_{i})$. Otherwise, if $w_{i'} > w_{i}$
then we send $j$ to either $o_{i'}$ or $\sigma(o_{i'})$ (one of them must be open).
The distance from $s_{i'}$ to either $o_{i'}$ or $\sigma(o_{i'})$ is bounded by
$2c(i',s_{i'}) + 2c(i',o_{i'})$.

We conclude by noting that each facility $\hi \in B$ has at most $w_{\hi} \cdot t$ clients
sent to either $s_{\hi}$ or $o_{\hi}$ from $\sigma(o_{\hi})$ in the manner just described,
because $\sg(o_{\hi})$ must be in $S_3$. 
Similarly, each $\hi \in B'$ has at most  $w_{\hi} \cdot t$ clients sent to either $o_{\hi}$
or $\sigma(o_{\hi})$ from $s_{\hi}$ in the manner described above, since $s_{\hi}\in S_3$. 
So, the total client movement charged to $i \in B \cup B'$ this way is at most
$4tw_i(c(i,s_i) + c(i,o_i)) = 4tf^*_i + 4tf_i$.
\end{proof}

Using the same weighting of the swaps as in Section \ref{3apx} we get the following
bound on the contribution of the client movement cost changes over these swaps.
The proof is nearly identical, except we notice that a facility $i'$ lies in the $B'$-set
for various swaps to an extent of at most $\frac{1}{t^2}$ (under this weighting), since
the facility $i$ such that $(i,i')$ is a center edge lies in some $B$-set to an extent of
at most $\frac{1}{t^2}$. 

\begin{lemma} \label{lem:w_client}
The expected total client assignment cost change, weighted in the described manner, is at most
$\sum_j 3c^*_j - c_j + O\left( \frac{1}{t} \right) \bigl(\sum_i (f^*_i + f_i) + \sum_j c^*_j\bigr)$.
\end{lemma}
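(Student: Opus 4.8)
The plan is to mirror the proof of Lemma~\ref{lem:client} verbatim, substituting Lemma~\ref{lem:w_clientbound} for Lemma~\ref{lem:clientbound} as the per-swap bound, and then re-examining only those accounting steps where the new $4t\sum_{i\in B\cup B'}(f^*_i+f_i)$ term differs from the old $2t\sum_{i\in B}(f^*_i+f_i)$ term. The contributions of $C_1$, $C_2$, and $C_3$ are handled exactly as in Lemma~\ref{lem:client}: for a fixed $l$, each client sits in $C_1$ to total extent $1$ (once in a type-1/3/4 swap, or $t^2$ times in weight-$\frac{1}{t^2}$ type-2 swaps), and in $C_2$ and in $C_3$ to extent at most $1$; averaging over the $t^2$ values of $l$ leaves these contributions unchanged, giving the $\sum_j 3c^*_j - c_j$ main term together with the $O(1/t)$ absorption of the $C_3$ coefficient $\frac{2}{t-1}$ (which is $O(1/t)$ times $\sum_j c^*_j$).

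The only genuinely new bookkeeping is the facility term. I would argue that every facility $i$ lies in $B\cup B'$ to an extent of at most $\frac{1}{t^2}$ under the described weighting, just as in Lemma~\ref{lem:client} for $B$ alone. For membership in $B$: a facility $i$ is in $B$ only when $\sigma(o_i)\in S_3$ is closed and $o_i$ is not opened, and as noted before Lemma~\ref{lem:w_clientbound}, $i\in B$ precisely when the arc $(i,i')$ out of $i$ in $H^*$ is a center edge with $i'\in B'$; so it suffices to bound the extent to which any facility lies in a $B'$-set. A facility $i'$ is in $B'$ iff it is a leaf of some subtree $T_r$ in the swap, or the first facility of an interval $I_{l'}$ in a type-2 swap. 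For $i'$ not on a cycle in $H^*$, it is a leaf of a subtree $T_r$ for only one index $l$ (the one matching its $S_1$-depth $\bmod~t^2$) and then in only one swap, so its contribution $4t(f^*_{i'}+f_{i'})$ is counted with weight $\frac{1}{t^2}$ after averaging over $l$. For $i'$ on a cycle $C$ in $H^*$, it is the first facility of the interval $I_{l'}$ for exactly one offset $l'$, and that type-2 swap carries weight $\frac{1}{t^2}$. Hence in all cases the $B\cup B'$ contribution of $i'$ enters with total weight $\frac{1}{t^2}$, and since $|B\cup B'|\le 2|B'|$-style double counting only changes constants, the aggregate facility contribution is $O\bigl(\frac{1}{t^2}\cdot t\bigr)\sum_i(f^*_i+f_i)=O(1/t)\sum_i(f^*_i+f_i)$.

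Assembling these, the weighted-and-averaged expected client-assignment cost change is bounded by $\sum_j(3c^*_j - c_j) + O(1/t)\bigl(\sum_i(f^*_i+f_i)+\sum_j c^*_j\bigr)$, which is the claim (recalling $t=\Theta(\lz^{\text{small}})$, so $O(1/t)=o(1)$).

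I expect the only subtle point — the ``main obstacle'' such as it is — to be verifying carefully that $B\cup B'$ does not inflate the extent beyond $\frac{1}{t^2}$: one must check that a single facility is not simultaneously a leaf of one subtree and the interval-start of another within the same weighted family in a way that stacks weights, and that the identity ``$i\in B\iff$ ($i\to i'$ center edge, $i'\in B'$)'' correctly transfers the $B'$-extent bound to $B$. Once that is pinned down, the rest is a routine re-run of the Lemma~\ref{lem:client} argument with the constant $2t$ replaced by $4t$, which is harmless since it is multiplied by $\frac{1}{t^2}$.
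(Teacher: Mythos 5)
Your overall structure (re-run Lemma~\ref{lem:client} with Lemma~\ref{lem:w_clientbound} in place of Lemma~\ref{lem:clientbound}, keep the $C_1,C_2,C_3$ accounting verbatim, and show the new facility term enters with extent $O(1/t^2)$) is the right one and matches the paper. But the central new claim you make --- ``every facility lies in $B\cup B'$ to an extent of at most $\frac{1}{t^2}$'' --- is false as stated, and this is exactly where your argument fails. The reduction step ``$i'$ is a leaf of a subtree $T_r$ for only one index $l$'' only holds for facilities that have at least one incoming arc in $H^*$ that can be deleted. Any facility $i'$ with \emph{no} incoming arc in $H^*$ --- in particular every $i'$ with $s_{i'}\in S_0$, since $\capt(s_{i'})=\es$ means there is no arc of $H$ into $i'$ at all --- is a leaf of its subtree in $H^*_l$ for \emph{every} $l$ (and its subtree is swapped with total weight $1$ for each $l$), so it lies in $B'$ to extent roughly $1$, not $\frac{1}{t^2}$. (A childless root on a cycle handled by type-1 swaps exhibits the same phenomenon.) Since you use Lemma~\ref{lem:w_clientbound} as a black box, each such $i'$ contributes $4t(f^*_{i'}+f_{i'})$ with constant weight to the aggregated bound, a $\Theta(t)$ term rather than the claimed $O(1/t)$; so your accounting does not yield the lemma. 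Note also that your reduction goes in the wrong direction: you bound the $B$-extent by the $B'$-extent of the head of the center edge and then try to bound the $B'$-extent directly, whereas the sound transfer goes the other way.

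The missing idea (which is the paper's one-line justification) is that the $4t(f^*_{i'}+f_{i'})$ charge to a facility $i'\in B'$ is only ever incurred in the proof of Lemma~\ref{lem:w_clientbound} when there is a center edge $(i,i')$ with $s_{i'}$ closed, $o_i$ not opened and $w_{i'}>w_i$ --- i.e., only when the tail $i$ of that center edge lies in $B$ for that swap. Facilities with no incoming center edge (the $S_0$ leaves above) are in $B'$ only vacuously and receive zero charge. Hence the relevant extent to which $i'$ is charged as a $B'$-facility is at most the extent to which the unique $i$ with $(i,i')$ a center edge lies in $B$, and that extent is at most $\frac{1}{t^2}$ by exactly the argument in the unweighted Lemma~\ref{lem:client} (at most one offset $l'$ per cycle, weighted $\frac{1}{t^2}$, for cycle facilities; at most one index $l$ and one swap, averaged over the $t^2$ values of $l$, otherwise). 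With that refinement --- either by restricting $B'$ in the per-swap bound to the heads of center edges out of $B$, or equivalently by transferring the $B$-extent bound to $B'$ along center edges --- the rest of your write-up goes through and gives $\sum_j (3c^*_j-c_j)+O(\frac{1}{t})\bigl(\sum_i(f^*_i+f_i)+\sum_j c^*_j\bigr)$ as claimed.
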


The contribution of the facility movement costs is bounded in essentially the same way as
in Lemma \ref{lem:fac}.  We just provide the details on how to account for the weights of
the facilities. As before, let $F'$ be the set of facilities $i$ that do not lie on a
cycle in $H^*$ consisting solely of facilities $i'$ with $s_{i'} \in S_2$. 

\begin{lemma} \label{lem:w_fac}
\mbox{
The expected change in movement cost 
is at most $\sum_{i \in F'} \left(f^*_i  - f_i\right) + 
O\left(\frac{1}{t}\right) \bigl(\sum_i f^*_i + \sum_j (c^*_j + c_j)\bigr)$.
}
\end{lemma}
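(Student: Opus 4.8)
The plan is to replay the proof of Lemma~\ref{lem:fac} essentially verbatim, inserting the weight $w_i$ in front of every distance traversed by facility $i$ and replacing each use of Claim~\ref{s2dist} by the weighted analogue stated at the start of this section, namely $w_i\cdot c(s_i,o_{i'})\le\frac1t\sum_{j\in D^*(\capt(s_i))}(c_j+c^*_j)$ for $s_i\in S_2$ with $(i',i)$ a center edge. As before I would split on whether $s_i\in S_0\cup S_1$ or $s_i\in S_2$, and within each case account separately for (a) the canonical move of $i$ that happens when the subtree containing $i$ is swapped out, and (b) the incidental move of $i$ that happens with probability at most $\tfrac1{t-1}$ when $\epath(i)$ is the random element drawn from some $\cand(i')$. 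The averaging over $1\le l\le t^2$ and the cancellation of the residual $-f_i$ for $i\notin F'$ (via Lemma~\ref{cycleswap} applied to the all-$S_2$ cycles of $H^*$, inside the proof of Theorem~\ref{thm:3appx}) are unchanged.

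\textbf{Case $s_i\in S_0\cup S_1$.} When $s_i$ is swapped out, the shift $\shift(s_i,o_{\next(i)})$ moves $i$ to $o_i$ (if $i=\next(i)$), for a weighted change of exactly $f^*_i-f_i$, or to $\sigma(o_i)$ (if $i\neq\next(i)$, which forces $\sigma(o_i)\in S_2$), for a weighted change of at most $f^*_i-f_i+w_i\,c\bigl(o_i,\sigma(o_i)\bigr)$. Writing $\sigma(o_i)=s_{i'}\in S_2$, we have $o_i=\cent(s_{i'})$, so the weighted Claim~\ref{s2dist} bounds $w_{i'}c(s_{i'},o_i)\le\frac1t\sum_{j\in D^*(\capt(s_{i'}))}(c^*_j+c_j)$; since the revised $S_2$-definition demands $|D^*(\capt(s_{i'}))|>\max\{w_i,w_{i'}\}\cdot t\ge w_i t$, the same bound holds with $w_i$ replacing $w_{i'}$, i.e. $w_i\,c\bigl(o_i,\sigma(o_i)\bigr)\le\frac1t\sum_{j\in D^*(\capt(\sigma(o_i)))}(c^*_j+c_j)$. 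For the incidental move, $i$ shifts from $s_i$ to $\sigma(o_i)$, where only the crude bound $w_i\,c(o_i,\sigma(o_i))\le w_i\,c(o_i,s_i)\le f^*_i+f_i$ is available (using that $\sigma(o_i)$ is the $S$-point nearest $o_i$), so this contributes at most $\tfrac{2f^*_i}{t-1}$ in expectation. Summing, the expected weighted movement change of such an $i$ is at most $\bigl(1+\tfrac2{t-1}\bigr)f^*_i-f_i+\tfrac1{t-1}\sum_{j\in D^*(\capt(\sigma(o_i)))}(c^*_j+c_j)$.

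\textbf{Case $s_i\in S_2$ with $\cent(s_i)=o_{i'}$.} In the type-3 swap for $i$, facility $i$ moves from $s_i$ to $o_{i'}$ (or along a shift path, which is no worse), and the weighted Claim~\ref{s2dist} analogue bounds $w_i\,c(s_i,o_{i'})\le\frac1t\sum_{j\in D^*(\capt(s_i))}(c^*_j+c_j)$. If $i\in F'$ then $i$ additionally lies on exactly one shift path $\shift(s_{i''},o_{\next(i'')})$, adding $f^*_i-f_i+\frac1t\sum_{j\in D^*(\capt(\sigma(o_i)))}(c^*_j+c_j)$; for $i\notin F'$ this stray $-f_i$ is recovered by the cycle inequalities as noted above. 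Plus, in either subcase, the incidental move contributes at most $\tfrac{2f^*_i}{t-1}$ as before. Finally, the $l$-averaging is word-for-word that of Lemma~\ref{lem:fac}: a facility occupies a ``leaf'' position for at most one offset, so the corresponding $O(1/t)$ term (now carrying an extra $w_i$, i.e. expressed through $f^*_i,f_i$) enters with weight $\tfrac1{t^2}$. Collecting all $\tfrac1t$- and $\tfrac1{t-1}$-coefficient terms into a single $O(1/t)\bigl(\sum_i f^*_i+\sum_j(c^*_j+c_j)\bigr)$ slack term yields the claimed bound.

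\textbf{Main obstacle.} The only step that is not a mechanical reweighting is bounding $w_i\,c\bigl(o_i,\sigma(o_i)\bigr)$ when $\sigma(o_i)=s_{i'}\in S_2$: the weighted Claim~\ref{s2dist} directly controls only $w_{i'}c(s_{i'},o_i)$, so if $w_i\gg w_{i'}$ a naive transfer fails. The resolution is exactly why the $S_1$/$S_2$ definitions in this section threshold against $\max\{w_i,w_{i'}\}\cdot t$ rather than $w_i\cdot t$ — this forces $|D^*(\capt(s_{i'}))|>w_i t$, letting the averaging absorb the heavier facility's relocation cost. Everything else is bookkeeping identical to Section~\ref{3apx}.
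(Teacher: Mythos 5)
Your proposal is correct and follows essentially the same route as the paper: replay the unweighted Lemma~\ref{lem:fac} case analysis, with the only non-mechanical point being that $w_i\,c(o_i,\sigma(o_i))$ (and $w_i\,c(s_i,\cent(s_i))$ for $s_i\in S_2$) is absorbed via the revised $\max\{w_i,w_{i'}\}\cdot t$ threshold in the weighted $S_2$ definition, which is exactly the observation the paper's proof relies on. Your ``main obstacle'' paragraph identifies precisely the step the paper singles out, so no gap.
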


\begin{proof}
When $\shift(s,o)$ is performed, we move facilities $i$ from $s_i$ to $\cent(o_i)$.  If
this shift was performed during a path swap, then the movement-cost change for a
facility $i$ moved in this shift is at most $w_i c(i,o_i) + w_i c(o_i, \sigma(o_i)) -
w_i c(i, s_i) \leq 2w_i c(i, o_i) = 2f^*_i$ so the same bound used before applies.

If such a shift was performed along a path in a subtree, then we did not want to bound
$c(o_i, \sigma(o_i))$ by $c(i, s_i) + c(i, o_i)$ because we do not want to cancel the
contribution of $-c(i, s_i)$ to the bound.  However, this only happened when $\sigma(o_i)
\in S_2$ so we can use the fact that $|D^*(\capt(\sigma(o_i)))|$ is large and that 
$c(o_i,\sigma(o_i)) \leq c^*_j + c_j$ for any $j \in D^*(\capt(\sigma(o_i)))$.  In our
setting, as noted earlier, the movement cost $w_i \cdot c(o_i, \sigma(o_i))$ can be
bounded by $\frac{1}{t}\sum_{j \in D^*(\capt(\sigma(o_i)))}(c^*_j + c_j)$, 
which is the same upper bound we used in the unweighted case.

Finally, the only other time we moved a facility was from some $s_i \in S_2$ to
$\cent(s_i)$. The cost of this move is now $w_i \cdot c(s_i, \cent(s_i))$ which can also
be bounded by $\frac{1}{t} \sum_{j \in D^*(\capt(s_i))}(c^*_j + c_j)$ using the same
argument as in the previous paragraph. 
So, all bounds for the unweighted facility
movement cost increase averaged over the swaps also hold in the weighted case.
\end{proof}

Finally, we remark that the same bound for the facility movement cost for facilities on a
cycle with only $S_2$ facilities holds for the weighted case, again using arguments like
in the proof of Lemma \ref{lem:w_fac} to bound $w_i \cdot c(o_i, \sigma(o_i))$. Thus, the
proof of Theorem \ref{thm:3appx} is adapted to prove the following result for the weighted
case.

\begin{theorem}
The cost of a locally-optimal solution using $\lz$ swaps is at most 
$3+O\left(\sqrt\frac{\log\log p}{\log p}\right)$ times the optimum solution cost in
weighted instances of mobile facility location.
\end{theorem}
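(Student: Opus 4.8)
The plan is to establish the weighted $(3+o(1))$-approximation by re-running the entire argument of Section~\ref{3apx} essentially verbatim, with $f_i = w_i c(i,s_i)$ and $f^*_i = w_i c(i,o_i)$, and checking that every place where the unweighted proof used $c(o_i,\sigma(o_i)) \le f^*_i + f_i$ or the threshold $|D^*(\capt(s))| > t$ still goes through after inserting the appropriate weight factors. The key structural objects — the digraphs $H$, $H^*$, $H^*_l$, the subtrees, the sets $\cand(i)$, $\next(i)$, and the four types of test swaps, together with the weighting that averages over $1\le l\le t^2$ and puts weight $\tfrac{1}{t^2}$ on type-2 swaps — are all defined purely combinatorially and do not reference weights, so they carry over unchanged. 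The only genuine definitional change is to $S_1$ and $S_2$: the capacity-versus-demand threshold becomes $|D^*(\capt(s_i))| \lessgtr \max\{w_i,w_{i'}\}\cdot t$ where $(i',i)$ is the center edge into $s_i$; this is chosen precisely so that the analogue of Claim~\ref{s2dist}, namely $w_i\cdot c(s_i,\cent(s_i)) \le \tfrac1t\sum_{j\in D^*(\capt(s_i))}(c_j+c^*_j)$, still holds.

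First I would re-derive the client-cost bound. The subtlety, already flagged in the excerpt, is that when a client $j$ has all three of $\sigma(j)$, $\sigma^*(j)$, $\sigma(\sigma^*(j))$ closed, we can no longer blindly route $j$ to $s_i$ with $o_i=\sigma^*(j)$, because $w_i$ might be huge. The fix (Lemma~\ref{lem:w_clientbound}) is to compare $w_i$ and $w_{i'}$ where $(i,i')$ is the relevant center edge and $s_{i'}$ is closed: if $w_i\ge w_{i'}$ route $j$ to whichever of $s_i,o_i$ is open, paying $\le 2c(i,s_i)+2c(i,o_i)$; otherwise route to whichever of $o_{i'},\sigma(o_{i'})$ is open, paying $\le 2c(i',s_{i'})+2c(i',o_{i'})$. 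Multiplying through by the appropriate weight, the charge to a facility in $B\cup B'$ becomes $4tw_i(c(i,s_i)+c(i,o_i)) = 4t(f^*_i+f_i)$, and since each $i'\in B'$ is fed by a unique center edge from some $i$ that lies in a $B$-set to extent $\le\tfrac1{t^2}$, the averaged bound (Lemma~\ref{lem:w_client}) is identical to Lemma~\ref{lem:client}: $\sum_j (3c^*_j - c_j) + O(\tfrac1t)(\sum_i(f^*_i+f_i)+\sum_j c^*_j)$.

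Next I would redo the facility-movement bound (Lemmas~\ref{lem:w_fac} and the cycle remark). Here nothing really changes: the only two places the unweighted proof avoided the cheap bound $c(o_i,\sigma(o_i))\le c(i,s_i)+c(i,o_i)$ were (a) a shift along a subtree path into an $S_2$-location $\sigma(o_i)$, and (b) moving $s_i\in S_2$ to $\cent(s_i)$; in both cases the revised $S_2$-threshold guarantees $|D^*(\capt(\cdot))|$ is large enough that the weighted quantity $w_i\cdot c(o_i,\sigma(o_i))$ (resp.\ $w_i\cdot c(s_i,\cent(s_i))$) is bounded by $\tfrac1t\sum_{j\in D^*(\capt(\cdot))}(c^*_j+c_j)$ — exactly the bound used unweighted. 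Everywhere else, $w_i c(o_i,\sigma(o_i)) \le w_i c(i,s_i)+w_i c(i,o_i) = f_i+f^*_i$ and in fact $\le 2f^*_i$ in path-swap contexts. So Lemma~\ref{lem:w_fac} reads like Lemma~\ref{lem:fac}, and the same holds for the Lemma~\ref{cycleswap}-type inequality summed over all-$S_2$ cycles, which supplies the $-f_i$ terms for $i\notin F'$.

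Finally, combining: by local optimality every test swap has nonnegative total-cost change, hence so does the weighted average; adding the weighted Lemma~\ref{lem:w_client} and Lemma~\ref{lem:w_fac} inequalities plus the all-$S_2$-cycle inequality yields $\bigl(1-O(\tfrac1t)\bigr)(C+F) \le \bigl(3+O(\tfrac1t)\bigr)C^* + \bigl(1+O(\tfrac1t)\bigr)F^*$, so $C+F \le \bigl(3+O(\tfrac1t)\bigr)(C^*+F^*)$. Choosing $t$ as large as the constraint $p\ge t^2t^{t^2}+1$ permits gives $t = \Theta\!\bigl(\sqrt{\log p/\log\log p}\bigr)$, hence $O(\tfrac1t) = O\!\bigl(\sqrt{\log\log p/\log p}\bigr)$, which is the claimed bound. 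I expect the main obstacle to be purely bookkeeping: verifying that the $B$-versus-$B'$ split in the weighted client reassignment interacts correctly with the $\tfrac1{t^2}$ averaging weight (so that no facility's $f^*_i+f_i$ gets charged $\Theta(t)$ rather than $O(\tfrac1t)$), and confirming that the revised $S_1/S_2$ thresholds leave every other invariant (e.g.\ $|\cand(i)|\ge|\capt(s_i)|-2$, the subtree-size bound $t^{t^2}$) untouched — none of which is conceptually hard, but all of which must be checked.
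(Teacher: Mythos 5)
Your proposal matches the paper's own proof essentially step for step: the same reweighted thresholds $|D^*(\capt(s_i))| \lessgtr \max\{w_i,w_{i'}\}\cdot t$ for $S_1/S_2$ (yielding the weighted analogue of Claim~\ref{s2dist}), the same $w_i$-versus-$w_{i'}$ case split for rerouting clients whose $\sigma(j),\sigma^*(j),\sigma(\sigma^*(j))$ are all closed, the same $B\cup B'$ charging of $O(t)(f^*_i+f_i)$ diluted to $O(1/t)$ by the $\frac{1}{t^2}$ averaging (using that each $i'\in B'$ is fed by a unique center edge from some $i\in B$), the same two-case handling of $w_i\cdot c(o_i,\sigma(o_i))$ in the movement-cost bound, and the same final combination with the all-$S_2$-cycle inequality and choice of $t=\Theta\bigl(\sqrt{\log p/\log\log p}\bigr)$. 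It is correct and takes the same route as Section~\ref{extn}.
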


\section{The single-swap case} \label{oneswap}
We show in this section that the local-search algorithm has a constant approximation
guarantee also when $\lz=1$ (which corresponds to $t=1$ in Section~\ref{5apx}), even in
the weighted setting. This requires a different analysis than in
Section~\ref{5apx} since we now no longer have the luxury of amortizing the ``expensive''
terms in an interval swap via multi-location swaps. 
The approximation factor we obtain in the analysis below is large, but we emphasize that 
we have not sought to optimize this constant. Also, we remark that the analysis can be
significantly simplified and improved in the unweighted setting.

Recall that $f^*_i=w_i\cdot c(i,o_i)$ and $f_i=w_i\cdot c(i,s_i)$. 
We use much of the same notation as in Section~\ref{5apx}. 
The digraph $\hG$, its decomposition into paths $\Pc$ and cycles $\C$, and the definition 
of $\cent(s)$ are as in Section~\ref{5apx}. Thus, for a path $P\in\Pc$, we have
$\sg(\epath(P))\notin P$. Define $S_0=\{s \in S: |\capt(s)| = 0\}$. 
Let $s_i\in S\sm S_0$ with $o_{i'}=\cent(s_i)$. We place $s_i$ in $S_1$ if
$|D^*(o_{i'})|\leq 1.5\max\{w_i,w_{i'}\}$ or $|\capt(s_i)|>1$; otherwise we place $s_i$ in
$S_2$. Also define $S_3:=S_0\cup \{s\in S_1: |\capt(s)|\leq 1\}$.
Let $n^*_o=|D^*(o)|$ for $o\in O$.

\begin{lemma} \label{1sws2lem}
Let $s_i\in S_2$ and $o=\cent(s_i)$, and consider $\swap(s_i,o)$. We have
\begin{equation}
0\leq\mmfl\bigl((S\sm\{s_i\})\cup\{o\}\bigr)-\mmfl(S)
\leq\sum_{j\in D^*(o)}\Bigl(\frac{5}{3}\cdot c^*_j-\frac{1}{3}\cdot c_j\Bigr)
+\sum_{j\in D(s_i)\sm D^*(o)}2c^*_j.
\label{1sws2}
\end{equation}
\end{lemma}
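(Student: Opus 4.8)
\textbf{Proof proposal for Lemma~\ref{1sws2lem}.}
The plan is to mirror the structure of the proof of Lemma~\ref{s2lem}, but in the $t=1$, single-swap setting where the multi-location slack is gone and we instead exploit the defining inequality $|D^*(o)| > 1.5\max\{w_i, w_{i'}\}$ of $S_2$ together with the factor $\tfrac53$ in the statement. First I would bound the movement-cost increase. Since $s_i \in S_2$ we have $|\capt(s_i)| \le 1$, so $\capt(s_i) = \{o\}$ with $o = \cent(s_i)$; removing $s_i$ and adding $o$ means no $S$-location captures $o$ any more, and every other $O$-location keeps its closest $S$-location. The natural re-matching of $\F$ to $S \cup \{o\} \sm \{s_i\}$ is: move facility $i$ (the one with $s_i$ as its current destination) to $o$, and leave every other facility where it is. This costs $w_i \cdot c(i,o) - f_i$. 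Using $c(i,o) = c(s_i,o)$ is not quite right — rather $w_i c(i,o) \le f_i + w_i c(s_i,o)$ — so the movement increase is at most $w_i c(s_i,o)$. Now by the (weighted) analogue of Claim~\ref{s2dist}, which is stated in Section~\ref{extn} and holds here since $s_i\in S_2$ gives $|D^*(o)| > w_i t = w_i$ (indeed $> 1.5 w_i$), we get $w_i c(s_i,o) \le \tfrac{1}{|D^*(o)|}\sum_{j\in D^*(o)}(c_j + c^*_j)$; with $|D^*(o)| \ge 1.5 w_i \ge 1$ — actually the cleanest route is simply $w_i c(s_i,o)\le \min_{j\in D^*(o)}(c_j+c^*_j)$, hence $\le \tfrac{1}{|D^*(o)|}\sum_{j\in D^*(o)}(c_j+c^*_j)$, and I would want a bound of the form $\le \sum_{j\in D^*(o)}\tfrac{2}{3}(c_j+c^*_j)$ or better. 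Here I must be careful: $|D^*(o)| > 1.5\max\{w_i,w_{i'}\} \ge 1.5 w_i$, so $\tfrac{1}{|D^*(o)|} < \tfrac{2}{3 w_i}$, giving $w_i c(s_i,o) \le w_i\cdot\tfrac{1}{|D^*(o)|}\sum_j(c_j+c^*_j) < \tfrac{2}{3}\sum_{j\in D^*(o)}(c_j+c^*_j)$ — wait, that drops a $w_i$ incorrectly; the correct chain is $c(s_i,o)\le c_j+c^*_j$ for each $j\in D^*(o)$, so $c(s_i,o)\le\tfrac{1}{|D^*(o)|}\sum_j(c_j+c^*_j)$, hence $w_i c(s_i,o)\le\tfrac{w_i}{|D^*(o)|}\sum_j(c_j+c^*_j)<\tfrac{2}{3}\sum_j(c_j+c^*_j)$ since $w_i/|D^*(o)|<2/3$. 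So the movement increase is at most $\tfrac23\sum_{j\in D^*(o)}(c_j+c^*_j)$.

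Next I would bound the assignment-cost change by explicitly reassigning only the clients in $D^*(o)\cup D(s_i)$ and leaving all other clients at their current locations (whose destinations remain open). For $j \in D^*(o)$, reassign $j$ to $o$: the new assignment cost is $c^*_j$, so the change is $c^*_j - c_j$. For $j \in D(s_i)\sm D^*(o)$, the location $s_i$ is closed, so reassign $j$ to $\sg(\sg^*(j))$; since $\sg^*(j)\ne o$ (as $j\notin D^*(o)$) and $\capt(s_i)=\{o\}$, the location $\sg(\sg^*(j))$ is not $s_i$, hence stays open, and by Lemma~\ref{reasgn} the change is at most $2c^*_j$. Summing, the assignment-cost change is at most $\sum_{j\in D^*(o)}(c^*_j - c_j) + \sum_{j\in D(s_i)\sm D^*(o)}2c^*_j$. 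Adding the movement bound $\tfrac23\sum_{j\in D^*(o)}(c_j+c^*_j)$ gives a total of at most $\sum_{j\in D^*(o)}\bigl(\tfrac23 c_j + \tfrac23 c^*_j + c^*_j - c_j\bigr) + \sum_{j\in D(s_i)\sm D^*(o)}2c^*_j = \sum_{j\in D^*(o)}\bigl(\tfrac53 c^*_j - \tfrac13 c_j\bigr) + \sum_{j\in D(s_i)\sm D^*(o)}2c^*_j$, which is exactly the claimed right-hand side. The left inequality $0\le \mmfl\bigl((S\sm\{s_i\})\cup\{o\}\bigr)-\mmfl(S)$ is immediate from local optimality of $S$ under single swaps, since $(S\sm\{s_i\})\cup\{o\}$ differs from $S$ by one swap.

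The main obstacle I anticipate is the bookkeeping around whether $\capt(s_i)$ really equals exactly $\{o\}$ and whether the closed location $\sg(\sg^*(j))$ for $j\in D(s_i)\sm D^*(o)$ is genuinely distinct from $s_i$ — this needs $\sg^*(j)\notin\capt(s_i)$, which follows because $\capt(s_i)=\{o\}$ and $\sg^*(j)\ne o$. A second delicate point is the arithmetic with the constant $1.5$: I need $w_i/|D^*(o)| < 2/3$, i.e. $|D^*(o)| > 1.5 w_i$, which is exactly what the $S_2$ definition provides (via $\max\{w_i,w_{i'}\}\ge w_i$); any looser threshold would not yield the $\tfrac53,\tfrac13$ split, so the proof must use the specific numeric value $1.5$ rather than a generic ``$> w_i$''. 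Everything else is routine triangle-inequality manipulation analogous to Lemmas~\ref{reasgn} and~\ref{s2lem}.
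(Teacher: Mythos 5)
Your proposal is correct and follows essentially the same route as the paper's proof: re-match facility $i$ from $s_i$ to $o$ and bound the movement increase by $w_i c(s_i,o)\leq\frac{w_i}{|D^*(o)|}\sum_{j\in D^*(o)}(c_j+c^*_j)\leq\frac{2}{3}\sum_{j\in D^*(o)}(c_j+c^*_j)$ using the $1.5$-threshold in the definition of $S_2$, then reassign $D^*(o)$ to $o$ and $D(s_i)\sm D^*(o)$ to $\sg(\sg^*(j))$ via Lemma~\ref{reasgn}. Your extra care about $\capt(s_i)=\{o\}$ and $\sg(\sg^*(j))\neq s_i$ only makes explicit what the paper leaves implicit.
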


\begin{proof}
We reassign clients in $D^*(o)$ to $o$, and each client $j$ in $D(s_i)\sm D^*(o)$ to
$\sg(\sg^*(j))$ incurring a total assignment-cost change of 
$\sum_{j\in D^*(o)}(c^*_j-c_j)+\sum_{j\in D(s)\sm D^*(o)}2c^*_j$.
The change in movement cost is at most 
$$
w_i c(o,s_i)\leq w_i\cdot\frac{\sum_{j\in D^*(o)}(c_j+c^*_j)}{n^*_o}\leq
\sum_{j\in D^*(o)}\tfrac{2}{3}(c_j+c^*_j)
$$ 
where the last inequality follows since $s_i\in S_2$. Adding this to the expression for
the change in assignment cost yields the lemma.
\end{proof}

In Lemmas~\ref{pclem}--\ref{cyclem}, we generate inequalities that will allow us to bound
the total movement cost, and the total assignment cost of clients in $S_0\cup S_1$. We use
the following notation for this sequence of lemmas. Given a path or cycle $Z\in\Pc\cup\C$,
let $S'_Z=\{s'_1,\ldots,s'_r\}=S_1\cap Z$, where $s'_{q+1}$ is the next $S_1$-location on
$Z$ after $s'_q$. Let $o'_{q-1}=\cent(s'_q)$ for $q=1,\ldots,r$, and
$O'_Z=\{o'_0,\ldots,o'_{r-1}\}$. 
Let $s'_q=s_{i_q}$ and $o'_q=o_{\hi_q}$ for $q=0,\ldots,r$ (see Fig.~\ref{labeling}).  
Let $Z_q$ denote the $s'_q\leadsto o'_q$ subpath of $Z$.
Consider an $o_i\rightarrow s_{i'}\rightarrow i'\rightarrow o_{i'}$ subpath of $Z$. 
Let $A_Z$ consist of all such $s_{i'}$ where $s_{i'}\in S'\cap S_3$ and $n^*_{o_i}\leq 1.5w_i$.
Let $B_Z$ consist of all such $s_{i'}$ where $s_{i'}\in S'\cap S_3$ and $n^*_{o_i}>1.5w_i$.
Note that if $s_{i'}\in B_Z$ then $w_{i'}>w_i$ and $n^*_{o_i}\leq 1.5w_{i'}$.
Let $C_Z$ consist of all such $s_{i'}$ where $n^*_{o_i}\leq 1.5w_i$.
Clearly, $A_Z\sse C_Z\sse S'_Z$ and $C_Z\cap B_Z=\es$.
When $Z$ is clear from the context (as in Lemmas~\ref{pclem}--\ref{cyclem}), we drop
the subscript $Z$ from $S'_Z, O'_Z, A_Z, B_Z, C_Z$.

\begin{figure}[ht!]
\centerline{\resizebox{!}{1.5in}{\input{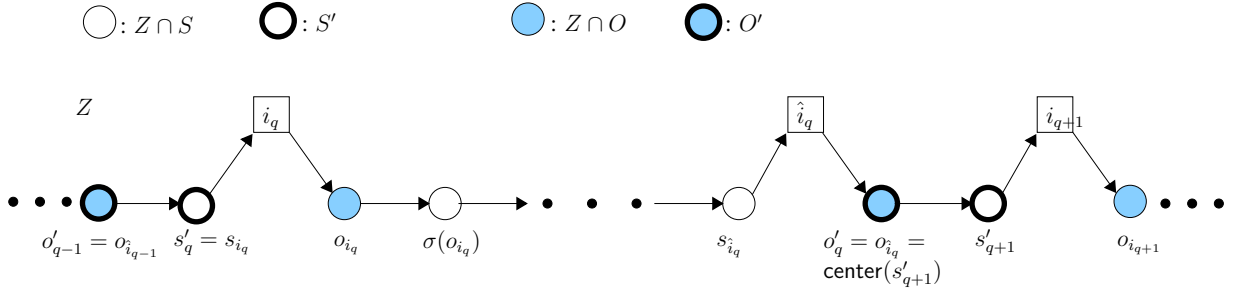}}}
\caption{The clear circles are locations in $S$; the shaded circles are locations in
$O$. The clear and shaded circles with thick borders depict locations in $S'$ and $O'$
respectively.}   
\label{labeling}
\end{figure}

\begin{lemma} \label{pclem}
Let $Z\in\Pc\cup\C$ be such that $r=|S'|=0$. Then,
\begin{align}
\text{\parbox[t]{0.88in}{if $Z\in\Pc$ with \\ $s'_0=\spath(Z)$ \\ $o'_r=\epath(Z)$}} 
&\ & 0\ & \leq\ \sum_{i\in Z}(f^*_i-f_i)
+\sum_{j\in D^*((Z\cap O)\sm\{o'_r\})}\negthickspace\negthickspace\negthickspace \tfrac{2}{3}(c_j+c^*_j) 
+\sum_{j\in D^*(o'_r)}\negthickspace (c^*_j-c_j)
+\sum_{j\in D^*(s'_0)}\negthickspace 2c^*_j. 
\label{pcineq1} \\[1ex]
\text{\parbox{0.88in}{if $Z\in\C$}}
&\ & 0\ & \leq\ \sum_{i\in Z}(f^*_i-f_i)
+\sum_{j\in D^*(Z\cap O)}\tfrac{2}{3}(c_j+c^*_j). 
\label{pcineq2}
\end{align}  
\end{lemma}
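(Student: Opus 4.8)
The plan is to handle both cases with the two basic moves already available when $\lz=1$: the single swap $\swap(\spath(Z),\epath(Z))$ for a path $Z\in\Pc$, and the cyclic re-matching behind Lemma~\ref{cycleswap} for a cycle $Z\in\C$. The structural fact that makes this work is that, since $S_1\cap Z=\es$, every location of $S$ on $Z$ must lie in $S_0\cup S_2$, and only $\spath(Z)$ can lie in $S_0$. Indeed, if $s\in S\cap Z$ has an $O$-node immediately preceding it on $Z$ (this includes all interior $S$-locations of a path and all $S$-locations of a cycle), then that $O$-node lies in $\capt(s)$ and equals $\cent(s)$, so $s\notin S_0$ and hence $s\in S_2$; in particular $|\capt(s)|=1$ and $\capt(s)=\{\cent(s)\}$. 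On the other hand $\cent(\spath(Z))=\nil$, which rules out $\spath(Z)\in S_2$, so $\spath(Z)\in S_0$, i.e.\ $\capt(\spath(Z))=\es$. The quantitative input is the weighted distance estimate already used inside Lemma~\ref{1sws2lem} (the weighted analogue of Claim~\ref{s2dist}): if $s_{i'}\in S_2$ and $o_{i''}=\cent(s_{i'})$, then $|D^*(o_{i''})|>1.5\max\{w_{i'},w_{i''}\}\ge 1.5\,w_{i''}$ and $c(o_{i''},s_{i'})=c(o_{i''},\sigma(o_{i''}))\le c_j+c^*_j$ for every $j\in D^*(o_{i''})$, so $w_{i''}\,c(o_{i''},s_{i'})\le\tfrac23\sum_{j\in D^*(o_{i''})}(c_j+c^*_j)$ (trivially if $w_{i''}=0$).

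Cycle case, \eqref{pcineq2}. Write $Z$ as $s_{i_1},i_1,o_{i_1},\ldots,s_{i_m},i_m,o_{i_m}$ with $\sigma(o_{i_q})=s_{i_{q+1}}$ (cyclically); by the above each $s_{i_q}\in S_2$ with $\cent(s_{i_q})=o_{i_{q-1}}$. Re-match each $i_q$ to $\sigma(o_{i_q})=s_{i_{q+1}}$---a cyclic permutation of $S\cap Z$, hence a perfect matching of $\F$---keeping all other facilities and all client assignments unchanged; since the current matching is of minimum cost, $0\le\sum_{q=1}^m\bigl(w_{i_q}c(i_q,s_{i_{q+1}})-f_{i_q}\bigr)$, exactly as in Lemma~\ref{cycleswap}. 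By the triangle inequality and the estimate above, $w_{i_q}c(i_q,s_{i_{q+1}})\le f^*_{i_q}+w_{i_q}c(o_{i_q},s_{i_{q+1}})\le f^*_{i_q}+\tfrac23\sum_{j\in D^*(o_{i_q})}(c_j+c^*_j)$. Summing over $q$, and noting that the sets $D^*(o_{i_q})$ partition $D^*(Z\cap O)$, gives \eqref{pcineq2}.

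Path case, \eqref{pcineq1}. Write $Z$ as $s_{i_1},i_1,o_{i_1},\ldots,s_{i_m},i_m,o_{i_m}$, so that $s'_0=s_{i_1}=\spath(Z)\in S_0$, $o'_r=o_{i_m}=\epath(Z)$, $\sigma(o_{i_q})=s_{i_{q+1}}$ and $\cent(s_{i_{q+1}})=o_{i_q}$ for $q<m$, and $s_{i_2},\ldots,s_{i_m}\in S_2$. Perform the (legal, since $\lz=1$) move $\swap(s'_0,o'_r)$ and bound $\mmfl\bigl((S\sm\{s'_0\})\cup\{o'_r\}\bigr)-\mmfl(S)$ from above by exhibiting one feasible matching-and-assignment for the new location set. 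For the matching, send $i_q\mapsto s_{i_{q+1}}$ for $q=1,\ldots,m-1$ and $i_m\mapsto o'_r$; exactly as in the cycle case the movement-cost increase is at most $\sum_{i\in Z}(f^*_i-f_i)+\tfrac23\sum_{j\in D^*((Z\cap O)\sm\{o'_r\})}(c_j+c^*_j)$ (the last sum is empty when $m=1$). For the assignment, send every $j\in D^*(o'_r)$ to the newly opened $o'_r$ (change exactly $c^*_j-c_j$), and send every $j\in D(s'_0)\sm D^*(o'_r)$---whose former location $s'_0$ is now closed---to $\sigma(\sigma^*(j))$, which is well-defined and still open because $\capt(s'_0)=\es$ forces $\sigma(\sigma^*(j))\ne s'_0$; by Lemma~\ref{reasgn} each such change is at most $2c^*_j$. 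All remaining clients keep their assignment. Adding the two increases and using local optimality ($0\le\mmfl\bigl((S\sm\{s'_0\})\cup\{o'_r\}\bigr)-\mmfl(S)$) yields \eqref{pcineq1}, with the $\sum_{j\in D(s'_0)}2c^*_j$ term.

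The computations---the weighted shift estimate and the $\tfrac23$ factor coming from the $1.5$ threshold in the definitions of $S_1,S_2$---are routine and already appear in Lemma~\ref{1sws2lem}. The only delicate point, and the one I would be most careful about, is the end-of-path bookkeeping: checking that $\spath(Z)\in S_0$ (so that no $j\in D(s'_0)$ has $\sigma(\sigma^*(j))=s'_0$, which would make the reassignment infeasible) and that $o'_r=\epath(Z)$ is captured by an $S$-location off $Z$, so that $o'_r\notin S$ and $\swap(s'_0,o'_r)$ genuinely modifies the location set---both of which follow from $S_1\cap Z=\es$ together with the facts $\cent(\spath(Z))=\nil$ and $\sigma(\epath(P))\notin P$.
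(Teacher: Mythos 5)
Your proof is correct and takes essentially the same route as the paper's: the path case via $\swap(\spath(Z),\epath(Z))$ with the shift re-matching along $Z$ and the $S_2$-based charge $w_i\,c(o_i,\sg(o_i))\le\tfrac23\sum_{j\in D^*(o_i)}(c_j+c^*_j)$, and the cycle case via the weighted analogue of Lemma~\ref{cycleswap} combined with the same charge. Your reading of the final term as $\sum_{j\in D(s'_0)}2c^*_j$ (rather than the $D^*(s'_0)$ appearing in the lemma statement) matches the paper's intent and its own proof, and your bookkeeping that $S_1\cap Z=\es$ forces $\spath(Z)\in S_0$ (so $\sg(\sg^*(j))\ne s'_0$ for the reassigned clients) is exactly the justification the paper leaves implicit.
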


\begin{proof}
First, suppose $Z$ is a path.
For every $o_i\in Z,\ o_i\neq o'_r$, we have that $\sg(o_i)\in S_2$, and so we have 
$w_i\cdot c(o_i,\sg(o_i))\leq\sum_{j\in D^*(o_i)}\frac{2}{3}(c_j+c^*_j)$. Thus, the move
$\swap(s'_0,o'_r)$ yields the following inequality, which implies \eqref{pcineq1}.
$$
0\leq \sum_{i\in Z}(f^*_i-f_i)
+\sum_{\substack{i:\sg(o_i)\in Z\cap S_2 \\ j\in D^*(o_i)}}\tfrac{2}{3}(c_j+c^*_j)
+\sum_{j\in D^*(o'_r)}(c^*_j-c_j)+\sum_{j\in D^*(s'_0)\sm D^*(o'_r)}2c^*_j.
$$
For a cycle $Z$, analogous to Lemma~\ref{cycleswap}, we have 
$0\leq\sum_{i\in Z}\bigl(-f_i+f^*_i+w_i\cdot c(o_i,\sg(o_i))\bigr)$, and this coupled with
the above bound on $w_i\cdot c(o_i,\sg(o_i))$ for $i\in Z$ yields \eqref{pcineq2}. 
\end{proof} 

\begin{lemma} \label{pathlem1}
Let $Z\in\Pc$ (with $S', O', A, B, C$ as defined above). 
Let $s'_0=\spath(Z)$, and $O''=O'\cup\{o'_r=\epath(Z)\}$.
Suppose there is no $q$ such that $|Z_q\cap\F|=1,\ s'_q\in B,\ s'_{q+1}\in C$. Then,
\begin{equation}
\begin{split}
0\ \leq & \sum_{i\in\Pc_c(Z)} 5f^*_i+\sum_{i\in Z} (10.75f^*_i-f_i)
+\sum_{j\in D(s'_0)} 4.5c^*_j+\sum_{j\in D((S'\cap S_3)\cup H(Z\cap S))} 5c^*_j \\
& + \sum_{j\in D^*(Z\cap O'')} (6.5c^*_j-c_j)
+\sum_{j\in D^*((Z\cap O)\sm O'')}\tfrac{34}{6}(c_j+c^*_j). 
\end{split}
\label{pathineq1}
\end{equation}
\end{lemma}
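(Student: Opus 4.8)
The plan is to prove \eqref{pathineq1} by exhibiting a family of single-swap ($\lz=1$) moves, invoking local optimality to get $0\le\Delta\mmfl$ for each, and adding the resulting inequalities with suitable weights. The backbone is a ``walk'' along $Z$ from $s'_0=\spath(Z)$ to $\epath(Z)$: for each $q=0,1,\dots,r$ I would use the shift move $\swap(s'_q,o'_q)$, closing $s'_q$ and opening $o'_q$ (recall $o'_q=\cent(s'_{q+1})$ for $q<r$, and $o'_r=\epath(Z)$), re-matching the facilities of the subpath $Z_q$ as in \eqref{shiftm}. Because $Z_0,\dots,Z_r$ partition the facilities of $Z$, these movement-cost bounds add up to the telescoping term $\sum_{i\in Z}(f^*_i-f_i)$ plus ``redirection'' terms $w_i\,c\bigl(o_i,\sg(o_i)\bigr)$ for facilities $i$ with $o_i$ interior to some $Z_q$. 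Every such $o_i$ has $\sg(o_i)\in S_2$, so by the $S_2$-distance estimate used in the proof of Lemma~\ref{1sws2lem} this is at most $\tfrac23\sum_{j\in D^*(o_i)}(c_j+c^*_j)$, and summing these produces the $\tfrac{34}{6}(c_j+c^*_j)$ term over $D^*\bigl((Z\cap O)\setminus O''\bigr)$.

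In addition, for every $S$-location on $Z$ that captures something other than its center --- precisely the $S_1\setminus S_3$ locations among $\{s'_0,\dots,s'_r\}$ --- I would do a path swap $\swap(\spath(P),\epath(P))$ for each $P\in\Pc_c$ of that location, so that every $o\in Z\cap O$ becomes open. Each path swap costs at most $\sum_{i\in P}2f^*_i$ (the weighted analog of \eqref{shiftm}); the gap between this and the coefficient $5$ in the $\sum_{i\in\Pc_c(Z)}5f^*_i$ term of \eqref{pathineq1} is what absorbs the reassignment of clients displaced from the side-path starts $\spath(P)$ and from $s'_0$, which appear as the $D\bigl((S'\cap S_3)\cup H(Z\cap S)\bigr)$ and $D(s'_0)$ terms with coefficients $5$ and $4.5$. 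Clients merely displaced from a closed $\sg$-location, or gaining access to a newly opened $\sg^*$-location, are rerouted via Lemma~\ref{reasgn} for a change of at most $2c^*_j$ or $c^*_j-c_j$; the endpoints in $O''$ are opened, contributing $(6.5c^*_j-c_j)$-type terms (cf.\ \eqref{pcineq1}).

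The genuinely delicate clients are those in $D^*(o)$ with $o=\cent(s'_q)$ for an $s'_q\in S_3$ that gets closed while the side-path that could open $o$ is not swapped: for such a client neither $\sg(j)=s'_q$ nor its usual fallback $\sg(\sg^*(j))$ need be available, so it must be charged to the facility $i$ on the center edge into $s'_q$, sent to $s_i$ or $o_i$ as in the proof of Lemma~\ref{lem:w_clientbound}. This is where $A$, $B$, $C$ enter. The hypothesis that no index $q$ has $|Z_q\cap\F|=1$, $s'_q\in B$, and $s'_{q+1}\in C$ excludes the one configuration in which these heavy clients could accumulate on a facility of small weight; under the hypothesis at most $1.5\,w_i$ clients are charged to $i$, so the charge is $O(1)\cdot(f^*_i+f_i)$, and collecting everything yields the $10.75\,f^*_i$ coefficient for $i\in Z$. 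I would finish by summing all the inequalities (each left side $0$ by local optimality) and verifying the bookkeeping --- every $o\in Z\cap O$ opened, every relevant facility and client counted the stated number of times --- to read off the coefficients $5,\ 10.75,\ 4.5,\ 5,\ 6.5,\ \tfrac{34}{6}$.

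The main obstacle is exactly this bookkeeping under the single-swap restriction. Without the multi-swap amortization of Sections~\ref{5apx}--\ref{3apx} we cannot make the expensive $f^*_i+f_i$ terms negligible; we must keep them and prove their multiplicity is bounded, which is the sole purpose of the $A/B/C$ case split (together with a companion lemma for the configuration that the hypothesis excludes). Tracking, for each $o\in O$ and each $s\in S$ on or adjacent to $Z$, which single swap is ``responsible'' for opening $o$ and which for closing $s$ --- so that no $-f_i$ is over- or under-counted --- is the crux of the argument.
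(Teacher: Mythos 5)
The overall shape of your plan (single-swap inequalities generated by walking along $Z$, plus side-path moves, plus charging ``stranded'' clients of a closed center to the facility on the incoming center edge, with the $A/B/C$ split and a companion lemma for the excluded configuration) is in the spirit of the paper's argument, but there is a genuine gap in how you treat the locations $s'_q\in S'\sm S_3$. Your backbone performs $\swap(s'_q,o'_q)$ for \emph{every} $q$, i.e.\ it closes $s'_q$ even when $|\capt(s'_q)|>1$. Under the single-swap restriction each inequality must come from one stand-alone move: a client $j\in D(s'_q)$ with $\sg^*(j)\in H(s'_q)$ then has both $\sg(j)=s'_q$ closed and its fallback $\sg(\sg^*(j))=s'_q$ closed, while $\sg^*(j)$ is not opened by \emph{this} move; your separate path swaps $\swap(\spath(P),\epath(P))$ cannot rescue it, because a location opened in a different move is unavailable when bounding the cost change of this one (and, incidentally, $\epath(P)\notin Z$, so those swaps do not open anything of $Z\cap O$). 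Moreover, membership of $s'_q$ in $S_1$ via $|\capt(s'_q)|>1$ gives no bound on $|D^*(\capt(s'_q))|$ in terms of any weight, so these clients cannot be charged to the center-edge facility as in Lemma~\ref{lem:w_clientbound}; that charging is only legitimate for $S_3$-locations. The paper avoids exactly this trap: for $s'_q\in S'\sm S_3$ it never closes $s'_q$, but instead opens $o'_q$ by the move $\swap(\spath(P),o'_q)$ for a single arbitrary $P\in\Pc_c(s'_q)$, analyzed as a shift along $P$, a move of the facility at $\epath(P)$ to $s'_q$, and a shift along $Z_q$ (inequality \eqref{S1ineq}); this is where the $\sum_{i\in\Pc_c(Z)}5f^*_i$ and the $D(\spath(P))$-type terms actually come from.

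Two smaller points. First, the hypothesis of the lemma is not what bounds the number of charged clients by $1.5w_i$ (that follows from the definitions of $A$, $B$ and $S_3$); its role is to forbid a facility that is the \emph{only} facility of its segment from being charged twice --- once with $\frac{n^*_{o'_{q-1}}}{w_{i_q}}(f_{i_q}+f^*_{i_q})$ when $s'_q\in B$ is closed and once again when $s'_{q+1}\in A\cup C$ is closed --- which would wipe out its $-f_{i_q}$ term; the excluded configuration requires the block construction of Lemma~\ref{pathlem2}. Second, the bookkeeping you defer is not routine: the paper needs an extra global move $\swap(s'_0,o'_r)$ in which only $D(s'_0)$ is reassigned and the terms $w_i\,c(o_i,\sg(o_i))$ are bounded in three different ways depending on whether $\sg(o_i)\in A$, $\sg(o_i)\in C\sm A$ with $s_i\in B$, or $s_i\in B$ with $\sg(o_i)\notin C$ (the $\al_i$ device in \eqref{ineq1}), together with the specific weights $2.25$ and $2.5$ on the two groups of inequalities, in order to retain a $-f_i$ for every $i\in Z$ while paying for the stranded clients; without committing to such a scheme the stated coefficients $10.75$, $6.5$, $\tfrac{34}{6}$ cannot be read off.
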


\begin{proof}
We derive \eqref{pathineq1} by taking a weighted combination of inequalities
generated from various swap moves. We assume $r>0$ (otherwise \eqref{pcineq1} implies
\eqref{pathineq1}). 
For a predicate $R$, let $\bon(R)$ be the indicator function that is 1 if $R$ is true and  
0 otherwise.
To keep notation simple, we follow the convention that a non-existent object (like
$o'_{r+1}$) is $\nil$, and $\nil\notin T$ for every set $T$ (e.g., $o'_{r+1}\notin T$ for 
every $T$.) Also, $D(\nil)=D^*(\nil)=\es$, and $w_{\nil}=1$. 

\begin{list}{{\bf \arabic{enumi})}}{\usecounter{enumi} \settowidth{\labelwidth}{{\bf 5)}}
    \leftmargin=\labelwidth \addtolength{\leftmargin}{1ex}}
\item 
The first swap move is $\swap(s'_0,o'_r)$, but we bound the change in cost slightly
differently. We only reassign clients in $D(s'_0)$, obtaining the 
inequality 
\begin{equation}
0\leq\sum_{i\in Z}(f^*_i-f_i)+\sum_{i\in Z: \sg(o_i)\in Z}w_i\cdot c(o_i,\sg(o_i))+\sum_{j\in D(s'_0)}2c^*_j.
\label{pshift}
\end{equation}
Define $\al_i=1-\frac{2n^*_{o_i}}{3w_i}$. Note that $\al_i\in [0,1]$ if $\sg(o_i)\in A\cup C$.  
Consider each edge $(o_i,\sg(o_i))\in Z$. 
If $\sg(o_i)\notin A$ and $s_i\notin B$, we simply bound $w_i\cdot c(o_i,\sg(o_i))$ by
$f_i+f^*_i$.  
Otherwise, if $\sg(o_i)\in C$, we use the bound 
$w_i\cdot c(o_i,\sg(o_i))\leq\al_i(f_i+f^*_i)+\sum_{j\in D^*(o_i)}\tfrac{2}{3}(c_j+c^*_j)$
which is valid since $\sum_{j\in D^*(o_i)}\tfrac{2}{3}(c_j+c^*_j)\geq
\frac{2n^*_{o_i}}{3w_i}\cdot w_i\cdot c(o_i,\sg(o_i))$.
If $\sg(o_i)\notin C$ and $s_i\in B$, then $n^*_{o_i}>1.5w_i$ and we bound 
$w_i\cdot c(o_i,\sg(o_i))$ by $\sum_{j\in D^*(o_i)}\tfrac{2}{3}(c_j+c^*_j)$.
Incorporating these bounds in \eqref{pshift}, we obtain the following inequality.
\begin{equation}
\begin{split}
0\ \leq\ & (f^*_{\hi_r}-f_{\hi_r})
+\sum_{\substack{i\in Z: \sg(o_i)\in Z\sm A \\ s_i\notin B}} \negthickspace\negthickspace 2f^*_i
+\sum_{\substack{i\in Z: \sg(o_i)\in A\text{ or} \\ \sg(o_i)\in C\sm A, s_i\in B}}
\negthickspace\negthickspace\bigl(2f^*_i-(1-\al_i)f_i\bigr)
+\sum_{\substack{i\in Z: \sg(o_i)\in Z\sm C \\ s_i\in B}}\negthickspace\negthickspace(f^*_i-f_i) \\
& + \sum_{\substack{i\in Z: \sg(o_i)\in A\text{ or }s_i\in B \\ j\in D^*(o_i)}}\tfrac{2}{3}(c_j+c^*_j)
+\sum_{j\in D(s'_0)}2c^*_j.
\end{split}
\label{ineq1}
\end{equation}

\item 
For every $q=0,\ldots,r$ such that $s'_q\in A\cup S_0$, we consider the move 
$\swap(s'_q,o'_q)$. We move each facility $i\in Z_q$ to $\sg(o_i)$ if
$o_i\neq o'_q$ and to $o'_q$ otherwise. Note that for every facility $i$ such that  
$o_i, \sg(o_i)\in Z_q$, we have $\sg(o_i)\in S_2$, and so
$w_i\cdot c(o_i,\sg(o_i))\leq\sum_{j\in D^*(o_i)}\frac{2}{3}(c_j+c^*_j)$. 
We reassign all clients in $D^*(o'_q)$ to $o'_q$, and reassign each client
$j$ in $D(s'_q)\sm\bigl(D^*(o'_q)\cup D^*(o'_{q-1})\bigr)$ to $\sg(\sg^*(j))$. 
The resulting change in assignment cost is at most
$\sum_{j\in D^*(o'_q)}(c^*_j-c_j)+\sum_{j\in D(s'_q)}2c^*_j$. 
If $s'_q\in A$, we reassign all clients in $D^*(o'_{q-1})$ to
$s_{\hi_{q-1}}$ and bound the resulting change in assignment cost by  
$\sum_{j\in D^*(o'_{q-1})}c^*_j+\frac{n^*_{o'_{q-1}}}{w_{\hi_{q-1}}}(f_{\hi_{q-1}}+f^*_{\hi_{q-1}})$. 
Therefore, if $s'_q\in A\cup S_0$, we obtain the inequality
\begin{equation}
\begin{split}
0\ \leq\ & \sum_{i\in Z_q}(f^*_i-f_i)+\frac{n^*_{o'_{q-1}}}{w_{\hi_{q-1}}}(f_{\hi_{q-1}}+f^*_{\hi_{q-1}}) \\
& + \sum_{\substack{i\in Z_q: \sg(o_i)\in Z_q \\ j\in D^*(o_i)}}\tfrac{2}{3}(c_j+c^*_j)
+\sum_{j\in D^*(o'_q)}(c^*_j-c_j)+\sum_{j\in D^*(o'_{q-1})}c^*_j
+\sum_{j\in D(s'_q)}2c^*_j.
\end{split}
\label{Aineq}
\end{equation} 

\item 
For every $q$ such that $s'_q\in B$, we again consider $\swap(s'_q,o'_q)$. 
We move facilities on $Z_q$ and reassign clients in 
$D^*(o'_q)\cup\bigl(D(s'_q)\sm D^*(o'_{q-1})\bigr)$ as in case 2).
We reassign clients in $D^*(s'_q)\cap D^*(o'_{q-1})$ to 
$o_{i_q}$ if $o'_q=o_{i_q}$, and to $\sg(o_{i_q})$ otherwise.
The assignment-cost change due to this latter reassignment is at most
$\frac{n^*_{o'_{q-1}}}{w_{i_q}}\cdot (f_{i_q}+f^*_{i_q})$ if $o'_q=o_{i_q}$, and 
$\frac{n^*_{o'_{q-1}}}{w_{i_q}}\cdot (f_{i_q}+f^*_{i_q})+\sum_{j\in D^*(o_{i_q})}(c_j+c^*_j)$
otherwise, where to obtain the latter inequality we use the fact that
$n^*_{o'_{q-1}}\leq 1.5w_{i_q}\leq n^*_{o_{i_q}}$ since $s'_q=s_{i_q}\in B,\ \sg(o_{i_q})\in S_2$  
and $c(o_{i_q},\sg(o_{i_q}))\leq c_j+c^*_j$ for all $j\in D^*(o_{i_q})$. 
We obtain the inequality
\begin{equation}
\begin{split}
0\leq & \sum_{i\in Z_q}(f^*_i-f_i)+\frac{n^*_{o'_{q-1}}}{w_{i_q}}(f_{i_q}+f^*_{i_q}) \\
& + \sum_{\substack{i\in Z_q: \sg(o_i)\in Z_q \\ j\in D^*(o_i)}}
\negthickspace\negthickspace\tfrac{2}{3}(c_j+c^*_j)
+\sum_{j\in D^*(o'_q)}\negthickspace(c^*_j-c_j)
+\sum_{j\in D(s'_q)}2c^*_j
+\bon(o_{i_q}\notin O'')\negthickspace\negthickspace
\sum_{j\in D^*(o_{i_q})}\negthickspace(c_j+c^*_j).
\end{split}
\label{Bineq}
\end{equation}

\item 
For every $q$ such that $s'_q\in S'\sm S_3$, we pick some arbitrary $P\in\Pc_c(s'_q)$  
and consider $\swap(\spath(P),o'_q)$.
We analyze this by viewing this as a combination of: (i) a shift along $P$, (ii) moving
$i$ from $o_i$ to $s'_q$, where $o_i=\epath(P)$, and (iii) a shift along the
appropriate subpath of $Z$. The resulting inequality we obtain is therefore
closely related to \eqref{Aineq}, \eqref{Bineq}. We incur an additional
$\sum_{i\in P}2f^*_i$ term for the change in movement cost due to (i) and (ii), and
$\sum_{j\in D(\spath(P))}2c^*_j$ for reassigning clients in $D(\spath(P))$. 
Also, since $s'_q$ is no longer swapped out, we do not incur any terms that correspond to
reassigning clients in $D(s'_q)$. 
Thus, we obtain the following inequality:
\begin{equation}
0\leq \sum_{i\in P}2f^*_i+\sum_{i\in Z_q}(f^*_i-f_i)
+\sum_{\substack{i\in Z_q: \sg(o_i)\in Z_q \\ j\in D^*(o_i)}}\tfrac{2}{3}(c_j+c^*_j)
+\sum_{j\in D^*(o'_q)}(c^*_j-c_j)+\sum_{j\in D(\spath(P))}2c^*_j. 
\label{S1ineq} 
\end{equation}
\end{list}

We are finally ready to derive \eqref{pathineq1}. 
An inequality subscripted with an index, like \sub{Aineq}, denotes that inequality for
that particular index. It will be useful to define
\refstepcounter{equation}
\label{segineq}
$$
\sub{segineq}\ \equiv\ 
\bon(s'_q\in A\cup S_0)\cdot\sub{Aineq}+\bon(s'_q\in B)\cdot\sub{Bineq}
+\bon(s'_q\notin S_3)\cdot\sub{S1ineq}
$$
We take the following linear combination. 
\begin{equation}
\underbrace{2.25\times\eqref{ineq1}}_\text{\normalsize{part 1}}
\ +\ \underbrace{\sum_q 2.5\times\sub{segineq}}_\text{\normalsize{part 2}}.
\label{cmpineq1}
\end{equation}
The LHS of \eqref{cmpineq1} is 0. We analyze the contribution from the 
$f^*_i, f_i, c^*_j, c_j$ terms to the RHS. 

Facilities $i\notin Z$ contribute at most $\sum_{i\in\Pc_c(Z)} 5f^*_i$.
Consider $i\in Z$. If $o_i=\epath(Z)=o'_r$, we pick up $4.75(f^*_i-f_i)$ from parts 1 and 2,
and we may pick up an additional $2.5\cdot\frac{n^*_{o'_{r-1}}}{w_i}(f_i+f^*_i)$ from part 2 if
$s'_r\in B,\ i=i_r$. So overall, we obtain a contribution of at most $8.5f^*_i-f_i$. 
Next suppose $\sg(o_i)\in Z$. If $\sg(o_i)\in Z\sm A$ and $s_i\notin B$, we gather
$4.5f^*_i$ from part 1 and $2.5(f^*_i-f_i)$ from part 2, so the total contribution is
$7f^*_i-2.5f_i$. 
Suppose $\sg(o_i)\in A$. Notice then that $s_i\notin B$, otherwise if $s_i=s'_q$, then 
$Z_q\cap\F=\{i\}$, which contradicts our assumptions.
Recall that $\al_i=1-\frac{2n^*_{o_i}}{3w_i}$.
We gather $2.25\bigl(2f^*_i-(1-\al_i)f_i\bigr)$ from part 1, and 
$2.5\bigl[f^*_i-f_i+\frac{n^*_{o_i}}{w_i}(f_i+f^*_i)\bigr]\leq 
6.25f^*_i-2.5f_i\bigl(1-\frac{n^*_{o_i}}{w_i}\bigr)$ from part 2. 
Thus, we gather at most 
$10.75f^*_i-\bigl(2.5-\frac{n^*_{o_i}}{w_i}\bigr)f_i\leq 10.75f^*_i-f_i$.
If $\sg(o_i)\in Z\sm A$ and $s_i\in B$, then note that actually $\sg(o_i)\in Z\sm C$. 
We gather $2.25(f^*_i-f_i)$ from part 1, and 
$2.5\bigl[f^*_i-f_i+\frac{n^*_{\cent(s_i)}}{w_i}(f_i+f^*_i)\bigr]$ from part 2, so the total
contribution is at most $8.5f^*_i-f_i$.  

We now bound the $\{c^*_j, c_j\}$-contribution.
Part of this is 
$\sum_{j\in D(s'_0)} 4.5c^*_j+\sum_{j\in D((S'\cap S_3)\cup H(Z\cap S))} 5c^*_j$.
We proceed to analyze the remaining contribution.
The clients whose remaining contribution is non-zero are of two types:
(i) clients in $D^*(Z\cap O'')\cup D(Z\cap S_3)$, which are reassigned when a location in
$O''$ is swapped in or a location in $S''$ is swapped out; and
(ii) clients in $D^*(o_i)$, where $o_i\in Z\cap O$, which are charged when we
bound the cost $w_i\cdot c(o_i,\sg(o_i))$ 
or when $s_i\in B$ is swapped out.
Consider a client $j\in D^*(o'_q)$. Let $i=\hi_q$. Its (remaining) part-1 contribution is 
$1.5(c_j+c^*_j)$ if $\sg(o_i)\in A$ or $s_i\in B$, and 0 otherwise. 
The part-2 contribution is at most $2.5(c^*_j-c_j)+2.5c^*_j$ (this happens when 
$\sg(o_i)\in A$); so the total (remaining) contribution is at most $6.5c^*_j-c_j$.  
For $j\in D(s'_q)\sm D^*(Z\cap O'')$, where $s'_q\in Z\cap S_3$, we know that
$j\notin D^*(\capt(s'_q))$, and so we have already accounted for its contribution of at
most $5c^*_j$ above.
Finally, consider $j\in D^*(o_i)$, where $o_i\notin O''$. Note that $\sg(o_i)\in S_2$. We
gather $1.5(c_j+c^*_j)$ from part 1 if $s_i\in B$ and 0 otherwise, and 
$\frac{25}{6}(c_j+c^*_j)$ from part 2 if $s_i\in B$ and $\frac{5}{3}(c_j+c^*_j)$
otherwise; so in total we gather at most $\frac{34}{6}(c_j+c^*_j)$.

Putting everything together, \eqref{cmpineq1} leads to inequality \eqref{pathineq1}.
\end{proof}

\begin{lemma} \label{pathlem2}
Let $Z\in\Pc$, $s'_0=\spath(Z)$ and $o'_r=\epath(Z)$. We have
\begin{equation}
\begin{split}
0\ \leq & \sum_{i\in\Pc_c(Z)}32f^*_i+\sum_{i\in Z}(92.5f^*_i-f_i)
+\sum_{j\in D(s'_0)} 42c^*_j+\sum_{j\in D((S'\cap S_3)\cup H(Z\cap S))} 32c^*_j \\
& +\sum_{j\in D^*(Z\cap O'')}(41c^*_j-c_j)
+\sum_{j\in D^*((Z\cap O)\sm O'')}\frac{122}{3}(c_j+c^*_j).
\end{split}
\label{pathineq2}
\end{equation}
\end{lemma}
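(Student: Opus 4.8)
The plan is to deduce Lemma~\ref{pathlem2} from Lemma~\ref{pathlem1} by cutting $Z$ at exactly the indices that Lemma~\ref{pathlem1} was forced to exclude. Call $q$ a \emph{bad} index if $|Z_q\cap\F|=1$, $s'_q\in B$ and $s'_{q+1}\in C$; then $|Z_q\cap\F|=1$ forces $o'_q=o_{i_q}$ and $\hi_q=i_q$, and the definitions of $B,C$ yield the weight relations $w_{i_q}>w_{\hi_{q-1}}$, $n^*_{o'_{q-1}}\le 1.5\,w_{i_q}$ and $n^*_{o'_q}=n^*_{o_{i_q}}\le 1.5\,w_{i_q}$. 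If $Z$ has no bad index then \eqref{pathineq1} holds and already implies \eqref{pathineq2}: the index sets of the two inequalities agree, every coefficient of an $f^*_i$- or $c^*_j$-term in \eqref{pathineq2} dominates the one in \eqref{pathineq1}, the coefficients of $-f_i$ (on $i\in Z$) and $-c_j$ (on $j\in D^*(Z\cap O'')$) are equal, and $D^*(Z\cap O'')$ and $D^*((Z\cap O)\sm O'')$ are disjoint so there is no sign cancellation. So I assume $Z$ has bad indices $q_1<\cdots<q_m$; since $s'_{q+1}\in C$ precludes $s'_{q+1}\in B$, bad indices are pairwise non-consecutive.

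Next I would cut $Z$ at the arc $(o'_{q_\ell-1},s'_{q_\ell})$ entering each bad location $s'_{q_\ell}$, obtaining node-disjoint sub-paths $Z^{(0)},\ldots,Z^{(m)}$ whose concatenation is $Z$, with $Z^{(0)}$ starting at $\spath(Z)$ and, for $\ell\ge1$, $Z^{(\ell)}$ starting at $s'_{q_\ell}=s_{i_{q_\ell}}\in B\sse S'\cap S_3$. No $Z^{(\ell)}$ has a bad index in its interior, so I would apply \eqref{pathineq1} to each $Z^{(\ell)}$ — after checking that the proof of Lemma~\ref{pathlem1} uses only structural facts a sub-path inherits: that its arcs are center arcs of $\hG$, that $\sg(\epath(Z^{(\ell)}))\notin Z^{(\ell)}$, and that all swaps used there are single swaps. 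Summing over $\ell$ and regrouping: the $\Pc_c$-terms combine into $\sum_{i\in\Pc_c(Z)}5f^*_i$; the $\sum_{i\in Z^{(\ell)}}(10.75f^*_i-f_i)$-terms combine exactly into $\sum_{i\in Z}(10.75f^*_i-f_i)$ since the $Z^{(\ell)}$ partition $Z$; the ``$D(\spath(Z^{(0)}))$''-term is $\sum_{j\in D(s'_0)}4.5c^*_j$, while for $\ell\ge1$ the ``$D(\spath(Z^{(\ell)}))=D(s'_{q_\ell})$''-clients already lie in $D\bigl((S'\cap S_3)\cup H(Z\cap S)\bigr)$; the remaining $c^*_j$- and $(c_j+c^*_j)$-terms combine term-by-term. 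The one subtle point is that for $\ell\ge1$ the path-start $s'_{q_\ell}$ of $Z^{(\ell)}$ is an $S_1$-location rather than an $S_0$-location, so I must verify Lemma~\ref{pathlem1}'s analysis does not rely on $\spath$ having no incoming center arc, and if it does, repair it at bounded extra cost using $s'_{q_\ell}\in S_3$ (hence $|\capt(s'_{q_\ell})|\le1$ and $|D^*(\capt(s'_{q_\ell}))|$ small in the weighted sense).

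What remains — and what I expect to be the main obstacle — is that the summed inequality bounds the movement cost of the bad-position facility $i_{q_\ell}$ only through the $-f_{i_{q_\ell}}$ term of \eqref{pathineq1} for $Z^{(\ell)}$, and does not yet pay for reassigning the clients $D^*(o'_{q_\ell-1})=D^*(\cent(s'_{q_\ell}))$ whose local server $s'_{q_\ell}$ sits at the cut and is closed. For each bad $q_\ell$ I would therefore add one auxiliary single-swap inequality from $\swap(s'_{q_\ell},o'_{q_\ell})$, analyzed exactly as case~3) in the proof of Lemma~\ref{pathlem1}: moving those clients to $o_{i_{q_\ell}}$ or $\sg(o_{i_{q_\ell}})$ costs at most $\tfrac{n^*_{o'_{q_\ell-1}}}{w_{i_{q_\ell}}}(f_{i_{q_\ell}}+f^*_{i_{q_\ell}})\le 1.5(f_{i_{q_\ell}}+f^*_{i_{q_\ell}})$, with no leftover $D^*$-term because $o_{i_{q_\ell}}=o'_{q_\ell}\in O''$ — and this is precisely where $s'_{q_\ell}\in B$ ($n^*_{o'_{q_\ell-1}}\le1.5w_{i_{q_\ell}}$) and $s'_{q_\ell+1}\in C$ ($n^*_{o_{i_{q_\ell}}}\le1.5w_{i_{q_\ell}}$, so the swap itself charges only $O(1/t)$ of the client cost at $o'_{q_\ell}$) get used. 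Finally I would take a nonnegative linear combination of the summed copies of \eqref{pathineq1} with these $m$ auxiliary inequalities, tuning the weights so that (i) the net coefficient of each $f_{i_{q_\ell}}$ stays $\le-1$ despite the $+1.5f_{i_{q_\ell}}$ from its auxiliary inequality, (ii) the net coefficient of every other $f_i$ and of every $c_j$ on $D^*(Z\cap O'')$ stays $\le-1$, and (iii) all $f^*_i$- and $c^*_j$-coefficients stay within the budgets $32,\,92.5,\,42,\,32,\,41,\,\tfrac{122}{3}$ of \eqref{pathineq2} — the generous constants in \eqref{pathineq2} being exactly the slack needed for this balancing. Carrying out (i)--(iii) simultaneously at every bad index is the delicate case analysis, and I would also keep track, as in Remark~\ref{polyremk}, that only finitely many single swaps of bounded weight are used.
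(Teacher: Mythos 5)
Your reduction to the case of ``bad'' indices, the observation that \eqref{pathineq1} dominates \eqref{pathineq2} term-by-term when no bad index exists, and the non-consecutiveness of bad indices are all fine. But the core of your plan --- cut $Z$ at the arcs entering the bad locations and invoke Lemma~\ref{pathlem1} on the pieces --- has a genuine gap, and it sits exactly at the seams you create. Lemma~\ref{pathlem1} is proved only for $Z\in\Pc$, and its proof cannot be transplanted to a piece $Z^{(\ell)}$ whose start is $s'_{q_\ell}\in B$: the center $o'_{q_\ell-1}=\cent(s'_{q_\ell})$ now lies in the \emph{previous} piece, so in every move of the piece's derivation that closes $s'_{q_\ell}$ (in particular the piece's analogue of $\swap(s'_0,o'_r)$, which carries weight $2.25$ in \eqref{cmpineq1}), the clients in $D(s'_{q_\ell})\cap D^*(o'_{q_\ell-1})$ have $\sg(\sg^*(j))=s'_{q_\ell}$ closed while $\sg^*(j)$ is not opened, and the $2c^*_j$ bound underlying \eqref{pshift}/\eqref{ineq1} is unjustified for them. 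Any repair either charges $(c_j+c^*_j)$ for $j\in D^*(o'_{q_\ell-1})$ --- which then forces you to open $o'_{q_\ell-1}$ to an extent \emph{exceeding} the extent to which $s'_{q_\ell}$ is closed, the very mechanism the paper builds blocks and the shifted swaps \eqref{BBineq}, \eqref{fBineq} for --- or charges roughly $\tfrac{n^*_{o'_{q_\ell-1}}}{w_{i_{q_\ell}}}(f_{i_{q_\ell}}+f^*_{i_{q_\ell}})$ per unit of closing extent, with $n^*_{o'_{q_\ell-1}}$ as large as $1.5\,w_{i_{q_\ell}}$. Tally the coefficient of $f_{i_{q_\ell}}$: the piece supplies only $-1$ (per unit weight), while you accrue the seam-repair charge at weight $\approx 2.25$, the charge $\tfrac{n^*_{o'_{q_\ell}}}{w_{i_{q_\ell}}}(f_{i_{q_\ell}}+f^*_{i_{q_\ell}})$ from \eqref{Aineq} when $s'_{q_\ell+1}\in A$ is swapped out \emph{inside the same piece}, and an additional $+\bigl(\tfrac{n^*_{o'_{q_\ell-1}}}{w_{i_{q_\ell}}}-1\bigr)f_{i_{q_\ell}}\le+0.5f_{i_{q_\ell}}$ per unit weight of your own auxiliary $\swap(s'_{q_\ell},o'_{q_\ell})$, which is just \eqref{Bineq} again. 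These harmful terms scale with the same weights as the helpful $-f_{i_{q_\ell}}$, so no choice of $\alpha,\beta$ in step (i)--(iii) can drive the net coefficient to $\le -1$; your tuning step, which is the entire content of the argument, is asserted but cannot be carried out within this move set. This is precisely the obstruction the paper records before Lemma~\ref{pathlem2} (``there is no way of combining \eqref{ineq1}--\eqref{S1ineq} to get a compound inequality having both $-f_{i_q}$ and $-\sum_{j\in D^*(o'_q)}c_j$'').

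The paper escapes it differently: it does not cut $Z$ or reuse Lemma~\ref{pathlem1}; it defines blocks $\blk_{qu}$ and a sibling relation, replaces \eqref{ineq1} by the sibling-aware \eqref{ineq2} so that the global shift does not double-charge facilities inside a block, and introduces genuinely new \emph{shifted} swaps $\swap(s'_\ell,o'_{\ell+1})$ (inequalities \eqref{BBineq}, \eqref{fBineq}) that open $o'_\ell$-locations without closing both adjacent $s'$-locations, finally balancing everything with the explicit $21/16/7/9$ combination \eqref{cmpineq2}. Your auxiliary swap pairs $s'_{q_\ell}$ with $o'_{q_\ell}$, i.e.\ it is an old-style move, and so it reintroduces rather than removes the offending $(f_{i_{q_\ell}}+f^*_{i_{q_\ell}})$ charges; without an analogue of the block machinery your proposal does not establish \eqref{pathineq2}, nor a weaker-constant version of it.
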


\begin{proof}
We focus on the case where there exists an index $q$ such that $|Z_q\cap\F|=1$,
$s'_q\in B$, and $s'_{q+1}\in A\cup C$ as otherwise, \eqref{pathineq1} 
immediately implies \eqref{pathineq2}. 
This case is significantly more involved, in part because when $s'_q\in B,\ s'_{q+1}\in A$
and $o'_q=o_{i_q}$, we accrue both 
the term $\frac{n^*_{\cent(s'_q)}}{w_{i_q}}\cdot (f_{i_q}+f^*_{i_q})$ in \eqref{Bineq}
when $s'_q$ is swapped out, and the term $\frac{n^*_{o'_q}}{w_{i_q}}\cdot (f_{i_q}+f^*_{i_q})$
in \eqref{Aineq} when $s'_{q+1}$ is swapped out. Hence, there is no way of combining
\eqref{ineq1}--\eqref{S1ineq} to get a compound inequality having both $-f_{i_q}$ and
$-\sum_{j\in D^*(o'_q)}c_j$ on the RHS. 
In order to achieve this, we define a structure called a {\em block}, comprising multiple 
$Z_q$ segments, using which we define additional moves that swap in $o'_q$ but
swap out neither $s'_q$ nor $s'_{q+1}$, so that the extent to which $o'_q$
is swapped in exceeds the extent to which $s'_q$ or $s'_{q+1}$ are swapped out. 

We call a set $\{s'_q,s'_{q+1},\ldots,s'_u\}$ of consecutive $S'$-locations a {\em block},
denoted by $\blk_{qu}$, if 
(i) $|Z_\ell\cap\F|=1$ and $s'_\ell\in B$ for all $\ell=q+1,\ldots,u$, 
(ii) $s'_{u+1}\in A\cup C$ (recall that if $s'_{u+1}$ is non-existent, then this condition
is not satisfied), and 
(iii) $|Z_q\cap\F|>1$ or $s'_q\notin B$. 
Note that by definition, any two blocks correspond to disjoint subpaths of $Z$. 
We say that $s, s'\in S\cap Z$ are {\em siblings}, denoted by $s\sib s'$, if $s, s'\in B$  
and they belong to a common block; note that this means that neither $s$ nor $s'$ is
at the start of a block. We use $s\nsib s'$ to denote that $s$ and $s'$ are not siblings.   
Let $O''=O'\cup\{o'_r\}$.

Before defining the additional swap moves for each block, we first reconsider
$\swap(s'_0,o'_r)$ and account for the change in cost due to this move differently to come
up with a slightly different inequality than \eqref{ineq1}. We again start with
\eqref{pshift}, and bound $w_i\cdot c(o_i,\sg(o_i))$ in different ways for an
$(o_i,\sg(o_i))$ edge of $Z$. We use 
$$
w_i\cdot c(o_i,\sg(o_i))\leq
\begin{cases}
%(i)
f_i+f^*_i; & 
\text{if $\sg(o_i)\notin A,\ s_i\notin B$, or $s_i\sib\sg(o_i)$} \\
%(ii) 
\al_i(f_i+f^*_i)+\sum_{j\in D^*(o_i)}\tfrac{2}{3}(c_j+c^*_j); & 
\text{if $\sg(o_i)\in A$ or $\sg(o_i)\in C\sm A,\ s_i\in B$} \\
%(iii)
\sum_{j\in D^*(o_i)}\tfrac{2}{3}(c_j+c^*_j); &
\text{if $sg(o_i)\notin C,\ s_i\in B$, and $s_i\nsib\sg(o_i)$}.
\end{cases}
$$
Incorporating this in \eqref{pshift} yields the following inequality.
\begin{equation}
\begin{split}
0\ \leq\ & (f^*_{\hi_r}-f_{\hi_r})
+\sum_{\substack{i\in Z: s_i\sib\sg(o_i) \text{ or} \\ \sg(o_i)\in Z\sm A, s_i\notin B}}  
\negthickspace\negthickspace 2f^*_i
+\sum_{\substack{i\in Z: \sg(o_i)\in A\text{ or} \\ \sg(o_i)\in C\sm A, s_i\in B}}
\negthickspace\negthickspace\bigl(2f^*_i-(1-\al_i)f_i\bigr) \\
& +\sum_{\substack{i\in Z: \sg(o_i)\in Z\sm C \\ s_i\in B, s_i\nsib\sg(o_i)}}
\negthickspace\negthickspace(f^*_i-f_i)
+ \sum_{\substack{i\in Z: \sg(o_i)\in A\text{ or} \\ s_i\in B, s_i\nsib\sg(o_i)}}
\sum_{j\in D^*(o_i)}\tfrac{2}{3}(c_j+c^*_j)+\sum_{j\in D(s'_0)}2c^*_j.
\end{split}
\label{ineq2}
\end{equation}

For every block $\blk_{qu}$, we consider the swap moves $\swap(s'_\ell,o'_{\ell+1})$ for
all $\ell=q+1,\ldots,u-1$. For each such $\ell$, we obtain the inequality 
\begin{equation}
0\leq\frac{n^*_{o'_{\ell-1}}}{w_{i_\ell}}\bigl(f_{i_\ell}+f^*_{i_\ell}\bigr)
+(f^*_{i_\ell}-f_{i_\ell})+(f^*_{i_{\ell+1}}-f_{i_{\ell+1}})
+\sum_{j\in D^*(o'_{\ell+1})}\negthickspace\negthickspace(c^*_j-c_j)
+\sum_{j\in D(s'_\ell)}\negthickspace\negthickspace2c^*_j
+\sum_{j\in D^*(o'_\ell)}\negthickspace\negthickspace\tfrac{5}{3}(c_j+c^*_j)
\label{BBineq}
\end{equation}
where we bound $w_{i_\ell}\cdot c(o'_\ell,s'_{\ell+1})$ by 
$\sum_{j\in D^*(o'_\ell)}\frac{2}{3}(c_j+c^*_j)$.
Also consider $\swap(s'_q,o'_{q+1})$ if $s'_q\in A\cup S_0\cup B$ or
$\swap(\spath(P),o'_{q+1})$ if $s'_q\in S'\sm S_3$, where $P$ is some arbitrary path in
$\Pc_c(s'_q)$. Note that if $s'_q\in B$, then $o_{i_q}\neq o'_q$. This yields the
inequality 
\begin{equation}
\begin{split}
0\ \leq\ & \bigl(f^*_{i_{q+1}}-f_{i_{q+1}}\bigr)+\sum_{i: o_i,\sg(o_i)\in Z_q\cup Z_{q+1}} 2f^*_i
+\sum_{j\in D^*(o'_{q+1})}(c^*_j-c_j) \\
& +\bon(s'_q\in A\cup S_0)
\Biggl(\sum_{j\in D(s'_q)}2c^*_j
+\frac{n^*_{o'_{q-1}}}{w_{\hi_{q-1}}}(f_{\hi_{q-1}}+f^*_{\hi_{q-1}})
+\sum_{j\in D^*(o'_{q-1})} c^*_j\Biggr) \\
& +\bon(s'_q\in B)
\Biggl(\sum_{j\in D(s'_q)}2c^*_j
+\frac{n^*_{o'_{q-1}}}{w_{i_q}}(f_{i_q}+f^*_{i_q})
+\sum_{j\in D^*(o_{i_q})}(c_j+c^*_j)\Biggr) \\
& +\bon(s'_q\in S'\sm S_3)
\Biggl(\sum_{i\in P}2f^*_i+\sum_{j\in D(\spath(P))} 2c^*_j\Biggr).
\end{split} 
\label{fBineq}
\end{equation}

We now derive \eqref{pathineq2} by taking the following linear combination of
\eqref{Aineq}--\eqref{fBineq}:  
\begin{equation}
\underbrace{21\times\eqref{ineq2}}_\text{\normalsize{part 1}}
\ + \underbrace{\sum_{\substack{q: s'_q\notin\text{ any} \\ \text{block}}} 
16\times\sub{segineq}}_\text{\normalsize{part 2}} 
+ \underbrace{\sum_{\text{blocks $\blk_{qu}$}}\sum_{\ell=q}^u
  7\times\sub[\ell]{segineq}}_\text{\normalsize{part 3}}
+ \underbrace{\sum_{\text{blocks $\blk_{qu}$}}
9\times\Bigl(\sub{fBineq}+\sum_{\ell=q+1}^u\sub[\ell]{BBineq}\Bigr)}_\text{\normalsize{part 4}}.  
\label{cmpineq2}
\end{equation}
As before, the LHS of \eqref{cmpineq2} is 0, and we analyze the contribution  
from the $f^*_i, f_i, c^*_j, c_j$ terms to the RHS. Many of the terms are similar to those 
that appear in \eqref{cmpineq1}, so we state these without much elaboration.

Facilities $i\notin Z$ contribute at most $\sum_{i\in\Pc_c(Z)}32f^*_i$.
Consider $i\in Z$. If $o_i=o'_r$, then note that $s'_r$ does not belong to a block, and we
gather at most $(21+16)(f^*_i-f_i)+16\cdot 1.5(f_i+f^*_i)\leq 61f^*_i-13f_i$ from
parts 1 and 2.
Suppose $\sg(o_i)\in Z$. If $\sg(o_i)\in Z\sm A,\ s_i\notin B$, we gather $42f^*_i$ from
part 1. Let $s_i\in Z_q$. If $s'_q$ does not belong to any block then we pick up
$16(f^*_i-f_i)$ from part 2. Otherwise, note that $s'_q$ must be the start of a block, 
therefore, we pick up $7(f^*_i-f_i)+9\cdot 2f^*_i$ from parts 3 and 4. So the overall 
contribution is at most $67f^*_i-7f_i$.
Next, suppose $s_i\sib\sg(o_i)$. Let $s_i=s'_\ell$. We pick up $42f^*_i$ from part 1, 
$7\bigl[\bigl(1+\tfrac{n^*_{o'_{\ell-1}}}{w_i}\bigr)f^*_i
-\bigl(1-\tfrac{n^*_{o'_{\ell-1}}}{w_i}\bigr)f_i\bigr]$ from part 3, and
$9\bigl[\bigl(2+\frac{n^*_{o'_{\ell-1}}}{w_i}\bigr)f^*_i
-\bigl(2-\frac{n^*_{o'_{\ell-1}}}{w_i}\bigr)f_i\bigr]$ from part 4.
This amounts to at most $91f^*_i-f_i$ total contribution.
Suppose $\sg(o_i)\in A$. We gather $21\bigl(2f^*_i-(1-\al_i)f_i\bigr)$ from part 1.
If $s_i\notin B$, we gather $16\bigl[f^*_i-f_i+\frac{n^*_{o_i}}{w_i}(f_i+f^*_i)\bigr]$
from parts 2, 3, and 4. If $s_i\in B$, then $s_i=s'_u$ is the end of some block, and we  
gather $16\cdot\frac{n^*_{o_i}}{w_i}(f_i+f^*_i)+
7\bigl[\bigl(1+\tfrac{n^*_{o'_{u-1}}}{w_i}\bigr)f^*_i
-\bigl(1-\tfrac{n^*_{o'_{u-1}}}{w_i}\bigr)f_i\bigr]+9(f^*_i-f_i)$ from parts 2, 3, and 4. 
So the total contribution in both cases is at most
$92.5f^*_i-\bigl(5.5-\frac{2n^*_{o_i}}{w_i}\bigr)f_i\leq 92.5f^*_i-2.5f_i$.
If $\sg(o_i)\in C\sm A$ and $s_i\in B$, then $s_i=s'_u$ is the end of some block. 
We gather $21(2f^*_i-(1-\al_i)f_i)$ from part 1, and 
$7\bigl[\bigl(1+\tfrac{n^*_{o'_{u-1}}}{w_i}\bigr)f^*_i
-\bigl(1-\tfrac{n^*_{o'_{u-1}}}{w_i}\bigr)f_i\bigr]+9(f^*_i-f_i)$ from parts 3 and 4,
which amounts to at most $74.5f^*_i-5.5f_i$ total contribution.
Finally, if $\sg(o_i)\in Z\sm C$, $s_i\in B$ and $s_i\nsib\sg(o_i)$, we gather
$21(f^*_i-f_i)$ from part 1. If $s_i$ does not belong to a block, we gather 
$16\bigl[f^*_i-f_i+\frac{n^*_{\cent(s_i)}}{w_i}(f_i+f^*_i)\bigr]$ from part 2. If $s_i$
belongs to a block, then notice that it can only be the start $s'_q$ of the block. 
So $\sg(o_i)\in S_2$ and we gather 
$7(f^*_i-f_i)+9\cdot 2f^*_i+16\cdot\frac{n^*_{o'_{q-1}}}{w_i}(f_i+f^*_i)$ from parts 3 and
4. Accounting for both cases, we gather at most $70f^*_i-4f_i$.

Now consider the $\{c^*_j, c_j\}$-contribution.
This includes the terms $\sum_{j\in D((S'\cap S_3)\cup H(Z\cap S))} 32c^*_j$ and
\linebreak $\sum_{j\in D(s'_0)} 42c^*_j$.
We bound the remaining contribution.
Consider a client $j\in D^*(o'_q)$. Let $i=\hi_q$. Its (remaining) part-1 contribution is
$9(c_j+c^*_j)$ if $\sg(o_i)\in A$ or $s_i\in B,\ s_i\nsib\sg(o_i)$, and 0 otherwise.
If $s'_q$ is not in any block, we pick up at most $16(c^*_j-c_j)+16c^*_j$ from part 2. 
If $s'_q$ is the start of a block, we pick up $7(c^*_j-c_j)$ from part 3, and note that
$\sg(o_i)\notin A,\ s_i\notin B$. 
If $s'_q$ is an intermediate $S'$-location of a block, then $s_i=s'_q$ and
$s_i\sib\sg(o_i)$. We pick up $7(c^*_j-c_j)$ from part 3, and
$9\bigl(\frac{8}{3}c^*_j+\frac{2}{3}c_j\bigr)$ from part 4. So the total (remaining)
contribution is at most $41c^*_j-c_j$ in all cases.
Now consider $j\in D^*(o_i)$, where $o_i\notin O''$. We have $\sg(o_i)\in S_2$. Let
$o_i\in Z_q$. We accrue $14(c_j+c^*_j)$ from part 1 if $s_i\in B$ and 0 otherwise.
If $s'_q$ does not belong to a block, we gather at most $\frac{80}{3}(c_j+c^*_j)$ from
part 2; if $s'_q$ belongs to a block, it must be the start of the block, and we gather at
most $\frac{35}{3}(c_j+c^*_j)$ from part 3. So in total, we accrue at most
$\frac{122}{3}(c_j+c^*_j)$.

Thus, \eqref{cmpineq2} leads to inequality \eqref{pathineq2}.
\end{proof}

\begin{lemma} \label{cyclem}
Let $Z\in\C$ and $r=|S'|>0$. Define $s'_q=s'_{q\bmod r},\ o'_q=o'_{q\bmod r}$ for all
$q$. Then,
\begin{equation}
\begin{split}
0\ \leq & \sum_{i\in\Pc_c(Z)}32f^*_i+\sum_{i\in Z}(92.5f^*_i-f_i)
+\sum_{j\in D((S'\cap S_3)\cup H(Z\cap S))} 32c^*_j \\
& +\sum_{j\in D^*(Z\cap O'')}(41c^*_j-c_j)
+\sum_{j\in D^*((Z\cap O)\sm O'')}\frac{122}{3}(c_j+c^*_j).
\end{split}
\label{cycineq}
\end{equation}
\end{lemma}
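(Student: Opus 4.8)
The plan is to follow the proofs of Lemmas~\ref{pathlem1} and~\ref{pathlem2} essentially line for line, the one structural change being that the ``full-path shift'' move $\swap(s'_0,o'_r)$ used there has no counterpart on a cycle and is replaced by the cycle-shift bound of Lemma~\ref{cycleswap}. Concretely, starting from $0\leq\sum_{i\in Z}\bigl(-f_i+f^*_i+w_i\cdot c(o_i,\sg(o_i))\bigr)$ and bounding each term $w_i\cdot c(o_i,\sg(o_i))$ by exactly the same case split on whether $\sg(o_i)\in A$, $\sg(o_i)\in C\sm A$ with $s_i\in B$, $\sg(o_i)\notin C$ with $s_i\in B$ and $s_i\nsib\sg(o_i)$, or $s_i\sib\sg(o_i)$, one obtains the cyclic analogues of \eqref{ineq1} and \eqref{ineq2} --- but now without the boundary term $(f^*_{\hi_r}-f_{\hi_r})$ and without the term $\sum_{j\in D(s'_0)}2c^*_j$, since a cycle has neither a start nor an end location and the indices wrap around ($o'_r=o'_0=\cent(s'_1)$, so $O''=O'$). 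Everything else carries over unchanged: the segment swaps $\swap(s'_q,o'_q)$ that produce $\sub{segineq}$ (namely \eqref{Aineq}, \eqref{Bineq}, \eqref{S1ineq}) are defined on the subpaths $Z_q$ of $Z$ exactly as before, and the block moves $\swap(s'_\ell,o'_{\ell+1})$ and $\swap(s'_q,o'_{q+1})$ that produce \eqref{BBineq} and \eqref{fBineq} make sense verbatim once blocks are defined cyclically, with any two distinct blocks still corresponding to vertex-disjoint subpaths of $Z$.

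I would then split on the dichotomy already used in Lemmas~\ref{pathlem1}--\ref{pathlem2}. If there is no index $q$ with $|Z_q\cap\F|=1$, $s'_q\in B$ and $s'_{q+1}\in C$, take $2.25$ times the cyclic analogue of \eqref{ineq1} together with $\sum_q 2.5\times\sub{segineq}$, exactly as in \eqref{cmpineq1}; the coefficient bookkeeping is identical to that in the proof of Lemma~\ref{pathlem1} with every $\spath(Z)$- and $\epath(Z)$-term simply absent, yielding the cyclic analogue of \eqref{pathineq1}, whose constants are termwise no larger than those of \eqref{cycineq}, so \eqref{cycineq} follows. Otherwise, fix an arbitrary starting facility $i_1$ on the cycle $C$ of root nodes, index $S'\cap C$ cyclically as $s'_1,\ldots,s'_r$, define blocks $\blk_{qu}$ by the three conditions of Lemma~\ref{pathlem2} reading indices $\bmod r$, and take the analogue of \eqref{cmpineq2}: $21$ times the cyclic analogue of \eqref{ineq2}, plus $\sum_{q\text{ not in any block}}16\times\sub{segineq}$, plus $\sum_{\text{blocks }\blk_{qu}}\sum_{\ell=q}^u 7\times\sub[\ell]{segineq}$, plus $\sum_{\text{blocks }\blk_{qu}}9\bigl(\sub{fBineq}+\sum_{\ell=q+1}^u\sub[\ell]{BBineq}\bigr)$. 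The accounting of the $f^*_i,f_i,c^*_j,c_j$ coefficients is then word-for-word that of the proof of Lemma~\ref{pathlem2}: facilities $i\notin Z$ contribute at most $\sum_{i\in\Pc_c(Z)}32f^*_i$; each $i\in Z$ contributes at most $92.5f^*_i-f_i$ through the sub-cases $\sg(o_i)\in A$, $s_i\sib\sg(o_i)$, $\sg(o_i)\in C\sm A$ with $s_i\in B$, $\sg(o_i)\notin C$ with $s_i\in B$ and $s_i\nsib\sg(o_i)$, and $\sg(o_i)\notin A$ with $s_i\notin B$; and clients in $D^*(Z\cap O'')$, in $D((S'\cap S_3)\cup H(Z\cap S))$, and in $D^*((Z\cap O)\sm O'')$ contribute at most $41c^*_j-c_j$, $32c^*_j$, and $\tfrac{122}{3}(c_j+c^*_j)$ respectively. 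Compared with \eqref{pathineq2}, the only terms that disappear are $\sum_{j\in D(s'_0)}42c^*_j$ and the (now non-existent) contribution of $o_i=\epath(Z)$, which gives exactly \eqref{cycineq}.

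The step I expect to be the main obstacle is the genuinely new, fully degenerate cyclic configuration in which every $S'$-location lies in $B$ and every segment $Z_\ell$ consists of a single facility. There no block exists --- condition (iii) of the block definition fails at every index, and there is no $s'_{u+1}\in A\cup C$ --- so the Lemma~\ref{pathlem2}-style argument has nothing to act on. But in precisely this situation the ``bad'' predicate of the dichotomy is vacuous: since $A\subseteq C$ and $B\cap C=\es$, no index $q$ has $s'_{q+1}\in C$, so this configuration already falls into the first case above and is handled by the cyclic analogue of Lemma~\ref{pathlem1}. More generally, I would verify that whenever the bad predicate holds the blocks, read cyclically, are honest maximal runs whose start is never an interior $B$-member, and that distinct blocks remain vertex-disjoint and none of them straddles $i_1$; both facts follow from the maximality in the block definition together with the observation that $s\sib s'$ forces $s,s'\in B$ and makes neither of them the start of a block, so the disjointness reasoning used in the proof of Lemma~\ref{pathlem2} goes through unchanged. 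Once these structural facts are in place, what remains is the same lengthy but entirely routine coefficient arithmetic carried out for Lemma~\ref{pathlem2}, with only the two boundary terms deleted.
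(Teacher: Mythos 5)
Your proposal is correct and follows essentially the same route as the paper: the paper's proof likewise replaces \eqref{ineq1}/\eqref{ineq2} by cyclic analogues (its inequalities \eqref{cineq1} and \eqref{cineq2}) obtained from $0\leq\sum_{i\in Z}\bigl(-f_i+f^*_i+w_i\cdot c(o_i,\sg(o_i))\bigr)$ with the same case-wise bounds on $w_i\cdot c(o_i,\sg(o_i))$, and then reruns the weighted combinations of Lemmas~\ref{pathlem1} and~\ref{pathlem2} verbatim. Your explicit check of the all-$B$, all-singleton-segment cycle (which falls into the first case since $B_Z\cap C_Z=\es$) is a detail the paper leaves to the reader but is fully consistent with its argument.
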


\begin{proof}
The arguments are almost identical to those in the proofs of Lemmas~\ref{pathlem1}
and~\ref{pathlem2}. The only change is that we no longer have inequality \eqref{ineq1} or
\eqref{ineq2}. Instead, we start with the inequality 
$0\leq\sum_{i\in Z}\bigl(-f_i+f^*_i+w_i\cdot c(o_i,\sg(o_i))\bigr)$, and we bound
$w_i\cdot c(o_i,\sg(o_i))$ suitably, as in Lemma~\ref{pathlem1} or
Lemma~\ref{pathlem2}, to obtain \eqref{cineq1} and \eqref{cineq2} that are analogous 
to \eqref{ineq1} and \eqref{ineq2} respectively.
\begin{align}
& \begin{split}
0\ \leq\ & 
\sum_{\substack{i\in Z: \sg(o_i)\in Z\sm A \\ s_i\notin B}} \negthickspace\negthickspace 2f^*_i
+\sum_{\substack{i\in Z: \sg(o_i)\in A\text{ or} \\ \sg(o_i)\in C\sm A, s_i\in B}}
\negthickspace\negthickspace\bigl(2f^*_i-(1-\al_i)f_i\bigr)
+\sum_{\substack{i\in Z: \sg(o_i)\in Z\sm C \\ s_i\in B}}\negthickspace\negthickspace(f^*_i-f_i) \\
& +\sum_{\substack{i\in Z: \sg(o_i)\in A \\ \text{or }s_i\in B}}\sum_{j\in D^*(o_i)}
\negthickspace\negthickspace \tfrac{2}{3}(c_j+c^*_j). 
\end{split}
\label{cineq1} \\[1ex]
& \begin{split}
0\ \leq\ & 
\sum_{\substack{i\in Z: s_i\sib\sg(o_i) \text{ or} \\ \sg(o_i)\in Z\sm A, s_i\notin B}}  
\negthickspace\negthickspace 2f^*_i
+\sum_{\substack{i\in Z: \sg(o_i)\in A\text{ or} \\ \sg(o_i)\in C\sm A, s_i\in B}}
\negthickspace\negthickspace\bigl(2f^*_i-(1-\al_i)f_i\bigr)
+\sum_{\substack{i\in Z: \sg(o_i)\in Z\sm C \\ s_i\in B, s_i\nsib\sg(o_i)}}
\negthickspace\negthickspace(f^*_i-f_i) \\
& +\sum_{\substack{i\in Z: \sg(o_i)\in A\text{ or} \\ s_i\in B, s_i\nsib\sg(o_i)}}
\sum_{j\in D^*(o_i)}\tfrac{2}{3}(c_j+c^*_j).
\end{split}
\label{cineq2}
\end{align}
The rest of the proof proceeds as in Lemmas~\ref{pathlem1} and~\ref{pathlem2}: we
substitute \eqref{cineq1} for \eqref{ineq1} in the proof of Lemma~\ref{pathlem1}, and
substitute \eqref{cineq2} for \eqref{ineq2} in the proof of Lemma~\ref{pathlem2}. It is   
not hard to see then that we obtain inequality \eqref{cycineq}.
\end{proof}

\begin{theorem} \label{1swthm}
The cost of a local optimum using 1-swaps is at most $O(1)$ times the optimum solution
cost. 
\end{theorem}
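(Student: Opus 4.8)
The plan is to take a single nonnegative constant-weighted combination of the inequalities already established in Lemmas~\ref{1sws2lem}--\ref{cyclem}, producing a compound inequality of the form $0\leq O(1)\cdot F^*+O(1)\cdot C^*-(F+C)$, where $F$ and $C$ denote respectively the movement and assignment cost of the local optimum. Since $F^*+C^*=\OPT$, this immediately gives the $O(1)$-approximation (we make no attempt to optimize the resulting constant).

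First I would fix a local optimum $S$, build the digraph $\hG$ and its decomposition into $\Pc\cup\C$, and assemble the following inequalities: \eqref{1sws2} for every $s\in S_2$; \eqref{pathineq2} for every path $Z\in\Pc$; \eqref{cycineq} for every cycle $Z\in\C$ with $S'_Z\neq\es$; and \eqref{pcineq2} for every cycle $Z\in\C$ with $S'_Z=\es$. (Recall that Lemma~\ref{pathlem2} gives \eqref{pathineq2} for \emph{every} path in $\Pc$, and that \eqref{pathineq1} of Lemma~\ref{pathlem1} is a special case of it.) I would then add these up, giving each path/cycle inequality weight $1$ and each $S_2$-inequality \eqref{1sws2} a large constant weight $K$; taking $K=125$ will do. The left-hand side of the resulting inequality is $0$, and the remainder of the argument is the coefficient bookkeeping on the right-hand side, carried out along the lines of the ``responsibility'' accounting in the proof of Theorem~\ref{5apxthm}.

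The movement bookkeeping is short. Every facility $i$ lies on a unique $Z\in\Pc\cup\C$, and the term $-f_i$ occurs only in that $Z$'s inequality (the $S_2$-inequalities have no $f$-terms), so $f_i$ ends with net coefficient $-1$; and $f^*_i$ accrues the coefficient from $\sum_{i\in Z}$ (at most $92.5$) plus that of a single $\sum_{i\in\Pc_c(Z')}$-term, because if $P\in\Pc$ is the path through $i$ then $P\in\Pc_c(Z')$ only for the $Z'$ containing $\sg(\epath(P))$, so $f^*_i$ has an absolute-constant coefficient. For the assignment terms, the structural fact to verify is that \emph{every} $o\in O$ is ``swapped in'' by exactly one of the chosen moves: if $o=\cent(s)$ with $s\in S_1$ then $o$ belongs to the $O'$-set of the path/cycle containing $s$; if $o=\cent(s)$ with $s\in S_2$ then $o$ is swapped in by the $S_2$-inequality for $s$; and if $o\in H(s)$ then necessarily $s\in S_1\sm S_3$ and $o$ is the $\epath$ of some path $P\in\Pc$, so $o$ is swapped in (playing the role of $o'_r$) by $P$'s inequality \eqref{pathineq2}. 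Hence every client $j$ with $\sg^*(j)=o$ receives a $-c_j$ term --- with coefficient $1$, or $\tfrac{K}{3}$ when $\sg(o)\in S_2$. Conversely, a positive $c_j$ term can only come from the $\tfrac23(c_j+c^*_j)$ and $\tfrac{122}{3}(c_j+c^*_j)$ summands of \eqref{pcineq2}, \eqref{pathineq2}, \eqref{cycineq}, and a case analysis over the structure of $\hG$ shows these involve only clients $j$ for which $\sg^*(j)=o$ with $\sg(o)\in S_2$ (for such $o$ one automatically has $o=\cent(\sg(o))$, since $|\capt(\sg(o))|=1$). Therefore the net coefficient of every $c_j$ is at most $\max\{-1,\ \tfrac{122}{3}-\tfrac{K}{3}\}\leq-1$, the second bound coming from the $K$-weighted $S_2$-inequality for $\sg(o)$, which supplies $-\tfrac{K}{3}c_j$. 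Finally, each $c^*_j$ participates in only $O(1)$ of the chosen inequalities --- one for each of the few distinguished locations $j$ is close to ($\sg(j)$, $\sg^*(j)$, $\spath$ of its path/cycle, the $H$-set it may fall into) --- so even after scaling the $S_2$-inequalities by the constant $K$ the total coefficient of $c^*_j$ stays an absolute constant. Assembling all of this gives $0\leq O(1)F^*+O(1)C^*-(F+C)$, i.e.\ $F+C=O(1)\cdot\OPT$. As in Remark~\ref{polyremk}, since the combination uses only a constant number of inequalities, each with $O(1)$ weight, replacing the exact local optimum by an approximate one (as needed for polynomial running time) loses only a further constant factor.

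The step I expect to be the main obstacle is precisely this last round of accounting: one must verify, case by case through the structure of $\hG$, that \emph{every} positive $c_j$ term is attributable to a client whose optimal center $\sg^*(j)$ is captured by an $S_2$-location --- so that the $K$-fold weighting of that $S_2$-inequality cancels it --- while \emph{simultaneously} confirming that this weighting does not inflate any $c^*_j$-coefficient beyond a constant. Both amount to tracking, for each facility and each client, the small family of structures in which it takes part ($\Pc_c(\cdot)$-paths, $O''$-sets, $D(\spath(\cdot))$-sets, $\cent(\cdot)$-edges, and, for the block construction underlying Lemma~\ref{pathlem2}, the sibling relation $\sib$), and it is delicate mainly because it must remain compatible with the substantial casework already folded into Lemmas~\ref{pathlem1}--\ref{cyclem}.
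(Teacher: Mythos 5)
Your proposal is correct and follows essentially the same route as the paper's own proof: the paper likewise sums \eqref{1sws2} over $S_2$ (with effective weight $125$ per inequality, via $\tfrac{125}{3}\times$ the $3$-scaled sum) together with \eqref{pathineq2} for paths and \eqref{pcineq2}/\eqref{cycineq} for cycles at weight $1$, and does exactly the bookkeeping you describe (net $-f_i$, $f^*_i$ coefficient $92.5+32=124.5$, each $o\in O\sm\cent(S_2)$ in exactly one $O''$, and the $\tfrac{122}{3}c_j$ terms cancelled down to $-c_j$ by the weighted $S_2$-inequalities). The only difference is presentational, so nothing further is needed.
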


\begin{proof}
Let $F$ and $C$ denote respectively the total movement- and assignment- cost of the local
optimum.
For a set $A\sse S$, let $\cent(A)$ denote $\bigcup_{s\in A}\{\cent(s)\}$. 
Summing \eqref{1sws2} for all $s\in S_2$ and simplifying, we obtain that
\begin{equation}
\sum_{j\in D^*(\cent(S_2))}c_j \leq \sum_{j\in D^*(\cent(S_2))}5c^*_j+\sum_{j\in D(S_2)}6c^*_j. 
\label{thmineq1}
\end{equation}
Summing \eqref{pathineq2} for every path $Z\in\Pc$, and \eqref{pcineq1} or \eqref{cycineq}
for every cycle $Z\in\C$, yields the following.
\begin{equation}
\begin{split}
F+\sum_{j\in D^*(O\sm\cent(S_2))}\negthickspace\negthickspace c_j\ \leq\
& 124.5\cdot F^*+\sum_{j\in D(S_0)}74c^*_j+\sum_{j\in D(S_1\cap S_3)}32c^*_j
+\sum_{j\in D^*(O\sm\cent(S_2))}41c^*_j \\
& +\sum_{j\in D^*(\cent(S_2))}\tfrac{122}{3}(c_j+c^*_j).
\end{split}
\label{thmineq2}
\end{equation}
Adding $\frac{125}{3}\times\eqref{thmineq1}$ to \eqref{thmineq2}, we get
that 
\begin{equation*}
\begin{split}
F+C & \leq 124.5\cdot F^*+\sum_{j\in D(S_0)}\negthickspace\negthickspace 74c^*_j
+\sum_{j\in D(S_1\cap S_3)}\negthickspace\negthickspace\negthickspace 32c^*_j
+\sum_{j\in D(S_2)}\negthickspace\negthickspace 250c^*_j
+\sum_{j\in D^*(O\sm\cent(S_2))}\negthickspace\negthickspace\negthickspace\negthickspace\negthickspace\negthickspace\negthickspace
 41c^*_j
+\sum_{j\in D^*(\cent(S_2))}\negthickspace\negthickspace\negthickspace\negthickspace\negthickspace\negthickspace 
249c^*_j \\
& \leq 124.5\cdot F^*+499\cdot C^*.
\end{split}
\end{equation*}
\end{proof}

\section{Bad locality gap with arbitrary facility-movement costs} \label{locgap} 
In this section, we present an example that shows that if the facility-movement costs and
the client-assignment costs come from different (unrelated) metrics 
then the $\lz$-swap local-search algorithm has an unbounded locality gap;
that is, the cost of a local optimum may be arbitrarily large compared to optimal cost. 

We first show a simple example for a single swap case, which we will later generalize for
$\lz$ swaps. Suppose we have two clients $j_0, j_1$ and two facilities 
$i_0, i_1$. Some distances between these clients and facilities are shown in the
Fig.~\ref{fig:locgap}(a); all other distances are obtained by taking the metric
completion. Note that in this example, in order to have a bounded movement 
cost for facilities, the only option is to have one of $i_0, j_1$ as a final location of
facility $i_0$ and one of $i_1, j_0$ as a final location of facility $i_1$. 

As can be seen from the figure, the solution $O=\{i_0,i_1\}$ has total cost $2$ (the
movement cost is $0$ and the client-assignment cost is $2$). Now consider the solution   
$S=\{j_0, j_1\}$ which has a total cost of $2D$ (the movement cost is $2D$ and the
client-assignment cost is $0$). This is a local optimum since if we swap out $j_0$, then
we have to swap in $i_1$ to have a bounded movement cost, which leads $j_0$ having
assignment cost of $\infty$. 
By symmetry, there is no improving move for solution $S$, and the locality gap is $D$. 

\begin{figure}[ht!]
\centerline{\resizebox{!}{2.5in}{\input{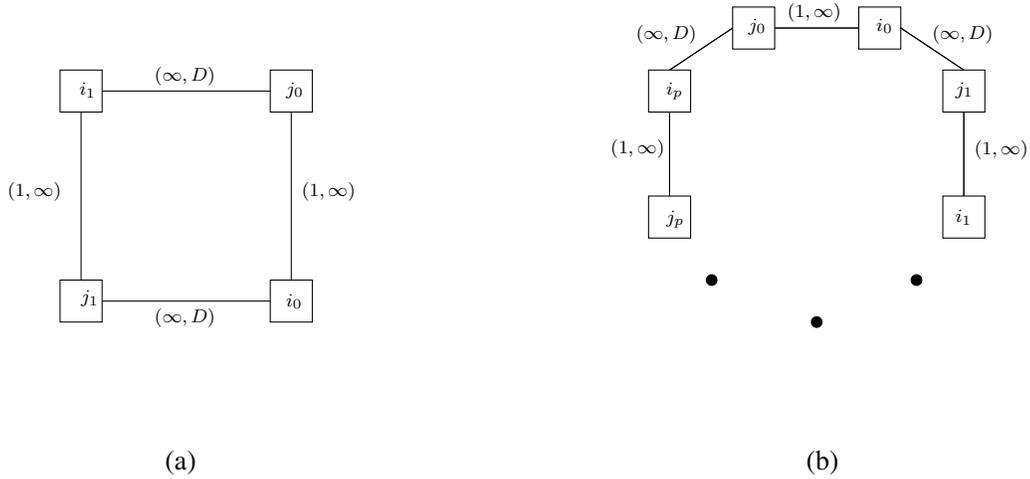}}}
\caption{Examples showing large locality gap for the cases where local search allows (a)
  single swaps (b) at most $\lz$ simultaneous swaps. The label $(a,b)$ of an edge gives
  client-assignment cost $a$ and the movement cost $b$ of a facility along that edge.}   
\label{fig:locgap}
\end{figure}

Now consider the example shown in Fig.~\ref{fig:locgap}(b) for local-search with $\lz$
simultaneous swaps. Suppose we have facility set $\{i_0,i_1,\ldots,i_{p}\}$ and client set 
$\{j_0,j_1,\ldots,j_{p}\}$. The global optimum $O=\{i_0,i_1,\cdots,i_{p}\}$ has 
total cost $p+1$ (facility movement cost is $0$ and client-assignment cost is
$(p+1)\cdot 1$) while $S=\{j_0,j_1,\ldots,j_{p}\}$ is a local optimum whose total cost is
$(p+1)\cdot D$ (facility movement cost is $(p+1)\cdot D$ and client-assignment cost is
$0$). Consider any move $swap(X,Y)$. Note that $j_k\in X$ iff $i_{k-1}\in Y$
(where indices are $\ \bmod (p+1)$) to ensure bounded movement cost.
Let $k$ be such that $j_k\in X$ and $j_{k+1}\notin X$. Then, $j_k$ has an assignment cost
of $\infty$ in the solution $(S\sm X)\cup Y$. Hence, $S$ is a local optimum.

\end{document}